\documentclass[11pt]{article}

\usepackage{graphicx} 
\usepackage{amsmath, amsfonts, amssymb, amsbsy, amsthm, mathtools} 
\usepackage{bbm, bm} 
\usepackage{xcolor} 
\usepackage{natbib} 
\usepackage{url} 
\usepackage{lmodern} 
\usepackage[T1]{fontenc}
\usepackage{mathrsfs} 
\usepackage{tikz} 
\usetikzlibrary{positioning} 

\usepackage[a4paper]{geometry} 
\usepackage{a4wide} 

\usepackage{booktabs} 
\usepackage{subcaption}
\usepackage{float} 
\usepackage{subdepth} 
\usepackage{microtype} 
\usepackage{array}
\usepackage{multirow}
\usepackage{enumitem}
\usepackage{algpseudocode}
\usepackage{algorithm} 
\usepackage{float}

\newcommand{\cbr}[1]{\left\{ {#1} \right\}}
\newcommand{\rbr}[1]{\left( {#1} \right)}
\newcommand{\sbr}[1]{\left[ {#1} \right]}

\newtheorem{thm}{Theorem}[section]

\newtheorem{lem}[thm]{Lemma}

\newtheorem{prop}[thm]{Proposition}
\theoremstyle{definition}
\newtheorem{defn}[thm]{Definition}
\newtheorem{rem}[thm]{Remark}

\newcommand{\RR}{\mathbb{R}}
\newcommand{\vc}[1]{\bm{#1}}
\newcommand{\siid}{\overset{\textnormal{iid}}{\sim}}
\newcommand{\diff}{\, \textnormal{d}}
\newcommand{\GP}{\textnormal{GP}}
\newcommand{\RV}{\textnormal{RV}}
\newcommand{\Exp}{\textnormal{Exp}}

\newcommand{\PP}{\mathbb{P}}
\newcommand{\EE}{\mathbb{E}}

\newcommand{\vZ}{\vc{Z}}
\newcommand{\vY}{\vc{Y}}

\newcommand{\vthe}{\vc{\theta}}

\newcommand{\vS}{\vc{S}}
\newcommand{\vV}{\vc{V}}
\newcommand{\vX}{\vc{X}}
\newcommand{\vz}{\vc{z}}
\newcommand{\vx}{\vc{x}}

\newcommand{\1}{\mathbbm{1}}

\newcommand{\indep}{\perp\!\!\!\!\perp}

\newcounter{commentcounter}

\newcounter{mircocounter}

\newcounter{questioncounter}

\allowdisplaybreaks 
\numberwithin{equation}{section} 
\graphicspath{Mirco/figures/} 

\usepackage{hyperref}

\begin{document}

\title{A sub-asymptotic model for bivariate threshold exceedances}
\author{Mirco Lescart\thanks{LIDAM/ISBA, UCLouvain, Voie du Roman Pays 20, 1348 Louvain-la-Neuve, Belgium. E-mails: mirco.lescart@uclouvain.be, anna.kiriliouk@uclouvain.be}
\and Anna Kiriliouk\footnotemark[1]
\and Philippe Naveau\thanks{Laboratoire des Sciences du Climat et de l’Environnement (IPSL, CNRS, CEA, UVSQ). E-mail: naveau@lsce.ipsl.fr}}

\maketitle

\begin{abstract}
Extreme value theory offers a statistical framework for quantifying the risk of rare events, with the generalized Pareto (GP) distribution providing the canonical limit model for univariate threshold exceedances. In many applications, however, extremes are intrinsically multivariate, requiring models that capture both marginal tail behaviours and joint extremal dependencies. 
Under asymptotic dependence, the multivariate GP distribution represents a suitable modelling family, but when asymptotic independence arises, sub-asymptotic models are needed.
In this work, we propose and study a flexible sub-asymptotic parametric class  to model  bivariate threshold exceedances.
Our new model accommodates a broad range of tail dependence behaviours  and contains the standardised multivariate GP distribution as a limiting case while retaining margins that converge to univariate GP tails. Our formulation allows extremal dependence to evolve naturally with the marginal parameters on the original data scale, facilitating direct computation and interpretation of failure probabilities. Model inference is done via a likelihood-free neural Bayes estimation approach, with tailored prior specifications. An extensive simulation study and an application to Belgian rainfall extremes illustrate the estimation framework and the flexibility of the model.
\smallskip

\textbf{Key words:} bivariate extremes, peaks-over-thresholds, multivariate generalized Pareto distribution, asymptotic independence, neural Bayes, rainfall
\end{abstract}
	

    \section{Introduction}

\emph{Extreme value theory (EVT)} provides a mathematical framework to statistically model the probability distributions of extreme (rare) events \citep{beirlant_statistics_2004}. A key model within EVT is the univariate \emph{generalized Pareto (GP)} distribution, which characterizes exceedances over a high threshold through a known and explicit parametric form. 
Under mild conditions, any non-degenerate limit distribution of (suitably standardised) threshold exceedances must be GP, just as the Gaussian distribution arises as the universal limit of (suitably standardised) sample means. The GP distribution underpins the peaks-over-threshold approach \citep{davison1990models}, widely applied in hydrology, finance, and engineering to quantify risks associated with rare events such as floods, financial crashes, or structural failures \citep{embrechts2013modelling,bousquet2021extreme}.

    In practice, many extreme phenomena are inherently multivariate: flooding may result from concurrent rainfall across several catchments, financial crises arise from assets crashing jointly, and infrastructure failures may stem from multiple stresses. Such examples highlight the need for models that capture both the marginal tail behaviour of individual components and their joint extremal dependence. 
    
    To extend the univariate GP model, the family of \emph{multivariate generalized Pareto (MGP)} distributions was proposed as the limiting law of (suitably standardised) multivariate threshold exceedances \citep{rootzen2006, rootzen_multivariate_2018-1, kiriliouk_peaks_2019}. In this setting, an exceedance is a data point for which \emph{at least one} of the components lies above a high threshold; it is common to say that the MGP distribution is supported on an \emph{L-shaped region} (see Figure~\ref{fig:subasymp_datasets}). Like its univariate counterpart, the MGP distribution is \emph{threshold-stable}, 
    in the sense that exceedances above higher thresholds remain MGP, preserving the marginal tail behaviour and extremal dependence structure. For \emph{asymptotically dependent (AD)} variables, for which extremes tend to occur jointly, threshold-stability is an additional restriction. Under \emph{asymptotic independence (AI)}, threshold-stability cannot arise.
    In practice, such cases are frequent \cite[see, e.g.][]{huser2025modeling} and a more refined knowledge of dependence   is needed.
 To illustrate this, Figure~\ref{fig:subasymp_datasets} displays three scatterplots of simulated bivariate AI samples.
 Visually, the dependencies among extremes seem to change from weak (left panel), to moderate (centre panel), and finally  strong (right panel). 
 However, by construction, all three samples exhibit asymptotic independence with the same level of \emph{residual} tail dependence. 
 To move beyond describing extremal dependence solely through asymptotic limit behaviour, a recent strategy is to also model intermediate to high quantiles. This approach allows for flexible tail decay and avoids the need to commit a priori to either AI or AD.
 
    
 Recent contributions to the development of sub-asymptotic models include unified frameworks for AI and AD modelling of bivariate threshold exceedances \citep{wadsworth2017modelling, engelke_extremal_2019}, as well as spatial constructions that accommodate distance-varying dependence regimes \citep{huser_bridging_2017, krupskii2018factor, huser_modeling_2019, morris2017space, hazra2024efficient, shi2024spatial}. Conditional extreme-value models \citep{heffernan2004conditional, wadsworth2022higher} offer a complementary approach, conditioning on a single variable exceeding a high threshold to describe joint tail behaviour. 
\citet{bacro2025multivariate} also proposed a  framework for constructing multivariate distributions with GP margins, exploiting the representation of a univariate GP variable as a ratio of exponential and gamma random variables. While their approach provides a tractable link between marginal and joint tail behaviour and can reach both AI and AD in the limit, its ability to model intermediate to strong AD is limited (see Section~\ref{sec:GMGP}). Also, sub-asymptotic models should ideally contain commonly used asymptotic models as a boundary or limiting case \citep[see also][]{zhang2025}, ensuring coherence with asymptotic theory.

In this work, we introduce a flexible sub-asymptotic bivariate model that extends the multivariate GP representation through sums of exponential and gamma components. 
Our approach specifically targets the L-shaped region, enabling a coherent description of both moderate and extreme co-occurrences, and converges to the (standardised) MGP distribution as a limiting case. 
Although our model shares some aspects with \citet{bacro2025multivariate}, it provides a much wider range of extremal dependence.  


Unlike many existing approaches, our framework does not require transforming margins to a common scale prior to modelling dependence. While probability integral transforms may be convenient in iid settings, they typically require explicit modelling of covariate-dependent marginal distributions in non-stationary applications, which can complicate interpretation and lead to uncertainty propagation \citep{kakampakou2024spatial}. Working on the original scale allows different variables to retain distinct Pareto tails, as illustrated in our analysis of extreme precipitation over Belgium (Section~\ref{sec:application}). In addition, it facilitates the interpretation and computation of failure probabilities (i.e., probabilities of the data falling in a specified extreme set) in terms of the observed variables \citep{kiriliouk2022estimating}.
 

 Concerning the estimation of our model  parameters, we avoid    
    using classical methods such as censored likelihood \citep{smith1997markov, wadsworth2014efficient,bacro2025multivariate}, or weighted least squares \citep{einmahl2012m, einmahl2018continuous}. Instead, we leverage  the parameter parsimony  and the simplicity of simulating random draws  from our model. This allows us  to 
    adapt    neural Bayes estimation techniques  \citep{sainsbury-dale_likelihood-free_2024, richards_neural_2024, andre_neural_2025}. 
    This approach has become  popular for models with intractable likelihood; see also \citet{rodder2025theoretical} for a characterization of its consistency and efficiency.
    
    The remainder of the paper is organized as follows. Section~\ref{sec:background} reviews univariate and bivariate GP models, asymptotic tail dependence coefficients and the main ingredients of the model of  \cite{bacro2025multivariate}. The sub-asymptotic bivariate GP distribution is introduced in Section~\ref{sec:ourmodel}, where we derive   marginal distributions, tail dependence properties and sub-asymptotic features of the model. Section~\ref{sec:estimation} describes the estimation procedure and Section~\ref{sec:simstudy} contains a simulation study. An  application to rainfall data in Belgium is provided in Section~\ref{sec:application}, and Section~\ref{sec:discussion} concludes. Finally, the main proofs are detailed in Section~\ref{app:proofs}. The Supplementary Material contains proofs of the lemmas stated in Section~\ref{app:proofs} and additional numerical results for the Belgian rainfall application.

      Throughout, bold symbols will refer to multivariate quantities,  $\vc{1}$ denotes a vector of ones (of a dimension clear from the context), and mathematical operations on vectors such as addition, multiplication and comparison are considered component-wise.

  

    \section{Background} \label{sec:background}


    \subsection{Univariate peaks-over-thresholds modelling}

    The univariate \emph{peaks-over-thresholds} approach \citep{davison1990models} provides the theoretical and practical foundation for modelling threshold exceedances via a \emph{generalized Pareto (GP)} distribution. 
    First, recall that the survival function of the GP distribution is given by
    \begin{equation*}
    	\overline{H}_{\xi, \sigma}(z)
    	= \begin{dcases}
    		\left(1+\frac{\xi z}{\sigma} \right)^{-1/\xi}, & \text{if } \xi \neq 0, \\
    		e^{-z/\sigma}, & \text{if } \xi=0,
    	\end{dcases}
    	\qquad z \geq 0, \,  1 + \frac{\xi z}{\sigma} > 0,
    \end{equation*}
    where $\xi \in \RR$ is the shape parameter or \emph{tail index} and $\sigma > 0$ is the scale parameter. If a random variable $Z$ has cdf $H_{\xi, \sigma}$, we also write $Z \sim \GP(\xi,\sigma)$. For $\xi > 0$, the survival function $\overline{H}_{\xi, \sigma}$ is regularly varying tail of order $1/ \xi$ (see Section~\ref{sec:marginal_proof} for a definition).
    
    Let $X$ be a random variable with cdf $F$ with upper endpoint $x^* = \infty$. 
    Suppose that there exists an auxiliary function $c$ such that, for any $x > 0$,
    \begin{equation}\label{eq:DA}
    	\lim_{u \rightarrow \infty} \PP \sbr{\frac{X - u}{c(u)} > x \mid X > u} = (1 + \xi x)^{-1/\xi}.
    \end{equation}
    Assumption \eqref{eq:DA} is equivalent to the fact that $F$ is in the max-domain of attraction of a generalized extreme-value distribution \citep{pickands1975statistical,balkema1974residual}. 
    Interpreting $c(u)$ as a scale parameter $\sigma > 0$, we get, for $x > u$ and $u$ sufficiently large, $\PP[X > x] \approx  \PP[X > u] \overline{H}_{\xi, \sigma}(x-u)$, which can be used to model the tail of $X$.
    
 The GP distribution is \emph{threshold-stable}: for any positive scalar $v$, the conditional random variable $Z - v \mid Z > v$ is again GP distributed with the same shape parameter $\xi$ and with scale $\sigma + \xi v$. Finally, it admits a convenient representation as the ratio of a standard exponential random variable over an independent gamma-distributed random variable.

    \begin{lem}[Exponential-gamma representation of the GP distribution] \label{lem:mixture_GPD}
    Any GP distributed random variable with a positive shape parameter $\xi$
    can be written as the ratio of two independent random variables, 
    a standard exponential random variable $E$ and a gamma random variable $G$, with shape parameter $\alpha$ and rate parameter $\beta$, i.e., with density
    $f_G(g) = \frac{\beta^\alpha}{\Gamma(\alpha)} g^{\alpha-1} e^{-\beta g}$ for $g>0$,
    $$
    \frac{E}{G} \sim  \GP(\xi= 1/\alpha, \sigma=\beta/\alpha).
    $$
    \end{lem}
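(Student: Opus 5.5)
Since the statement is about the law of $E/G$, I will identify it through its survival function, conditioning on $G$ and using independence. For $z \ge 0$ and fixed $g>0$ we have $\PP[E/G > z \mid G=g] = \PP[E > zg] = e^{-zg}$. Hence
\begin{equation*}
\PP\sbr{\frac{E}{G} > z} = \EE\sbr{e^{-zG}} = \int_0^\infty e^{-zg}\,\frac{\beta^\alpha}{\Gamma(\alpha)} g^{\alpha-1} e^{-\beta g} \diff g .
\end{equation*}
This is the Laplace transform of the gamma density evaluated at $z$. Substituting $t=(\beta+z)g$ reduces the integral to $\Gamma(\alpha)$, which gives
\begin{equation*}
\PP\sbr{\frac{E}{G} > z} = \left(\frac{\beta}{\beta+z}\right)^{\alpha} = \left(1 + \frac{z}{\beta}\right)^{-\alpha}.
\end{equation*}

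Next I match this with $\overline{H}_{\xi,\sigma}$ for $\xi>0$. Writing $(1+z/\beta)^{-\alpha} = (1 + \xi z/\sigma)^{-1/\xi}$ forces $\xi = 1/\alpha$ and $\xi/\sigma = 1/\beta$, that is, $\sigma = \beta\xi = \beta/\alpha$. This is exactly the claimed parametrisation $\GP(1/\alpha, \beta/\alpha)$. Because $\xi>0$, the support is $z\ge 0$ and the condition $1+\xi z/\sigma>0$ holds automatically, so the supports agree as well.

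The word ``any'' in the statement requires the converse direction. Given $\xi>0$ and $\sigma>0$, I set $\alpha = 1/\xi>0$ and $\beta = \sigma/\xi>0$. These are admissible gamma parameters, and by the computation above $E/G \sim \GP(\xi,\sigma)$. Since distributions on $[0,\infty)$ are determined by their survival functions, every $\GP(\xi,\sigma)$ variable with $\xi>0$ has the same law as such a ratio. The representation therefore holds in distribution, which is the sense intended here.

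The argument has no real obstacle. The only step needing care is the evaluation of the gamma integral, together with keeping the rate parametrisation consistent so that $\sigma$ comes out as $\beta/\alpha$ rather than $\alpha\beta$ or $1/\beta$.
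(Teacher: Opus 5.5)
Your proof is correct: conditioning on $G$ turns the survival function of $E/G$ into the gamma Laplace transform $(1+z/\beta)^{-\alpha}$, which equals $\overline{H}_{1/\alpha,\beta/\alpha}(z)$, and the inversion $\alpha=1/\xi$, $\beta=\sigma/\xi$ covers the ``any'' direction as an equality in distribution. The paper does not prove the lemma itself but cites Reiss and Thomas (1997, p.~157), where the argument is this same gamma-mixture-of-exponentials computation, so your route is essentially the standard one.
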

   \noindent For a proof of Lemma~\ref{lem:mixture_GPD}, see, for example, \citet[page 157]{reiss1997statistical}. The shape parameter of the gamma distribution being positive, we cannot obtain GP distributions with negative tail indices from the representation in Lemma~\ref{lem:mixture_GPD}. In the remainder of this work, we focus on $\xi \geq 0$. Note that $\xi = 0$ is obtained by taking $\beta = \alpha c$ for some $c > 0$ and $\alpha \to \infty$, in which case $G$ converges to $1/c$ in probability (see the proof of Lemma~\ref{lem:mgpd_limit}), and $E/G$ is exponential with scale $c$.

    \subsection{Bivariate peaks-over-thresholds modelling}

  We recall some convenient (asymptotic) summary measures of extremal dependence before introducing the model in \citet{bacro2025multivariate} and our bivariate sub-asymptotic GP distribution.

    \subsubsection{Summary measures of extremes}\label{sec:summary_measures}
    
    Let $\vX=(X_1,X_2)^T$ be a bivariate random vector with joint cdf $F$ and continuous marginal distributions $F_1,F_2$. A measure of the upper tail dependence between $X_1$ and $X_2$ is
    \begin{equation}\label{eq:chi_curve}
    \chi(q) = \frac{\PP\left(F_1(X_1)>q,\;F_2(X_2)>q\right)}{1-q}, \qquad q \in [0,1).
    \end{equation}
    The \emph{(upper) tail dependence coefficient} $\chi = \lim_{q \uparrow 1} \chi(q)$ then quantifies to what extent the components of $\vX$ can be extreme simultaneously; we say that $X_1$ and $X_2$ exhibit \emph{asymptotic dependence (AD)} when $\chi>0$, and \emph{asymptotic independence (AI)} when $\chi=0$, see \citet{coles_dependence_1999}. The function $\chi(q)$ can be seen as a sub-asymptotic measure of tail dependence.
    
    When $\chi$ equals zero, the decay of joint extremes can be described by the \emph{residual tail dependence coefficient} $\eta = \lim_{q \uparrow 1} \eta(q)$, where
    \begin{equation*}
    \eta(q) = \frac{\log(1-q)}{\log \PP\left(F_1(X_1)>q,\,F_2(X_2)>q\right)}, \qquad q \in [0,1).
    \end{equation*}
    The residual tail dependence coefficient is derived from the \citet{ledford_statistics_1996} model assumption that
    \[
    \PP\left(F_1(X_1) >q,\,F_2(X_2)>q\right) = \ell(1-q) (1-q)^{1/\eta},
    \]
    where $\ell$ denotes a slowly varying function at zero, i.e.,
$\lim_{t \downarrow 0} \ell(ct)/\ell(t) = 1$ for all $c>0$. A value $\eta <1/2$ indicates negative residual dependence, meaning that the joint tail decays faster than under independence; $\eta =1/2$ means there is no residual tail dependence.  Values in the interval $(1/2,1)$ indicate  positive residual dependence, with joint tails decaying more slowly than in the independent case. Finally, the boundary case $\eta =1$ is the only one compatible with asymptotic dependence, where the probability of simultaneous extremes remains non-zero in the limit.
The coefficient $\eta(q)$ quantifies the amount of sub-asymptotic residual tail dependence.

    Evaluated for a full range of $q \in [0,1)$, the coefficients $\chi (q)$ and $\eta (q)$ are particularly informative when convergence to the asymptotic regime is slow.
They are related via
\begin{equation}\label{eq:chieta}
\eta (q)=\frac{\log(1-q)}{\log(1-q)+ \log \chi (q)}.
\end{equation}
Sometimes, we will add a subscript, writing $\chi_{\vX}$ or $\eta_{\vX}$, when confusion may arise. 
    
    \subsubsection{Bivariate generalized Pareto distributions}\label{sec:mgpd}

    The bivariate generalized Pareto distribution is the asymptotically justified model for the (conditional) threshold exceedances of a bivariate random vector as the threshold grows to infinity. We introduce it for a vector $\vX_E$ whose margins follow unit exponential distributions; a full treatment  can be found in \citet{rootzen_multivariate_2018} or \citet{rootzen2018nr2}. 
    
  Define $\mathcal L = \{\vc{x} \in \RR^2 : \max(\vx) > 0\}$. The vector $\vX_E$ is in the domain of attraction of a bivariate generalized Pareto vector $\vZ = (Z_1,Z_2)^T$ \citep{rootzen2006} if, for any $\vz \in \mathcal L$,
    \begin{equation*}
    	\PP[\vZ \leq \vz] = \lim_{u \rightarrow \infty} \PP[ \vX_E - u \vc{1} \leq \vz \mid \max(\vX_E) > u].
    \end{equation*}
The bivariate GP distribution is \emph{threshold-stable}: for any scalar $v\ge 0$, the conditional vector $\vZ-v\vc 1 \mid \max(\vZ)>v$ is again bivariate GP. In addition, any bivariate GP vector $\vZ$ can be written as the sum of a unit exponential random variable $E$ and a bivariate (\emph{``spectral''}) random vector $\vS$ whose component-wise minimum equals zero with probability one,
    \begin{equation*}\label{eq:mgpd}
    	\vZ = E - \vS, \qquad E \sim \textnormal{Exp}(1), \, \min(\vS) = 0 \textnormal{ a.s.}, \, E \indep \vS,
    \end{equation*}
    see Equation~(3.2) in \citep{kiriliouk_peaks_2019}.
     Since the margins of $\vX_E$ are unit exponential, we find that $Z_j \mid Z_j > 0$ is also unit exponential for $j=1,2$; we say that $\vZ$ is \emph{standardised}. We can obtain a general bivariate GP vector $\vZ'$ whose conditional margins $Z_j' \mid Z_j' > 0$ are GP$(\xi_j,\sigma_j)$ via the transformation
    \[
    Z'_j = \sigma_j \frac{\exp(\xi_j Z_{j}) - 1}{\xi_j}, \quad j = 1,2.
    \]   
    The bivariate GP distribution can be used to model data whose components exhibit asymptotic dependence. 
    The value of the tail dependence coefficient $\chi$ depends on the choice of $\vc{S}$, see \citet[Supplementary material, Section B]{kiriliouk_peaks_2019}.

    \subsubsection{Other multivariate extensions of the univariate GP model}\label{sec:GMGP}

    	The exponential-gamma representation of a univariate GP distribution (Lemma~\ref{lem:mixture_GPD}) provides a natural building block for multivariate extensions. For example, \cite{bopp2017exponential} introduce temporal dependence through a Markov chain structure for the gamma random variable, while \cite{yadav2021} replace the exponential numerator by a gamma or a generalized inverse Gaussian distribution, coupled with a latent process to account for spatial dependence.

    Recently, \cite{bacro2025multivariate} proposed a multivariate generalization of the representation in Lemma~\ref{lem:mixture_GPD} by combining independent exponential numerators with shared or individual Gamma denominators, leading to a model that allows for both asymptotic dependence and independence. It does not require marginal standardization and has directly interpretable GP margins. We present the simplest form of their so-called    
      gamma-mixture GP   model, defined by the bivariate vector
    \[
    \left( \frac{E_1}{G + G_1}, \frac{E_2}{G + G_2} \right),
    \]
    where $E_1$ and $E_2$ are standard exponential random variables, and $G$, $G_1$, and $G_2$ follow  gamma distributions with unit scale and shape parameters $\alpha$, $\alpha_1$ and $\alpha_2$, respectively. All variables are assumed to be mutually independent.
By construction,  
    each margin is  GP distributed with tail index $\xi_j = \left(\alpha + \alpha_{j} \right)^{-1}$.
    The residual tail dependence coefficient is
    \[
    \eta= \frac{\alpha +\max(\alpha_1,\alpha_2)}
    {\alpha +2\max(\alpha_1,\alpha_2)},
    \]
    and in the limiting setting $\alpha_1, \alpha_2 \rightarrow 0$ one obtains \(\chi=2^{-\alpha}\). Despite its flexibility, the model has some practical drawbacks.  In particular, in the asymptotic dependence regime 
    (when the tail indices of both components coincide at $\xi = 1/\alpha$),
    the tail dependence coefficient $\chi$ is fully determined by $\xi$, and quickly decreases to zero as $\xi$ increases. \cite{bacro2025multivariate} proposed to use \emph{beta scaling}, i.e., multiplying $G + G_j$ by beta random variables with appropriate parameters, to obtain stronger asymptotic dependence.  However, this requires an a priori commitment to asymptotic dependence. Moreover, sub-asymptotic dependence remains weak as the exponential numerators are independent.  
    
    The construction proposed in the next section retains the idea of extending the representation of Lemma~\ref{lem:mixture_GPD}   but adds an additive shift and possibly dependent numerators, allowing a richer range of extremal dependence and support on $\mathcal L$ (as opposed to $(0,\infty)^2$ for the  model  of \citet{bacro2025multivariate}), in line with the bivariate GP distribution.

	\section{A sub-asymptotic model for threshold exceedances}\label{sec:ourmodel}
	
	We propose a new model for bivariate extremes which generalizes both the bivariate GP and the \citet{bacro2025multivariate} models. Our construction captures asymptotic dependence and asymptotic independence in a unified framework, with explicit expressions for the tail coefficients $\chi$ and $\eta$, 
    and operates directly on the original scale of the data (no pre-processing or marginal standardization is required).
    We start by describing the model and its asymptotic properties in Section~\ref{sec:bivariate} before moving to its sub-asymptotic behaviour in Section~\ref{sec:subasymptotic}.
	
	\subsection{Model construction and asymptotic properties}
	\label{sec:bivariate}
    
Definition~\ref{def:biv_model} introduces our bivariate model construction. We will study its marginal distributions, including their speed of convergence towards the GP distribution, and its asymptotic tail dependence properties.
     \begin{defn}\label{def:biv_model} 
     The bivariate random vector $\vY=(Y_1,Y_2)^T$ is said to follow  a \emph{sub-asymptotic bivariate GP (sBGP) distribution} if it can be written as 
     \begin{equation*}
                  \vY = \left(\beta_1 \left( \frac{w E + (1-w) E_1}{G + G_1} - S_1 \right), \beta_2 \left( \frac{w E + (1-w) E_2}{G + G_2} - S_2 \right)\right)^T,
     \end{equation*}
     where $\beta_1,\beta_2$ are positive scale parameters, $w \in [0,1]$ is a weight, 
     $E_1,E_2$ and $E$ represent standard exponential random variables and $G,G_1$ and $G_2$ follow gamma distributions with unit scales and shape parameters $\alpha$, $\alpha_1$ and $\alpha_2$, respectively.  
     The bivariate vector $(S_1,S_2)^T$ is a non-negative random  vector  such that $\min(S_1,S_2)=0$. All random variables are mutually independent and 
     the non-negative constants $\alpha_1, \alpha_2, \alpha$ satisfy $\alpha+\min(\alpha_1,\alpha_2) > 0$.
    \end{defn}
By Breiman's lemma \citep{breiman1965some}, we know how  multiplication by an independent, lighter-tailed variable (here, a weighted sum of exponentials) preserves the tail order.
Hence, 
 the vector $\vY$ has heavy-tailed margins because $1/(G + G_j)$ is regularly varying of order $(\alpha + \alpha_j)$ for $j = 1,2$.
 The parameters $\alpha_1,\alpha_2,\alpha$ control the relative importance of individual and shared gamma components, drive residual tail dependence, and determine the marginal tail indices; see Proposition~\ref{prop:marg_Y}\ref{prop:marg_Y_tail}. The weight $w$ controls the balance between shared and individual exponential components: as $w$ increases from 0 to 1, the strength of (sub-)asymptotic dependence increases (see also Section~\ref{sec:subasymptotic}).  Table~\ref{tab:model_summary} lists the values of the coefficients $\chi$ and $\eta$ for key parameter configurations, which are obtained in Proposition~\ref{prop:tail_dep}. 

     	\begin{table}[h!]
		\centering
		\renewcommand{\arraystretch}{1.5}
		\setlength{\tabcolsep}{12pt}
		\begin{tabular}{cccc}
			\toprule
			Case & $w = 0$ & $w \in (0,1)$ & $w = 1$ \\
			\midrule
			$\max(\alpha_1,\alpha_2)>0$ (AI, $\chi = 0$) &
			\multicolumn{3}{c}{$\eta = \dfrac{\alpha + \max (\alpha_1, \alpha_2)}{\alpha+ 2\max (\alpha_1, \alpha_2)}$} \\
			\midrule
			$\alpha_1 = \alpha_2 = 0$ (AD, $\eta = 1$) &
			$\chi = 2^{-\alpha}$ &
			$\chi \in \left(2^{-\alpha}, 1\right)$ &
			$\chi = 1$ \\
			\bottomrule
		\end{tabular}
		\caption{Tail dependence coefficients $\chi$ and $\eta$ as functions of the weight $w$ and the parameters $\alpha_1$, $\alpha_2$, $\alpha$.}
		\label{tab:model_summary}
	\end{table}
    
        The \emph{shift vector} $(S_1,S_2)^T$ allows for distinct sub-asymptotic dependence shapes. For example, we can use
    \begin{equation}\label{eq:S_T_definition}
    S_j = \max(T_1,T_2) - T_j, \qquad T_j \siid N(0,\sigma_T^2), \qquad j=1,2.
    \end{equation}
    This construction implies that $S_j\ge0$ and automatically enforces $\min(S_1,S_2)=0$, ensuring that $\vY$ is supported on the canonical L-shaped region $\mathcal L$.
    The parameter $\sigma_T$ controls 
    the dispersion of $\vS$, 
   allowing one to model finite-sample dependence without altering the asymptotic tail dependence coefficients.

	\paragraph{Marginal distributions.} 

    \begin{prop}[Marginal density, moments, and upper GP tail behaviour of $Y_j$.]
    \label{prop:marg_Y}
    Let $\vY$ be as in Definition~\ref{def:biv_model}, $\bm S$ as in \eqref{eq:S_T_definition} and fix $j\in\{1,2\}$. 
    \begin{enumerate}[label=(\roman*)]
    \item \textbf{Density.} \label{prop:marg_Y_density}
    The marginal density of $Y_j$ is
    \begin{equation}\label{eq:marg_Y_density}
    f_{Y_j}(y)
    = \frac{1}{2 \beta_j} \Bigl\{ f_{V_j}(y/\beta_j)
    + \int_{0}^{\infty} f_{V_j}(y/\beta_j + s)\, f_{S_j \mid S_j > 0}(s)\,\diff s \Bigr\}.
    \end{equation}
    where
    \[
    V_j=\frac{wE+(1-w)E_j}{G+G_j},
    \]
    $f_{V_j}$ is given in Lemma~\ref{lem:marg_V}, and the distribution of $S_j \mid S_j > 0$ is given in \eqref{eq:Sconddist}.
    
    \item \textbf{Moments.} \label{prop:marg_Y_moments}
    Let $ \xi_j = (\alpha + \alpha_j)^{-1}$. The mean and variance are 
    \begin{align}
    \EE[Y_j] & = \beta_j \left( \frac{\xi_j}{1-\xi_j} - \EE[S_j] \right),
    \qquad \text{if } \xi_j \in (0,1), \label{eq:expec_Y}\\
    \textnormal{Var}[Y_j] & = \beta_j^2 \left( \frac{w^2+(1-w)^2+2\xi_j w(1-w)}{(\xi_j^{-1}-1)^2(1-2\xi_j)}
    + \textnormal{Var}[S_j] \right),
    \qquad \text{if } \xi_j \in \bigl(0,\tfrac12\bigr). \label{eq:var_Y}
    \end{align}
    
    \item \textbf{Upper GP tail convergence.} \label{prop:marg_Y_tail}
    If $\xi_j = (\alpha + \alpha_j)^{-1}$ and $\sigma_j =  \beta_jC^{\xi_j}\xi_j$ with $C = \frac{w^{1/\xi_j+1}-(1-w)^{1/\xi_j+1}}{2w-1}$, then   
    \[
    \dfrac{\PP(Y_j >x)}{\overline H_{\xi_j,\sigma_j}(x)}=1- \rbr{\frac{\Delta + o(1)}{x}}, \qquad \text{ as } x \rightarrow \infty,
    \]
    where	\(	\Delta= \dfrac{\EE[S_j] + D - C^{\xi_j}}{\xi_j/\beta_j} \)	and	\(D=\dfrac{w^{1/\xi_j+2}-(1-w)^{1/\xi_j+2}}{w^{1/\xi_j+1}-(1-w)^{1/\xi_j+1}}.\)
    \end{enumerate}
    \end{prop}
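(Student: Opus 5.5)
The plan rests on the representation $Y_j=\beta_j(V_j-S_j)$ with $V_j \indep S_j$, which holds because all the building blocks in Definition~\ref{def:biv_model} are mutually independent. Under \eqref{eq:S_T_definition}, $S_j=\max(T_1,T_2)-T_j$ equals $0$ exactly when $T_j\ge T_{3-j}$, which has probability $1/2$ by exchangeability. So the law of $S_j$ is a mixture: an atom at $0$ of mass $1/2$, and with probability $1/2$ the conditional law $S_j\mid S_j>0$ of \eqref{eq:Sconddist}, which is the law of $|T_1-T_2|$ (half-normal with variance $2\sigma_T^2$).

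\textbf{Part (i).} Conditioning on the two branches of this mixture, the density of $V_j-S_j$ at $t$ is
\[
\tfrac12 f_{V_j}(t)+\tfrac12\int_0^\infty f_{V_j}(t+s)f_{S_j\mid S_j>0}(s)\diff s .
\]
Rescaling by $\beta_j$ then gives \eqref{eq:marg_Y_density} directly.

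\textbf{Part (ii).} Both moments follow from independence: $\EE[Y_j]=\beta_j(\EE V_j-\EE S_j)$ and $\textnormal{Var}[Y_j]=\beta_j^2(\textnormal{Var}\,V_j+\textnormal{Var}\,S_j)$.

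Write $W=wE+(1-w)E_j$ and $\Gamma=G+G_j\sim\text{Gamma}(a,1)$ with $a=\alpha+\alpha_j=1/\xi_j$; then $V_j=W/\Gamma$ with $W\indep\Gamma$. We have $\EE W=1$ and $\EE W^2=\textnormal{Var}\,W+1=w^2+(1-w)^2+1$. For the gamma factor, $\EE\Gamma^{-1}=1/(a-1)$ and $\EE\Gamma^{-2}=1/((a-1)(a-2))$, finite for $\xi_j<1$ and $\xi_j<1/2$ respectively. This gives $\EE V_j=\xi_j/(1-\xi_j)$.

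For the variance, $\textnormal{Var}\,V_j=\EE W^2\,\EE\Gamma^{-2}-(a-1)^{-2}$. I will put everything over $(a-1)^2(a-2)$ and express it in $\xi_j$ via $a-1=\xi_j^{-1}-1$ and $(a-2)/a=1-2\xi_j$. Matching this to the form in \eqref{eq:var_Y} is routine algebra.

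\textbf{Part (iii).} The key step is an exact formula for the survival function of $V_j$. For $w\neq 1/2$, $W$ is hypoexponential with survival function
\[
\PP(W>x)=\frac{w e^{-x/w}-(1-w)e^{-x/(1-w)}}{2w-1}.
\]
Conditioning on $\Gamma$ and using $\EE e^{-c\Gamma}=(1+c)^{-a}$ gives
\[
\PP(V_j>t)=\frac{w(1+t/w)^{-a}-(1-w)(1+t/(1-w))^{-a}}{2w-1}.
\]
The case $w=1/2$ follows by continuity in $w$. The cases $w\in\{0,1\}$ are immediate; for instance, at $w=1$ one has $C=1$.

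Next I expand $(1+t/w)^{-a}=w^a t^{-a}\bigl(1-aw/t+O(t^{-2})\bigr)$, and likewise for the $(1-w)$ term. This yields
\[
\PP(V_j>t)=t^{-a}\bigl(C-aCD/t+O(t^{-2})\bigr),
\]
because the numerator of the second-order coefficient equals $(w^{a+2}-(1-w)^{a+2})/(2w-1)=CD$.

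Now set $t=x/\beta_j+S_j$, expand $t^{-a}=(x/\beta_j)^{-a}\bigl(1-a\beta_jS_j/x+\dots\bigr)$, and integrate over $S_j$. The result is
\[
\PP(Y_j>x)=C(x/\beta_j)^{-a}\Bigl(1-\frac{a\beta_j(\EE S_j+D)+o(1)}{x}\Bigr).
\]
On the GP side, $\overline H_{\xi_j,\sigma_j}(x)=(\xi_jx/\sigma_j)^{-a}\bigl(1-a\sigma_j/(\xi_jx)+\dots\bigr)$. With $\sigma_j=\beta_jC^{\xi_j}\xi_j$ the leading term is exactly $C(x/\beta_j)^{-a}$, and the correction is $a\beta_jC^{\xi_j}/x$. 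Taking the ratio then gives $1-(\Delta+o(1))/x$ with $\Delta=(\beta_j/\xi_j)(\EE S_j+D-C^{\xi_j})$, which is the stated expression.

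\textbf{Main obstacle.} The delicate point is justifying the interchange of expectation over $S_j$ with the expansion, so that the remainder is uniformly $o(1/x)$. I will handle this by dominated convergence, using that $S_j\ge0$ and has Gaussian-type tails. In particular, the exact formula gives $\PP(V_j>x/\beta_j+S_j)\le\PP(V_j>x/\beta_j)$, and the second-order remainder is bounded by a constant times $(1+S_j^2)x^{-a-2}$, which is integrable.
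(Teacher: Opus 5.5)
Your proposal is correct and follows essentially the same route as the paper: the atom-plus-half-normal mixture for $S_j$ in (i), independence together with the moments of $wE+(1-w)E_j$ and $(G+G_j)^{-1}$ in (ii), and in (iii) the exact hypoexponential--gamma survival function of $V_j$, its first-order expansion, averaging over $S_j$, and comparison with the GP expansion. The paper organizes (iii) as a product of the two ratios $\overline F_{Y_j}/\overline F_{\beta_j V_j}$ and $\overline F_{\beta_j V_j}/\overline H_{\xi_j,\sigma_j}$, and only asserts the interchange of expectation and expansion ``by dominated convergence''; your bound of order $(1+S_j^2)x^{-a-2}$ on the remainder, uniform in $S_j\ge 0$, is what makes that step rigorous.
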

If  $w=0$ or $w=1$, the random variable $V_j$ follows a $\GP(\xi_j,\xi_j)$ distribution (Lemma~\ref{lem:mixture_GPD}).
Note that the parameter $w$ does not appear in the mean of $Y_j$ (Proposition~\ref{prop:marg_Y}\ref{prop:marg_Y_moments}). 
Figure~\ref{fig:density_varying_w_sigmaT} shows how $w$ and $\sigma_T$ modify the \emph{body} of the marginal density without changing its asymptotic tail. In the left panel, increasing $w$ makes the positive part closer to a Pareto shape: when $w=1$, the leading term reduces to $E/(G_j+G)$, yielding a Pareto-like density, whereas values of $w$ near $1/2$ replace the numerator by a hypo-exponential mixture, slightly smoothing the peak. In the right panel, larger $\sigma_T$ spreads mass toward negative values through the additive shift $S_j$, slowing the convergence toward the GP tail while preserving its slope.

    \begin{figure}[htbp]
        \centering
        \begin{subfigure}[b]{0.48\textwidth}
            \includegraphics[width=\textwidth]{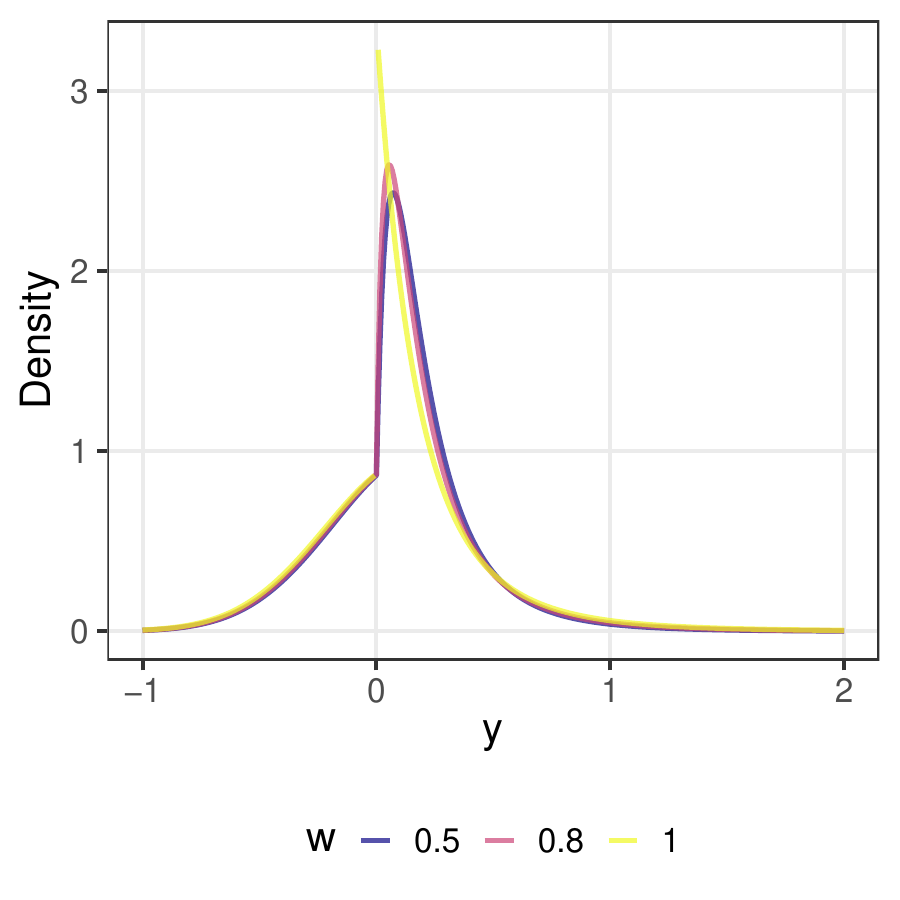}
        \end{subfigure}
        \hfill
        \begin{subfigure}[b]{0.48\textwidth}
            \includegraphics[width=\textwidth]{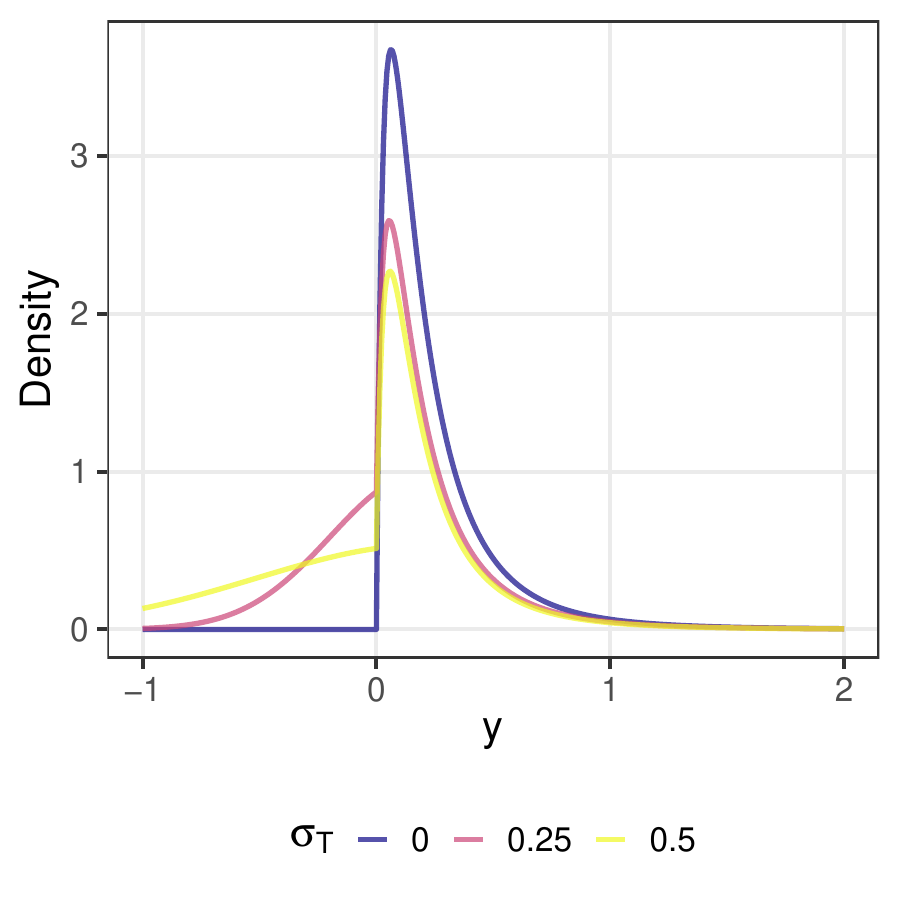}
        \end{subfigure}
        \caption{ Probability density functions of $Y_j$ in Definition 
      \ref{def:biv_model} with  $\xi_j = 0.2$ and $\beta_j = 1$. 
      In the left panel, the parameter $\sigma_T=0.25$ is fixed and 
       the weight $w \in \{0.5, 0.8, 1\}$ (blue, red, yellow lines), while $w=0.8$ and $\sigma_T \in \{0, 0.25, 0.5\}$ (blue, red, yellow lines) in the right panel.}
        \label{fig:density_varying_w_sigmaT}
    \end{figure}

	\paragraph{Connections with the multivariate GP distribution.}

    The multivariate GP distribution arises as a boundary case of our proposed construction. Recall that the gamma variables $G_j+G$ have shape parameter $\alpha+\alpha_j$. When $\alpha+\alpha_j\to\infty$, the relative fluctuations of $G_j+G$ vanish, and these variables behave as deterministic constants after rescaling. In addition, setting $w=1$ removes the mixture component in the numerator, leaving a purely additive exponential representation. In this regime, the vector $\vY$ converges to the classical additive form of a multivariate GP distribution with standardised margins.

    \begin{lem}[Multivariate upper GP limit]\label{lem:mgpd_limit}
        Let ${\bf Y}=(Y_1,Y_2)^T$ be a sBGP random vector as in  Definition~\ref{def:biv_model} with $w=1$. 
        Let $\vS^{(\beta)}=(S^{(\beta)}_1,S^{(\beta)}_2)^T$ be the associated shift vector, possibly depending on 
        $\bm \beta=(\beta_1,\beta_2)^T$, and set $\beta_j=\sigma_j(\alpha+\alpha_j)$ for some $\sigma_j>0$.
        Assume that $\alpha+\alpha_j\to\infty$ for $j=1,2$ and that
        \[
        \beta_j S^{(\beta)}_j  \;\xrightarrow{p}\;  S^0_j , \qquad j=1,2,
        \]
        for some non-negative random vector $\vS^0=(S^0_1,S^0_2)^T$ with $\min(S^0_1,S^0_2)=0$ a.s.
        Then
        \[
        (Y_1,Y_2) \;\xrightarrow{p}\; 
        \big(\sigma_1 E-S^0_1,\;\sigma_2 E-S^0_2\big),
        \]
        where $E\sim\mathrm{Exp}(1)$. The limit vector is bivariate GP with
        shape vector $\bm \xi=(0,0)^T$ and scale vector $\bm \sigma=(\sigma_1,\sigma_2)^T$.
    \end{lem}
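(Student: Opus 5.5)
With $w=1$, each component is $Y_j=\beta_j E/(G+G_j)-\beta_j S^{(\beta)}_j$. I will treat the two terms separately and recombine them with Slutsky-type arguments for convergence in probability, which is legitimate because every random variable lives on one probability space.

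For the first term, set $\beta_j=\sigma_j(\alpha+\alpha_j)$ and write
\[
\frac{\beta_j E}{G+G_j}=\sigma_j E\cdot \frac{\alpha+\alpha_j}{G+G_j}.
\]
The key step is to show that $(G+G_j)/(\alpha+\alpha_j)\xrightarrow{p}1$. The sum $G+G_j$ is Gamma with shape $\alpha+\alpha_j$ and unit rate, since $G$ and $G_j$ are independent with unit scale. Hence $(G+G_j)/(\alpha+\alpha_j)$ has mean $1$ and variance $1/(\alpha+\alpha_j)\to 0$, and Chebyshev's inequality gives the convergence. By the continuous mapping theorem at the point $1$, also $(\alpha+\alpha_j)/(G+G_j)\xrightarrow{p}1$.

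One subtlety is that the gamma variables change with the parameters, so this is a triangular-array statement. Still, $E$ is a single fixed random variable, so $E$ is tight and $\sigma_j E\cdot (\alpha+\alpha_j)/(G+G_j)-\sigma_jE=\sigma_jE\,o_p(1)=o_p(1)$. Each component therefore satisfies $\beta_jE/(G+G_j)\xrightarrow{p}\sigma_jE$. Convergence in probability of both components gives joint convergence in probability of the pair, with limit $(\sigma_1E,\sigma_2E)$ sharing the same $E$.

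Combining this with the assumption $\beta_jS^{(\beta)}_j\xrightarrow{p}S^0_j$, subtraction preserves convergence in probability. This gives $(Y_1,Y_2)\xrightarrow{p}(\sigma_1E-S_1^0,\sigma_2E-S_2^0)$.

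It remains to identify the limit. I will note that $S^0$ is a limit of functions of $\vS^{(\beta)}$, which are independent of $E$, so $S^0$ can be taken independent of $E$. Independence is preserved under limits in probability, since the joint law of $(E,\beta_jS_j^{(\beta)})$ factorises and converges weakly. Because $\min(S^0)=0$, the vector $E-\vc S'$ with $\vc S'=(S^0_1/\sigma_1,S^0_2/\sigma_2)$ has the standardised bivariate GP form $E-\vS$ of Section~\ref{sec:mgpd}. Applying the marginal transformation with $\xi_j\to0$, namely $Z_j'=\sigma_jZ_j$, yields a bivariate GP vector with $\bm\xi=\vc 0$ and scales $\bm\sigma$.

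The main obstacle is this last identification rather than the convergence itself. It needs the independence of $E$ and $S^0$, which is not literally part of the hypothesis. I will obtain it from the independence of $E$ and $\vS^{(\beta)}$ together with the limit argument above.
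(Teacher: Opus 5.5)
Your proposal is correct and follows the paper's route: show that $(G+G_j)/\beta_j$ concentrates at $1/\sigma_j$, use continuous mapping to get $\beta_jE/(G+G_j)\xrightarrow{p}\sigma_jE$, then combine with the assumption on the shift. The only differences are that the paper obtains the concentration from the limit of the gamma moment-generating function rather than Chebyshev, and that the paper simply asserts the limit is bivariate GP; your argument that $E$ is independent of $\vS^0$, which is needed to match the $E-\vS$ representation, fills a step the paper leaves implicit.
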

	  
    In the multivariate GP limit, the shift vector $(S_1,S_2)^T$ allows for distinct asymptotic dependence shapes \citep[Supplementary material, Section D]{kiriliouk_peaks_2019}. 
    For the general model as in Definition~\ref{def:biv_model}, $(S_1,S_2)^T$ has no impact on asymptotic model properties, driving sub-asymptotic dependence only.   
		When $w= \beta_1 = \beta_2 = 1$, the sBGP model of Definition~\ref{def:biv_model} can be rewritten as 
		\[
		(Y_1,Y_2)=\left( \frac{E-\tilde S_1}{G + G_1},\; \frac{E-\tilde S_2}{G + G_2}\right),
		\]
		where $\tilde S_j=(G+G_j) S_j$ for $j = 1,2$.
		Since $\tilde{S}_1,\tilde{S}_2\ge 0$ and $\min(\tilde{S}_1,\tilde{S}_2)=0$, we find a bivariate standardised GP vector $(E - \tilde{S}_1, E - \tilde{S}_2)$ in which each component is divided by a gamma random variable. 

	\paragraph{Tail dependence properties.}
	The tail dependence of $\vY$ is governed by the shared exponential- and gamma-distributed variables.  We now derive the values of the coefficients $\chi$ and $\eta$ that are listed in  Table~\ref{tab:model_summary}. We see that the distinction between AD and AI depends solely on the configuration of the gamma-distributed variables. This is consistent with the fact that the gamma-distributed denominator (i.e. the multiplicative inverse gamma-distributed factor) has heavier tails than the weighted sum of exponentials in the numerator.

	In particular, we find that the residual tail dependence coefficient $\eta$ is entirely determined by the gamma shape parameters $\alpha, \alpha_1, \alpha_2$ and is independent of the weight $w$. By contrast, in case of asymptotic dependence, the tail dependence coefficient $\chi$ can reach a full spectrum of tail dependence (i.e. $\chi \in [0,1)$) thanks to the weight $w$.

\begin{prop}[Tail dependence structure]\label{prop:tail_dep}
    Let $\vY$ be as in Definition~\ref{def:biv_model}, with marginal tail indices $\xi_j = (\alpha+\alpha_j)^{-1}$ for $j=1,2$.
    
    \begin{enumerate}[label=(\roman*)]
     \item \textbf{Residual tail dependence}\label{prop:tail_dep_eta}
      If $\alpha_1>0$ or $\alpha_2>0$, then $Y_1$ and $Y_2$ are asymptotically independent, so that
    $\chi=0$ and 
    \[
    \eta
    = \frac{\alpha + \max(\alpha_1,\alpha_2)}{\alpha + 2\max(\alpha_1,\alpha_2)},
    \]
    In particular, 
    $\eta = \tfrac12$ if $\alpha = 0$.
    \item \textbf{Tail dependence.}\label{prop:tail_dep_chi}
   If $\alpha_1=\alpha_2=0$, then by assumption, $\alpha > 0$, and $Y_1, Y_2$ are asymptotically dependent, so that $\eta = 1$ and
    \[
    \chi
    = \frac{2w-1}{3w-1}\,
    \frac{2 w^{\alpha+1}-2^{-\alpha}(1-w)^{\alpha+1}}
         {w^{\alpha+1}-(1-w)^{\alpha+1}},
    \]
    with $\chi = 2^{-\alpha}$ when $w=0$ and $\chi = 1$ when $w=1$.
    \item \textbf{Constraints induced by marginal asymmetry.}\label{prop:tail_dep_eta_bounds}
    The residual tail dependence coefficient satisfies
    \[
    \eta \in
    \left[
    \frac12,\;
    \frac{1}{\,2-\frac{\min(\xi_1,\xi_2)}{\max(\xi_1,\xi_2)}\,}
    \right],
    \]
    while $\chi >0 $ only if $\xi_1 = \xi_2$.
    \end{enumerate}
\end{prop}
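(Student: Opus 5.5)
The plan is to reduce everything to the joint tail of the two ratios $V_j=(wE+(1-w)E_j)/(G+G_j)$. The shift $\beta_j S_j$ is independent of $V_j$, and it has light (Gaussian-type) tails under \eqref{eq:S_T_definition}; in general it is a nonnegative offset. Since $\beta_j$ is only a scale, it does not change tail dependence, and a bounded-in-probability shift does not change the first-order regularly varying behaviour. Concretely, I would sandwich $\{Y_1>x_1,Y_2>x_2\}$ between $\{V_1>x_1/\beta_1,V_2>x_2/\beta_2\}$ and the same event at levels $x_j/\beta_j+s$, and then let $s\to\infty$ slowly. Proposition~\ref{prop:marg_Y}\ref{prop:marg_Y_tail} shows that the marginal quantiles of $Y_j$ and of $\beta_j V_j$ are asymptotically equivalent. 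Hence $\chi$ and $\eta$ can be computed for $(V_1,V_2)$ at marginal quantile levels $u_j(q)\sim (C_j/(1-q))^{\xi_j}$.

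Next I condition on $(E,E_1,E_2)$, writing $N_j=wE+(1-w)E_j$, so that $V_j=N_j/(G+G_j)$. The joint exceedance $\{V_1>u_1,V_2>u_2\}$ is $\{G+G_1<N_1/u_1,\;G+G_2<N_2/u_2\}$, a small-ball event for the gamma variables. Near zero the gamma densities give $\PP(G<a)\sim a^{\alpha}/\Gamma(\alpha+1)$. By direct integration, $\PP(G+G_1<a_1,G+G_2<a_2)$ is of exact order $\min(a_1,a_2)^{\alpha}a_1^{\alpha_1}a_2^{\alpha_2}$. The case $\alpha=0$ corresponds to independent denominators, and the case where some $\alpha_j=0$ is handled by reading the corresponding gamma as absent.

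For \ref{prop:tail_dep_eta}, suppose first $\xi_1=\xi_2$, so $u_1\asymp u_2\asymp (1-q)^{-\xi}$. The joint probability is then of order $u^{-(\alpha+\alpha_1+\alpha_2)}$, after integrating out the numerators. The numerators have all moments, so dominated convergence applies: use the bound $\PP(G+G_j<a)\le c\,a^{\alpha+\alpha_j}$ together with $\EE N^{p}<\infty$. This gives joint order $(1-q)^{(\alpha+\alpha_1+\alpha_2)/(\alpha+\alpha_j)}$. For unequal margins the same bound has exponent $\alpha\xi_{\max}+\alpha_1\xi_1+\alpha_2\xi_2$ in the minimum-driven term. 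I would compute $1/\eta$ as this exponent, taking the slowest-decaying configuration; after simplification this should give $\eta=(\alpha+\max\alpha_j)/(\alpha+2\max\alpha_j)$. This matching is the step I expect to be the main obstacle. I must check carefully which of $a_1,a_2$ is the minimum when the exponents differ, and verify that the resulting expression agrees with the paper's formula. Since $\eta<1$ whenever $\max\alpha_j>0$, the bound gives $\chi=0$. Setting $\alpha=0$ gives $\eta=1/2$ immediately.

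For \ref{prop:tail_dep_chi}, we have $\alpha_1=\alpha_2=0$ and $V_j=N_j/G$ with a common $G$, and the two margins coincide in law. Breiman's lemma gives $\PP(V_1>u,V_2>u)\sim u^{-\alpha}\EE[\min(N_1,N_2)^{\alpha}]\,\Gamma(\alpha+1)^{-1}$ up to the gamma constant. Likewise $\PP(V_j>u)\sim u^{-\alpha}\EE[N_j^{\alpha}]\,\Gamma(\alpha+1)^{-1}$. So $\chi=\EE[\min(N_1,N_2)^\alpha]/\EE[N_1^\alpha]$. Write $\min(N_1,N_2)=wE+(1-w)\min(E_1,E_2)$, where $\min(E_1,E_2)\sim\Exp(2)$. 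Hypoexponential densities then give $\EE[(aE+bE')^{\alpha}]=\Gamma(\alpha+1)(a^{\alpha+1}-b^{\alpha+1})/(a-b)$ for independent standard exponentials $E,E'$. Applying this with $(a,b)=(w,1-w)$ and $(w,(1-w)/2)$ yields the stated ratio; this also reproduces $C$ in Proposition~\ref{prop:marg_Y}. The endpoints $w=0,1$ follow by continuity or direct evaluation.

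For \ref{prop:tail_dep_eta_bounds}, write $a_j=1/\xi_j-\alpha\ge0$ and maximise $\eta$ over $\alpha$ at fixed $\xi_1,\xi_2$. The formula is increasing in $\alpha$, and the constraint $\max\alpha_j=1/\xi_{\min}-\alpha\ge 1/\xi_{\min}-1/\xi_{\max}$ yields the upper bound $1/(2-\xi_{\min}/\xi_{\max})$. The lower bound $1/2$ follows from $\alpha\ge0$. Finally, by \ref{prop:tail_dep_chi}, $\chi>0$ requires $\alpha_1=\alpha_2=0$, hence $\xi_1=\xi_2$.
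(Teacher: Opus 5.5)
Your proof is correct, and it takes a different route from the paper's. The paper integrates out the shared exponential $E$ first. This produces the exact decomposition $A_1+A_2+A_{12}$ of Lemma~\ref{lem:SF_V}, whose terms, in particular the integrals $\mathcal I_j$, are then analysed one by one in Lemma~\ref{lem:extra}. For (ii), the same step produces closed forms $p_1(x)$, $p_2(x)$, whose ratio is taken as $x\to\infty$.

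You instead condition on all three exponentials and treat $\{V_1>u_1,V_2>u_2\}$ as a small-ball event for the gamma denominators, so the numerators enter only through finite moments. This route is shorter and shows directly why $\eta$ does not depend on $w$. It also gives matching upper and lower bounds with little effort, and avoids the delicate asymptotics of the $\mathcal I_j$. Finally, it yields the representation $\chi=\EE[\min(N_1,N_2)^{\alpha}]/\EE[N_1^{\alpha}]$, which carries over to any numerators with enough moments (e.g.\ the $d$-variate extension of Section~\ref{sec:discussion}). The paper's route is heavier, but its exact finite-level expressions, such as $p_2(x)/p_1(x)$, can be used directly for the sub-asymptotic curves $\chi(q)$ and $\eta(q)$.

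The step you flag as the main obstacle is immediate. If $\alpha_1\le\alpha_2$, then $\xi_{\max}=\xi_1$ and $\alpha\xi_{\max}+\alpha_1\xi_1+\alpha_2\xi_2=1+\alpha_2/(\alpha+\alpha_2)$. This also covers $\xi_1=\xi_2$, so there is no ``slowest configuration'' to select.

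Some points to tighten:
\begin{itemize}
\item For the upper bound you need the joint estimate $\PP(G+G_1<a_1,G+G_2<a_2)\le c\,\min(a_1,a_2)^{\alpha}a_1^{\alpha_1}a_2^{\alpha_2}$ for all $a_1,a_2>0$. It follows from $f_G(g)\le g^{\alpha-1}/\Gamma(\alpha)$ and $\PP(G_j<t)\le t^{\alpha_j}/\Gamma(\alpha_j+1)$. The marginal bound $\PP(G+G_j<a)\le c\,a^{\alpha+\alpha_j}$ that you cite only gives the marginal order.
\item For the lower bound, restrict to $\{N_1,N_2\in[1,2]\}$.
\item Your sandwich already gives $\PP(Y_j>x)\sim\PP(\beta_jV_j>x)$ for any a.s.\ finite shift. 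So you do not need Proposition~\ref{prop:marg_Y}\ref{prop:marg_Y_tail}, which is stated only for the Gaussian shift, whereas the present proposition allows a general $\vS$.
\item State $\eta=1$ in (ii) explicitly, as a consequence of $\chi>0$.
\item The last claim of (iii) is the contrapositive of (i), not a consequence of (ii).
\end{itemize}
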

 \noindent Proposition~\ref{prop:tail_dep}\ref{prop:tail_dep_eta_bounds} illustrates that increasing asymmetry between the marginal tail indices $\xi_1$ and $\xi_2$
    reduces the maximal attainable value of $\eta$.
     When $\xi_1=\xi_2$, 
the model attains full flexibility, with $\eta\in[1/2,1]$ spanning the range from independence to asymptotic dependence; within the latter regime ($\eta=1$), the tail dependence coefficient $\chi$ varies continuously in $[2^{-\alpha},1]$. Note that asymptotic dependence cannot arise when $\xi_1 \neq \xi_2$.
    
    The weight $w$ does not impact the value of $\eta$ when the model exhibits AI but has a dominant role in shaping the tail dependence coefficient $\chi$ in case of AD, as illustrated in Figure~\ref{fig:chi_eta_curves_side_by_side}. The  left panel shows  that varying $w$ leads to large changes in $\chi$, whereas changes in the marginal tail index $\xi=1/\alpha$ have a comparatively minor effect. The right panel illustrates how quickly $\eta$ increases to 1 as a function of the shared coefficient $\alpha$.

    \begin{figure}[ht]
  \centering
  \includegraphics[width=0.9\linewidth]{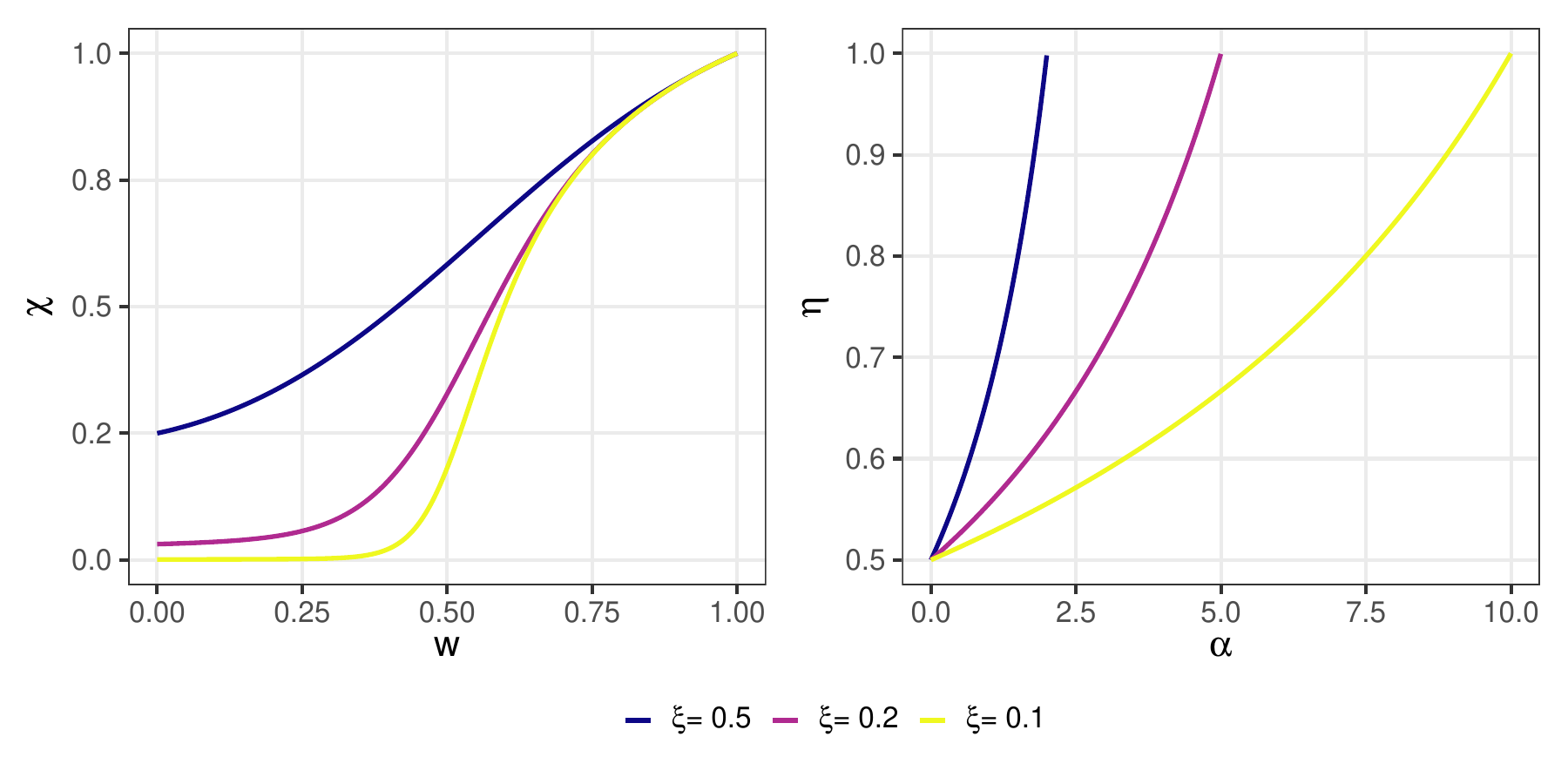}
  \caption{Limiting  tail dependence coefficients $\chi$ (left panel) and $\eta$ (right panel) as functions of the weight $w$ in Definition  \ref{def:biv_model}, for the special case of identical tail indices  $\xi_1=\xi_2=\xi$. }
  \label{fig:chi_eta_curves_side_by_side}
\end{figure}

\subsection{Sub-asymptotic model features}
\label{sec:subasymptotic}


While Section~\ref{sec:bivariate} focused on asymptotic properties, convergence to the limiting coefficients $(\chi,\eta)$ can be slow, so that dependence at practically relevant thresholds may deviate substantially from the asymptotic regime. To characterize such behaviour, we examine the model’s \emph{sub-asymptotic} features, which describe extremal dependence at intermediate quantile levels that are typically accessible in finite samples.

Asymptotic coefficients alone are therefore insufficient to summarize extremal dependence in practice: distinct processes can share the same limiting coefficients $(\chi,\eta)$ while exhibiting markedly different dependence strengths at moderately high quantile levels. This phenomenon is illustrated in Figure~\ref{fig:subasymp_datasets}, which shows three simulated datasets with identical $(\chi,\eta)=(0,0.9)$ but different mixture weights $w$, resulting in clearly different sub-asymptotic dependence patterns despite identical asymptotic behaviour.
Hence, model diagnostics will focus on the full curves $\chi(q)$ and $\eta(q)$ rather than just their limits $\chi$ and $\eta$ \citep[see also][]{huser2025modeling,delloroFlexibleSpaceTime2025}. The curves $\eta(q)$ and $\chi(q)$ contain equivalent information, see \eqref{eq:chieta}. We only focus on $\chi(q)$ in what follows.

\begin{figure}[ht]
  \centering
  \includegraphics[trim={.4cm 0 .4cm 0},clip,width=0.99\linewidth]{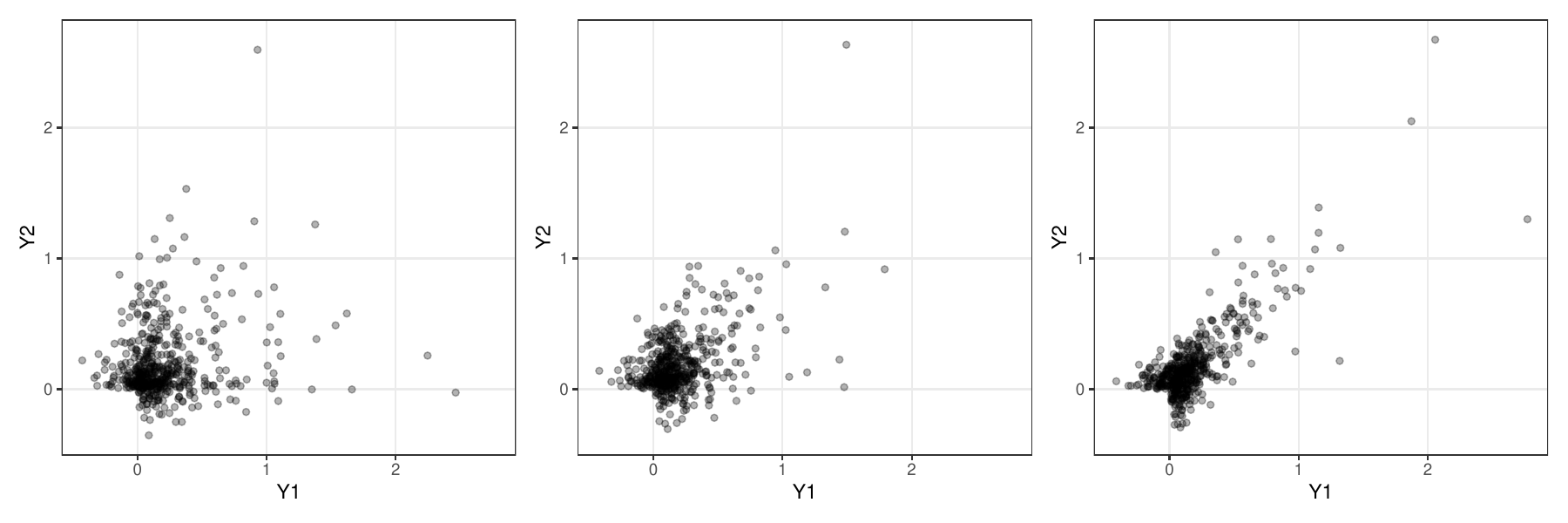}
  \caption{Scatterplots of samples of size $n=1000$ simulated from the sBGP model (Definition~\ref{def:biv_model}) with identical coefficients $(\chi,\eta) = (0, 0.9)$ and parameters $(\alpha,\alpha_1,\alpha_2,\beta_1,\beta_2,\sigma_T) = (4.44, 0.56, 0.56, 1, 1, 0.1)$. From left to right: $w = 0.1$, $w = 0.5$, $w = 0.9$.}
  \label{fig:subasymp_datasets}
\end{figure}

No closed-form expression is available for $\chi(q)$ when $q<1$. We approximate it empirically using a large simulated sample $\vY_1,\ldots,\vY_N$ from the model,
\begin{equation}\label{eq:chi_esti}
    \hat{\chi} (q) =  \frac{1}{N(1-q)} \sum_{i=1}^N \mathbf{1}\{ R_{i1} > (N+1)q,\; R_{i2} > (N+1)q \}, 
\end{equation}
where $R_{ij}$ denotes the rank of $Y_{ij}$ among $(Y_{1j},\ldots,Y_{Nj})$ for $j=1,2$. 
This estimator corresponds to the proportion of joint exceedances above the empirical $q$-quantiles, normalized by the expected number of marginal exceedances $N(1-q)$. 
Figure~\ref{fig:chi_curve_sub} displays the resulting $\chi(q)$ curves. For small $w$ (left panel), curves corresponding to different $\eta$ values remain nearly indistinguishable except extremely close to $q=1$, reflecting weak sub-asymptotic separation. As $w$ increases (middle and right panels), the weighted exponential sum induces stronger joint extremes and the curves separate earlier, making $\eta$ practically identifiable from finite samples. The orange curves correspond to the three datasets of Figure~\ref{fig:subasymp_datasets}$:$ although they share identical asymptotic coefficients, their $\chi(q)$ trajectories differ substantially at intermediate quantiles, highlighting the importance of full $\chi(q)$ diagnostics."

\begin{figure}[ht]
  \centering
  \includegraphics[width=\linewidth]{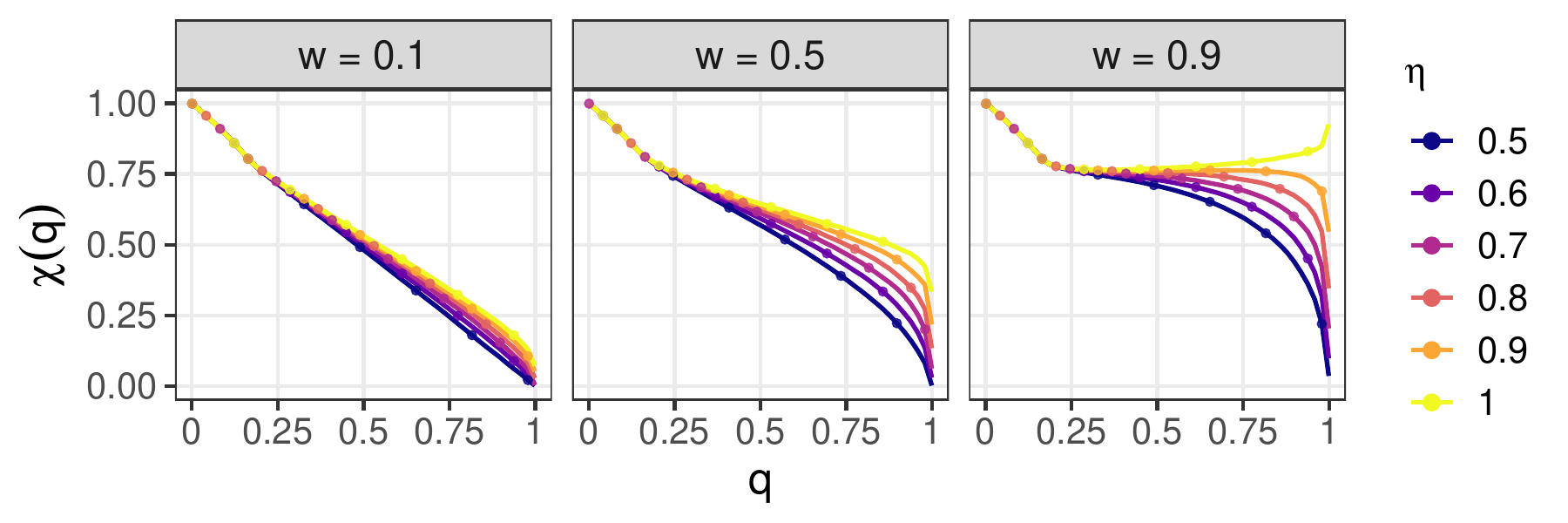}
  \caption{Empirical $\chi(q)$ curves for the sBGP model (\ref{def:biv_model}). The parameters $(\alpha,\alpha_1,\alpha_2)$ are chosen such that $\xi_1=\xi_2=0.2$ and 
  $\eta \in \{0.5,0.6,\ldots,1\}$, with $\sigma_T=0.1$.  Left: $w=0.1$; Middle: $w=0.5$; Right: $w=0.9$. For each pair $(w,\eta)$, ${\chi}(q)$ is estimated using \eqref{eq:chi_esti}, based on a single synthetic sample of size $N=10^6$, with $q$ ranging up to $0.999$.}
  \label{fig:chi_curve_sub}
\end{figure}

In case of asymptotic dependence, Figure~\ref{fig:eta_w0} further illustrates the slow convergence of $\eta(q)$ to its asymptotic limit $\eta=1$, for $w=0$, $\sigma_T=0$, and $\xi_1=\xi_2=0.2$ (equivalently, $\alpha_1=\alpha_2=0$ and $\alpha=5$), for which a closed-form expression of $\eta(q)$ is available \citep{bacro2025multivariate}.
Hence, for practically observable quantiles (e.g., $q\le 0.9999$), $\eta(q)$ may misleadingly suggest asymptotic independence. A positive shift $\sigma_T$ lowers the entire curve but does not accelerate convergence, i.e.,
\[
\eta(q;\sigma_T>0) < \eta(q;\sigma_T=0), \qquad q<1,
\]
while the limit is unchanged.

\begin{figure}[ht]
  \centering
  \includegraphics[,width=0.7\linewidth]{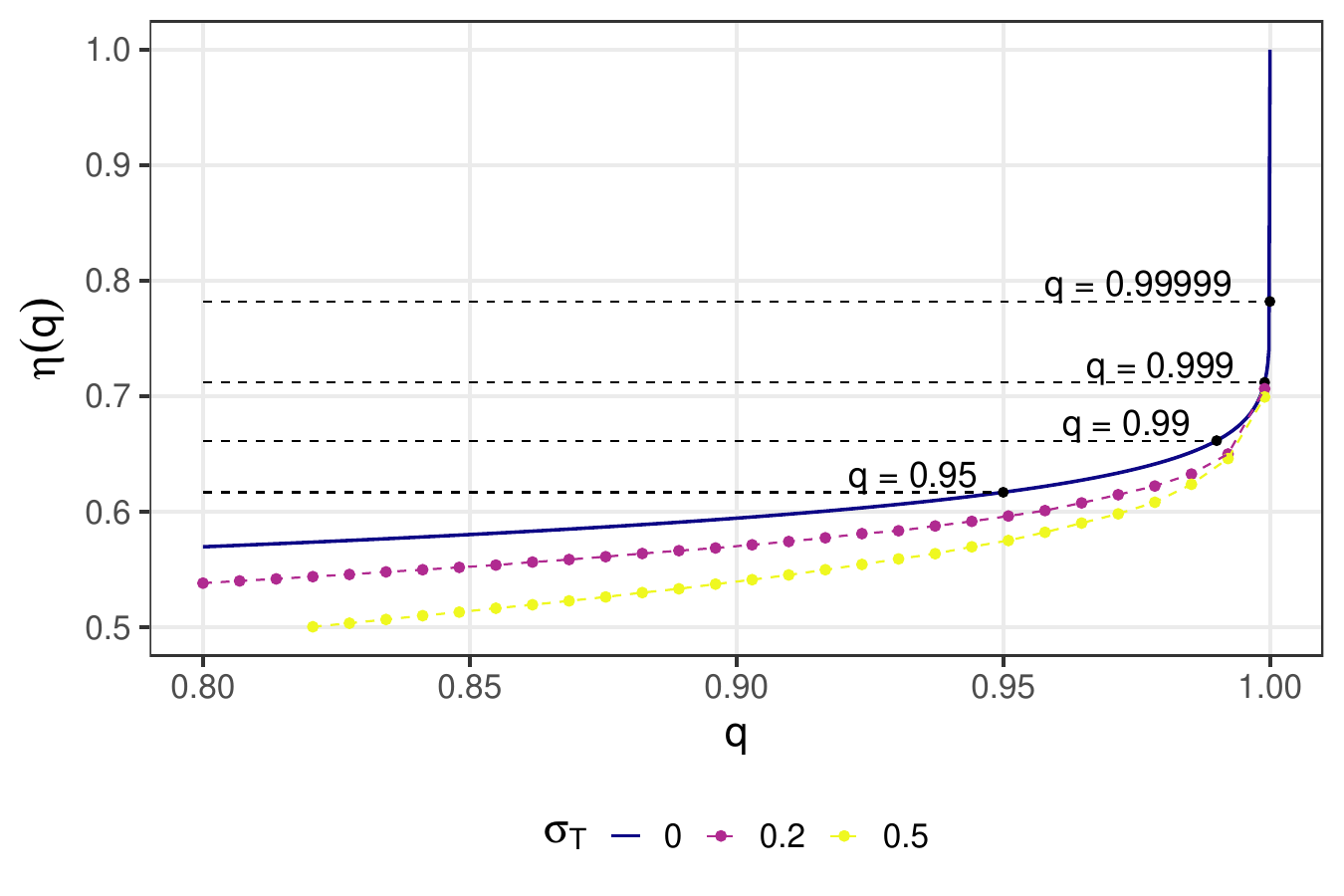}
  \caption{Residual tail-dependence curves $\eta(q)$ for $w=0$ in the asymptotically dependent case with $\alpha_1=\alpha_2=0$ and $\alpha=5$ (so that $\xi_1=\xi_2=0.2$ and $\chi=1/32$). Blue: theoretical curve with $\sigma_T=0$; Purple and yellow: empirical curves with respectively $\sigma_T=0.2$ and $\sigma_T=0.5$.}

  \label{fig:eta_w0}
\end{figure}

\paragraph{Inflection point in the $\chi(q)$ curve.}
The $\chi(q)$ curves in Figure~\ref{fig:chi_curve_sub} display an inflection around low to moderate quantiles. This feature is a consequence of the L-shaped support $\mathcal L$. As $q$ increases, the conditioning set in  the $\chi(q)$ curve definition~\eqref{eq:chi_curve} progressively aligns with the geometry of the support. The transition around the marginal threshold creates the observed change in curvature, which is therefore an intrinsic property of the model rather than an artifact. Beyond this point, the curve continues smoothly toward its limiting value. 

The shift vector $\vS$ (as defined in~\eqref{eq:S_T_definition}) drives this behaviour by ensuring the model is supported on $\mathcal L$, capturing observations where one component is extreme while the other remains moderate. Consequently, $\vS$ governs the finite-sample dependence and the shape of the $\chi(q)$ curve, while leaving the asymptotic coefficients $(\chi, \eta)$ unchanged.


	\section{Neural Bayes Estimation}\label{sec:estimation}
	Inference for complex extreme value models is generally challenging, as their likelihood functions often lack closed-form expressions or require computationally expensive numerical approximations. In such scenarios, likelihood-free methods relying on simulation from the model become particularly valuable; examples include approximate Bayesian computation \citep{sisson_overview_2018}, synthetic likelihoods \citep{wood_statistical_2010}, and pseudo-marginal methods \citep{andrieu_pseudo-marginal_2009}. More recently, neural network–based likelihood-free approaches, such as the neural Bayes estimator (NBE) introduced in \citet{sainsbury-dale_likelihood-free_2024}, have been increasingly adopted because of their computational efficiency (after initial training) and their ability to learn informative summary statistics automatically, rather than relying on manually chosen ones. Recent work by \citet{andre_neural_2025} further demonstrated the effectiveness of the NBE framework for inference in complex bivariate extreme-value models.


    The NBE approximates the Bayes estimator using a neural network trained on simulated data. Unlike classical inference methods that rely on explicit likelihoods or carefully selected summary statistics, the NBE directly learns a mapping from the data to the parameter estimates, automatically identifying informative features.

    Let $\Theta \subseteq \mathbb{R}^{p}$ denote the parameter space and $f(\, \cdot \mid \vthe)$ the density of the sBGP distribution, depending on the unknown parameter vector $\vthe \in \Theta$.
    Following \citet{sainsbury-dale_likelihood-free_2024}, the Bayes estimator $\hat{\bm{\theta}}$ minimizes the Bayes risk 
    $\int_{\Theta} R(\vthe, \hat{\vthe})\,\pi(\text{d}\vthe )$, where $R(\vthe, \hat{\vthe}) = \mathbb{E}_{\vY \sim f(\, \cdot \mid \vthe)}[L(\vthe, \hat{\vthe}(\vY))]$ is the expected loss under $f(\cdot \mid \vthe)$, and $\pi$ is a prior distribution on $\vthe$.
    Since this quantity is generally intractable, the NBE approximates it by empirical risk minimization over simulated observations
    \[
    \vY^{(k)}_1,\ldots, \vY^{(k)}_{M_k}  \stackrel{\text{iid}}{\sim} f(\cdot \mid \vthe^{(k)}), \quad
    \vthe^{(k)} \sim \pi(\cdot), \quad k = 1,\ldots, K
    \]
    where $\vY^{(k)}_i = (Y^{(k)}_{i1},Y^{(k)}_{i2})$ for $i = 1,\ldots, M_k$ and the sample size $M_k$ is randomly drawn from a discrete distribution to improve generalization across datasets of varying size.
    The neural network represents the estimator $\hat{\vthe}(\cdot)$ and is trained by minimizing the empirical squared loss
     \begin{equation}\label{eq:classical_loss}
    \hat{\vthe}_{\text{NBE}}= \arg\min_{\hat{\vthe}\in\mathcal{F}} \left\| D^{-1}(\vthe-\hat{\vthe}(\vY))\right\|_2^2 
    = \arg\min_{\hat{\vthe}\in\mathcal{F}} \sum_{k=1}^{K} \left\| D^{-1}\left(\vthe^{(k)}- \hat{\vthe}(Y^{(k)})\right)\right\|^2.
    \end{equation}
    where $Y^{(k)}$ is the $M_k\times2$ matrix with rows $\vY^{(k)}_1,\ldots,\vY^{(k)}_{M_k}$ and $\mathcal{F}$ denotes the class of functions represented by the neural network. To account for the different scales of the parameters, we normalize the loss using a diagonal scaling matrix $D$, whose entries correspond to typical magnitudes (or empirical standard deviations) of each component of $\vthe$.
    
    In addition to the classical loss defined in \eqref{eq:classical_loss}, we also consider a penalized estimator, denoted $  \hat{\vthe}_{\text{NBE}}^{(p)}$, which minimizes the loss function
    \begin{equation}\label{eq:pen_loss}
    L(\vthe,\hat{\vthe},\vY) = \left\| D^{-1}(\vthe-\hat{\vthe}(\vY))\right\|_2^2+\lambda D_{11}^{-2}\left(\hat{\eta}(\vY) - \hat{\eta}_{emp}(\vY)\right)^2.
    \end{equation}
    The additional term penalizes discrepancies between the estimated residual tail dependence coefficient $\hat{\eta}(\vY)$ and its empirical counterpart $\hat{\eta}_{emp}(\vY)$, where $\hat{\eta}_{emp}(\vY)$ denotes a Hill-type estimator computed from pseudo-observations $U_{ij}=R_{ij}/(n+1)$, with $R_{ij}$ the rank of $X_{ij}$ among $(X_{1j},\ldots,X_{nj})$. Setting $Z_i = \min\{(1-U_{i1})^{-1},(1-U_{i2})^{-1}\}$,
    \begin{equation}\label{eq:eta_Hill}
    \hat{\eta}_{emp}(\vY) = \frac{1}{k} \sum_{j=1}^k \log \frac{Z_{(n-j+1)}}{Z_{(n-k)}},
    \end{equation}
    where $k = \lfloor n/10 \rfloor$. This acts as a regularization toward a standard non-parametric estimator of $\eta$, improving stability when $\eta$ is weakly identifiable. In practice, we set $\lambda = 0.5$.

    Permutation invariance across replicates is enforced through the DeepSets architecture \citep{zaheer_deep_2018}. Each observation is first transformed by a feature map $\psi(\cdot)$, the resulting representations are aggregated via the mean, and the summary vector is mapped to parameter estimates by $\phi(\cdot)$:
    \[
    \hat{\vthe}(Y^{(k)})=\phi\left(\frac{1}{M_k}\sum_{i=1}^{M_k}\psi\left(\vY^{(k)}_i\right)\right).
    \]
    Once trained, the NBE delivers fast, amortized parameter inference without further simulation.

    \paragraph{Prior choice.}
    The performance of amortised, simulation-based inference depends critically on the choice of the parameter prior, which shapes the synthetic training distribution and determines which regions of the parameter space are learned accurately. More generally, accuracy depends not only on the number of simulations but also on their quality and relevance to the inferential task; well-designed synthetic data improves generalisation in deep models \citep{he_data_2019, ding_case_2019}.
	
    For the parameters $(\beta_1,\beta_2,\sigma_T,w)$ the choice is straightforward, as they are only weakly interdependent, and the distribution of $\vS$ is chosen as in \eqref{eq:S_T_definition}. In contrast, $(\alpha,\alpha_1,\alpha_2)$ determine the tail indices $\xi_j = 1/(\alpha+\alpha_j)$ and the residual tail dependence coefficient $\eta$. Unrealistically large tail indices generate synthetic samples with excessively large values, causing the training procedure to fail. Hence, we place the prior directly on $(\xi_1,\xi_2,\eta)$, the real quantities of interest, and recover $(\alpha,\alpha_1,\alpha_2)$ via the transformations in Proposition~\ref{prop:tail_dep}.

    To enforce the constraint linking $\xi_1,\xi_2$ and $\eta$ (Proposition~\ref{prop:tail_dep}\,\ref{prop:tail_dep_eta_bounds}) while maintaining a symmetric treatment of the margins (i.e. the same univariate prior for $\xi_1$ and $\xi_2$), we sample $\eta$ and one of the tail indices freely, and then draw the remaining tail index conditionally. 
    This strategy avoids drawing $\xi_1, \xi_2$ independently and then $\eta$ conditionally, which would constrain the admissible range of $\eta$ too strongly when $\xi_1$ and $\xi_2$ are substantially different. However, asymptotic dependence corresponds to the limiting case $\alpha_1=\alpha_2=0$, such a prior never produces samples with $\eta = 1$, causing the network to underperform when tail dependence is strong. To address this, we adopt a mixture prior that draws from the scheme described above (the \emph{AI scheme}) with probability $0.9$ and fixes $\eta=1$ in the remaining $10\%$ of cases (the \emph{AD scheme}). This ensures that the training set includes a non-negligible proportion of asymptotically dependent configurations, allowing the estimator to learn effectively across the full dependence spectrum.

    Formally, the sampling scheme (using a conditional prior for $\xi_2$) is
    \[
    \pi_{\eta} \sim 0.9 \,\text{Unif}(\tfrac{1}{2},1) + 0.1\,\delta_{\{\eta = 1\}}, 
    \quad 
    \pi_{\xi_1} \sim \text{Unif}(0,\tfrac{1}{2}),
    \quad 
    \pi_{\xi_2 \mid \eta,\xi_1} \sim \text{Unif} \left(\frac{\xi_1(2\eta-1)}{\eta},\, \frac{\xi_1\eta}{2\eta-1}\right),
    \]
    where $\delta_{\{\eta = 1\}}$ denotes the Dirac measure at $\eta=1$. In addition, we take
    \[
    \pi_{\sigma_T} \sim \text{Unif}(0,1),
    \quad 
    \pi_{w} \sim \text{Unif}(0,1),
    \quad
    \pi_{\beta_j} \sim \text{Unif}(0,1000) \quad (j=1,2), 
    \]
    After drawing $(\xi_1,\xi_2,\eta)$, we invert the mapping to $(\alpha,\alpha_1,\alpha_2)$ to generate training samples. The uniform prior $\mathrm{Unif}(0,\tfrac{1}{2})$ on $\xi_1$ and the conditional construction for $\xi_2$ restrict tail indices to realistic values, preventing synthetic samples with excessively heavy tails. Similarly, the prior $\mathrm{Unif}(0,1)$ for $\sigma_T$ was chosen because larger values generate overly dispersed datasets. 
    
    Finally, to improve robustness to varying sample sizes, each synthetic dataset used for training is generated with a sample size drawn from $\mathrm{Unif}(100, 1000)$.

    \paragraph{Network architecture.}\label{par:nbe_arch}
	
    As mentioned earlier, we use a DeepSets-based network. It consists of two components: a feature extractor \(\psi\) applied independently to each replicate, and a final mapping \(\phi\) from the aggregated representation to parameter estimates. The architecture uses ReLU activations throughout, except for softplus activations on positive parameters and sigmoid activations for \(\eta\) and \(w\), constraining them to \([0.5,1]\) and \([0,1]\), respectively. 
    
    The outer (feature-extraction) network $\psi$ consists of four fully connected layers with mapping dimensions $2 \mapsto 64 \mapsto 64 \mapsto 128 \mapsto 128$. The inner mapping $\phi$ takes as input the concatenation of the aggregated representation with a vector of summary statistics $S = \{\hat{\chi}(q)\}$ for $q \in \{0.50, 0.60, 0.70, 0.80, 0.85, 0.90, 0.95, 0.98\}$, where $\hat{\chi}(q)$ denotes the empirical estimator defined in~\eqref{eq:chi_esti}. This results in an input dimension of 136. It then passes through four layers with mapping dimensions $136 \mapsto 128 \mapsto 64 \mapsto 64 \mapsto 7$, with ReLU activations at all hidden layers and parameter-specific activations at the output layer.
This concatenation of learned summary features and expert-chosen features follows a similar strategy to that employed in \citet{delloroFlexibleSpaceTime2025}, where empirical $\chi(q)$ values were likewise incorporated as additional expert summary statistics to complement learned representations. The network is trained with the Adam optimizer (learning rate $10^{-4}$, batch size 32).

\begin{rem}\label{rem:bound_prior_error}
Due to the prior specification and the sigmoid activations, the estimator cannot attain boundary values of the supports of $\eta$ and $w$. Estimates are therefore biased near the boundaries; for instance, $\eta$ tends to be underestimated at $1$ and overestimated at $1/2$, with similar effects for $w$ at $0$ and $1$.
\end{rem}


\section{Simulation Studies}\label{sec:simstudy}

We evaluate the performance of the Neural Bayes Estimator for the sBGP distribution introduced in Section~\ref{sec:bivariate}, with a focus on estimation accuracy and the recovery of extremal dependence features. The estimation procedure was implemented using the \texttt{NeuralEstimators} package \citep{sainsbury-dale_likelihood-free_2024}. Training required approximately one hour on a single NVIDIA Volta V100 GPU (64~GB RAM), while inference takes only a few milliseconds per dataset on a standard laptop.

We examine the estimator’s performance in a fixed-parameter regime, where the true parameter vector remains constant across simulations. This setting allows us to assess both estimation accuracy and uncertainty quantification under different (known) dependence structures. We consider three configurations
\begin{equation}\label{eq:true_theta_param}
\vthe^{(1)} = (3,0,0,20,30,0.1,0.8), \, 
\vthe^{(2)} = (2,1,1,20,30,0.1,0.6), \, 
\vthe^{(3)} = (1,2,2,20,30,0.1,0.2), 
\end{equation}
where $\vthe = (\alpha,\alpha_1,\alpha_2,\beta_1,\beta_2,\sigma_T,w)$. All three cases share the same values for $(\beta_1,\beta_2,\sigma_T)$, so that differences across configurations are entirely driven by the first four parameters, which govern the marginal tail indices, the residual dependence coefficient, and the strength of sub-asymptotic association.

These configurations span distinct regimes of extremal dependence. For $\vthe^{(1)}$, we obtain $\eta=1$, corresponding to asymptotic dependence; combined with $w=0.8$, this yields strong extremal dependence across all quantile levels. For $\vthe^{(2)}$, we have $\eta=0.75$, corresponding to asymptotic independence with positive residual dependence, while $w=0.6$ induces moderate sub-asymptotic association. 
Finally, $\vthe^{(3)}$ yields $\eta=0.6$, corresponding to asymptotic independence with weaker residual dependence, further attenuated at intermediate levels by the small value $w=0.2$. These cases therefore represent, respectively, an asymptotic dependence regime, an intermediate regime with residual tail dependence, and a regime of weak dependence both asymptotically and at sub-asymptotic levels. Figure~\ref{fig:fixed_params_chi_full} (top) shows samples from the sBGP distribution under the three configurations.

For each configuration $\vthe^{(i)}$, $i=1,2,3$, we generate $K = 1000$ independent datasets of size $N = 1000$. We then apply the NBE procedure to obtain a point estimate $\hat{\vthe}^{(i)}_k$ and the associated dependence curve $\hat{\chi}^{(i)}_k(q)$, $k = 1, \ldots, K$. Uncertainty is assessed conditionally on each fitted model via a non-parametric bootstrap with $B = 200$ resamples, yielding empirical confidence intervals for both model parameters and associated dependence measures. The results are summarized below in terms of estimation accuracy of point estimates and calibration of the resulting confidence intervals.

\paragraph{Estimation accuracy.}
Figure~\ref{fig:fixed_params_boxplots_full} displays boxplots of the point estimates $\hat{\vthe}^{(i)}_k$ across the $K$ replicated datasets for each configuration. Overall, the NBE demonstrates good estimation accuracy, with parameter estimates closely centred around their true values. Estimation variability for $\eta$ increases in the second configuration and becomes more pronounced in the third, which is consistent with the progressively weaker sub-asymptotic dependence induced by smaller values of $w$ (see Section~\ref{sec:subasymptotic}). Finally, $\eta$ is never estimated exactly equal to~1 due to the upper-bound constraint discussed in Remark~\ref{rem:bound_prior_error}, which prevents the estimator from attaining boundary configurations. This behaviour is expected and does not affect the overall recovery of the dependence structure.

\begin{figure}[htbp]
	\centering
	\includegraphics[width=\textwidth]{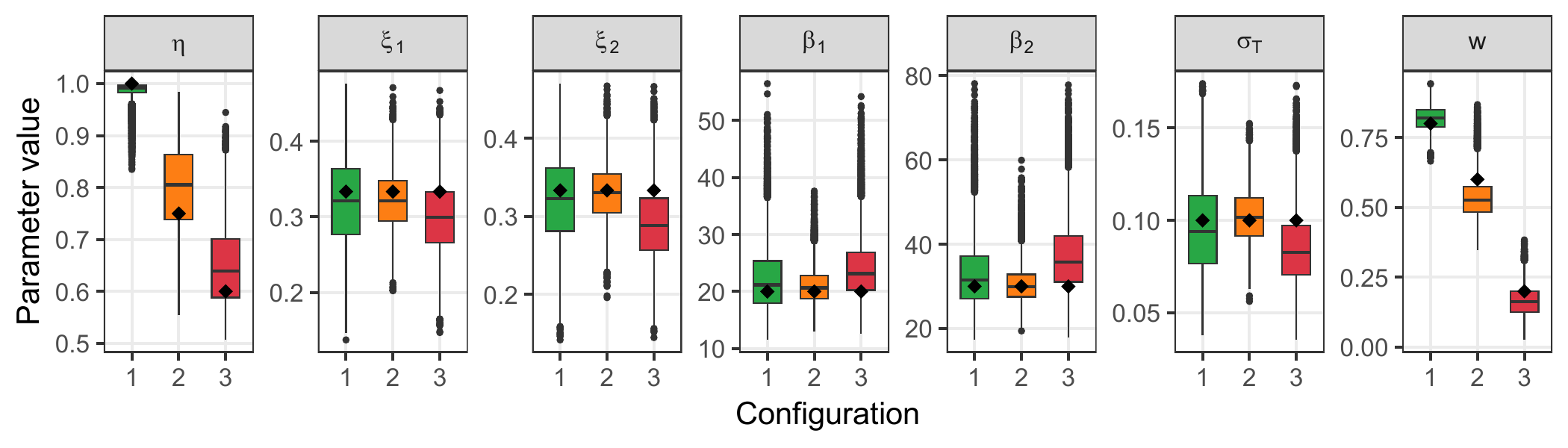}
    \caption{Boxplots for each parameter ($\eta, \xi_{1}, \xi_{2}, \beta_{1}, \beta_{2}, \sigma_{T}, w$) under three dependence settings: $\vthe^{(1)}$ (green), $\vthe^{(2)}$ (orange), and $\vthe^{(3)}$ (red), as defined in~\eqref{eq:true_theta_param}, based on $K = 1000$ samples of size $N = 1000$. Diamond markers indicate true values.}
	\label{fig:fixed_params_boxplots_full}
\end{figure}

To assess whether estimation errors in the asymptotic dependence parameters propagate to sub-asymptotic dependence measures, we also examine recovery of the function $\chi(q)$ over the full range $q \in (0,1]$. For each fitted parameter vector $\hat{\vthe}^{(i)}_k$, we generate a synthetic sample of size $10^4$ and compute the corresponding empirical estimator $\hat{\chi}(q)$ defined in Equation~\eqref{eq:chi_esti}. Since no closed-form expression of $\chi(q)$ is available for the model, this simulation-based evaluation is used throughout the paper to approximate the dependence curves numerically. Reference curves are obtained from large samples of size $10^5$ generated under the true parameters $\vthe^{(i)}$. Figure~\ref{fig:fixed_params_chi_full} displays a random subset of 100 estimated curves for visual clarity. Across all three configurations, the estimated curves closely match the reference dependence structure, indicating that $\chi(q)$ is accurately recovered even in regimes where individual dependence parameters exhibit increased variability.

\paragraph{Confidence intervals.}
Uncertainty is quantified using a non-parametric bootstrap applied conditionally on each fitted model. From each of the $K$ datasets and for each parameter configuration, we generate $B = 200$ bootstrap resamples, refit the NBE, and recompute both parameter estimates and the associated $\chi(q)$ curves. This yields empirical $95\%$ confidence intervals for model parameters and associated dependence measures. Table~\ref{tab:ci_summary} reports empirical coverage probabilities and mean confidence interval widths for a representative subset of parameters across the three configurations.

Overall, the confidence intervals are well calibrated, with empirical coverage close to the nominal level for most parameters. In the first configuration, where $\eta = 1$, coverage for $\eta$ is not meaningful since the true value lies on the boundary of the parameter space and cannot be attained by the estimator (Remark~\ref{rem:bound_prior_error}); in this case, uncertainty is more appropriately assessed through $\chi$. In the remaining configurations, coverage for $\eta$ improves substantially. Interval widths increase as $w$ decreases, reflecting the higher uncertainty associated with weaker sub-asymptotic dependence (see Section~\ref{sec:subasymptotic}). The marginal and scale parameters $(\xi_1, \xi_2, \beta_1, \beta_2, \sigma_T)$ exhibit stable coverage across configurations, with interval widths adapting sensibly to the scale of each parameter. Finally, Table~\ref{tab:ci_summary} shows that empirical coverage of the $\chi(q)$ curves remains high across quantiles, indicating reliable uncertainty quantification for tail dependence measures even in settings where parameter-level intervals widen.

\begin{figure}[htbp]
 \centering 
 \includegraphics[width=0.98\textwidth]{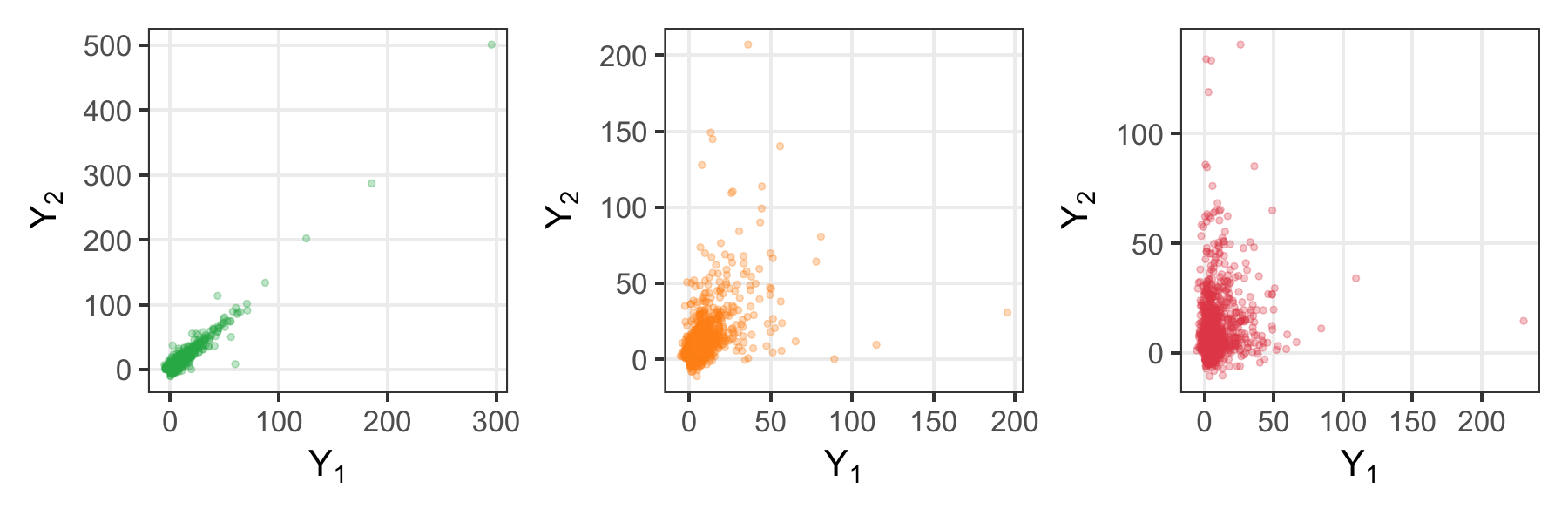}
 \includegraphics[width=0.98\textwidth]{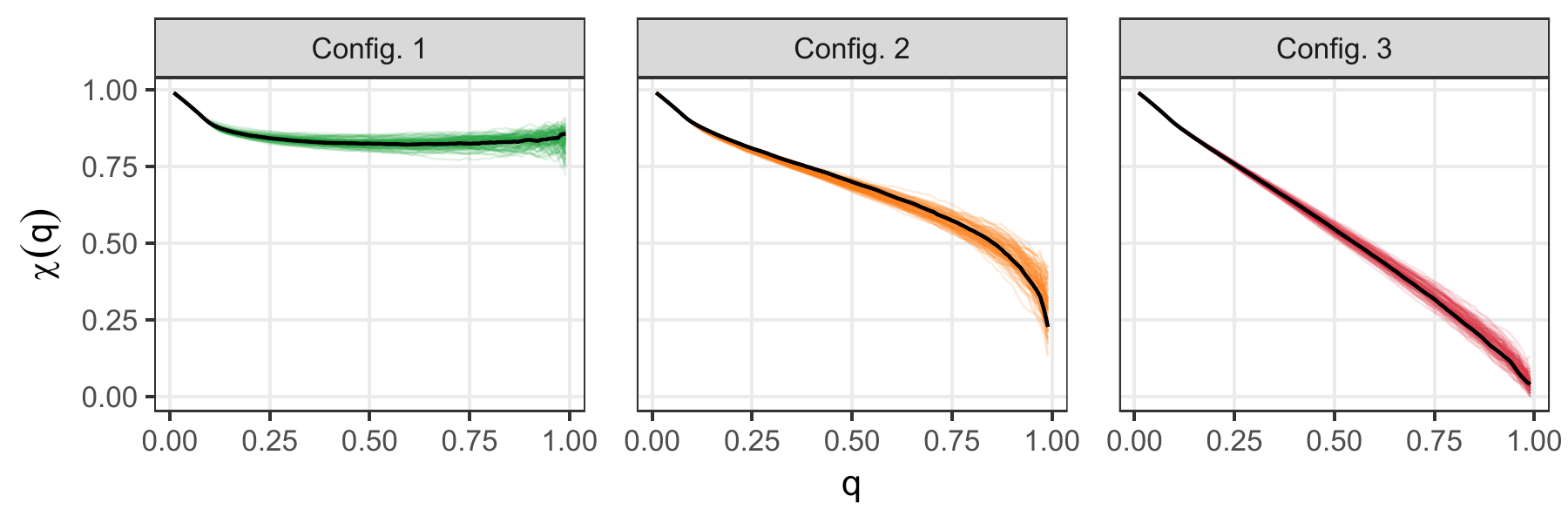}
 \caption{Top: samples from the sBGP distribution (Definition~\ref{def:biv_model}) with parameters $\vthe^{(1)}$ (green), $\vthe^{(2)}$ (orange), and $\vthe^{(3)}$ (red) (see~\eqref{eq:true_theta_param}). Bottom: estimated $\chi(q)$ curves. Solid lines correspond to reference curves empirically estimated from samples of size $10^5$ generated under the true parameters. Coloured lines show 100 estimated curves based on samples of size $10^4$ generated under fitted parameter values.}
 \label{fig:fixed_params_chi_full}
\end{figure}

\begin{table}[htbp]
\centering
\begin{tabular}{lcccccccccccc}
\toprule
 & \multicolumn{3}{c}{Config. 1} & \multicolumn{3}{c}{Config. 2} & \multicolumn{3}{c}{Config. 3} \\
\cmidrule(lr){2-4} \cmidrule(lr){5-7} \cmidrule(lr){8-10}
$\hat{\theta}_j$ & True & Cov. & Width & True & Cov. & Width & True & Cov. & Width \\
\midrule
$\hat{\eta}$        & 1.00 & NA   & 0.06 & 0.75 & 0.94 & 0.24 & 0.60 & 0.92 & 0.22 \\
$\hat{\xi}_{1}$     & 0.33 & 0.90 & 0.15 & 1.50 & 0.90 & 0.10 & 0.33 & 0.79 & 0.13 \\
$\hat{\xi}_{2}$     & 0.33 & 0.90 & 0.15 & 1.50 & 0.92 & 0.09 & 0.33 & 0.72 & 0.13 \\
$\hat{\beta}_{1}$   & 20.00 & 0.90 & 15.60 & 20.00 & 0.93 & 8.88 & 20.00 & 0.81 & 14.80 \\
$\hat{\beta}_{2}$   & 30.00 & 0.91 & 22.40 & 30.00 & 0.97 & 12.00 & 30.00 & 0.73 & 22.60 \\
$\hat{\sigma}_{T}$  & 0.10 & 0.90 & 0.07 & 0.10 & 0.99 & 0.04 & 0.10 & 0.80 & 0.06 \\
$\hat{w}$           & 0.80 & 0.92 & 0.12 & 0.60 & 0.76 & 0.21 & 0.20 & 0.81 & 0.15 \\

\midrule
$\chi(0.50)$ & 0.82 & 0.95 & 0.05 & 0.70 & 0.87 & 0.05 & 0.55 & 0.98 & 0.05 \\
$\chi(0.70)$ & 0.82 & 0.96 & 0.06 & 0.60 & 0.96 & 0.07 & 0.37 & 0.98 & 0.08 \\
$\chi(0.80)$ & 0.83 & 0.99 & 0.07 & 0.54 & 0.99 & 0.09 & 0.27 & 0.96 & 0.09 \\
$\chi(0.90)$ & 0.84 & 0.98 & 0.09 & 0.45 & 0.98 & 0.12 & 0.16 & 0.93 & 0.10 \\
$\chi(0.95)$ & 0.84 & 0.99 & 0.11 & 0.37 & 0.96 & 0.16 & 0.10 & 1.00 & 0.10 \\
$\chi(0.99)$ & 0.85 & 1.00 & 0.18 & 0.23 & 1.00 & 0.25 & 0.03 & 1.00 & 0.10 \\
\bottomrule
\end{tabular}
\caption{Empirical coverage probability (Cov.) and mean confidence interval width (Width) for model parameters and tail dependence curves $\chi(q)$ under three fixed configurations (see~\eqref{eq:true_theta_param}), based on $N=1000$, $K=1000$, $B=200$ bootstrap resamples per estimate, together with the true values.}
\label{tab:ci_summary}
\end{table}

All results presented in this section are based on the non-penalized estimator defined in~\eqref{eq:classical_loss}. The impact of the penalized version, as well as additional experiments based on randomly generated parameter configurations, are reported in the Supplementary Material.

	\section{Application to extreme precipitation in Belgium}\label{sec:application}
	
	We illustrate the capabilities of the sBGP model for extreme precipitation over Belgium, using gridded data from the ERA5 reanalysis dataset \citep{hersbach2020era5}. This dataset provides hourly precipitation estimates on a $0.25^\circ$ grid from 1950 to the present, derived from a global reanalysis of meteorological observations. Our aim is to assess the model's ability to capture a range of extremal dependence behaviours, from strong asymptotic dependence to near-independence, while also accurately representing marginal distributions. We use the penalized version of the neural Bayes estimator defined in \eqref{eq:pen_loss}. This penalization is optional and does not alter the underlying model, but stabilizes inference in practice; results based on the non-penalized estimator are given in the Supplementary Material.

	From the full temporal coverage (1950–-2024), we compute weekly maxima of daily precipitation (in millimetres) for all grid points in the rectangle $[0^\circ,7^\circ]$E $\times$ $[47^\circ,52^\circ]$N; this region is shown in Figure~\ref{fig:stations_squares_map} of the Supplementary material.  This block size is chosen to mitigate short-range temporal dependence. We restrict attention to the autumn season (21 September–21 December), which is associated with more homogeneous large-scale precipitation processes.    At each grid point, we extract threshold exceedances relative to the empirical $70^{\text{th}}$ percentile.	Other thresholds were also examined and led to qualitatively similar fits and interpretations; see Section~\ref{app:be_rainfall_pen} in the Supplementary Material.
	
	We start with a reference grid point centred at ($4.25^\circ$ E, $50.75^\circ$ N) that contains Brussels. We first consider all pairs formed between this reference location and grid points within the surrounding region, and fit the proposed bivariate model defined in Definition~\ref{def:biv_model} to each pair. This allows us to examine how the estimated model parameters vary across space. Next, we select three representative pairs within the region, and apply our model to evaluate the extremal dependence curve $\chi(q)$ and the marginal tail distributions. Finally, we compare the fit of our model with that of a multivariate GP distribution.

    \subsection{Spatial patterns of bivariate precipitation extremes}
    For each of the 506 pairs consisting of Brussels and another grid point in the region, we fit the sBGP model to the threshold exceedances. Figure~\ref{fig:spatial_pattern_param} illustrates several key parameters as spatial fields, i.e., each map reports, at a given location, the parameter estimate obtained for the pair between Brussels and that grid point. Maps for the remaining parameters are provided in Section~\ref{app:be_rainfall_pen} of the Supplementary Material. 

    In the following, the extremal dependence coefficient $\hat{\eta}$ is displayed only for locations where $\hat{w}>0.5$; as illustrated in Figure~\ref{fig:chi_curve_sub}, estimates of $\eta$ become unreliable when $w$ is small.
    
    The parameters $\hat{w}$ and $\hat{\sigma}_T$ vary coherently with distance from the reference grid point: fitted pairs tend to display weaker dependence as distance increases, with $\hat{w}$ decreasing and $\hat{\sigma}_T$ increasing.
    For locations with sufficiently large $\hat{w}$, the estimated extremal dependence coefficient $\hat{\eta}$ tends to decrease with distance from the reference grid point, reflecting a gradual weakening of extremal dependence.

    \begin{figure}[htbp]
		\centering
		\begin{minipage}{0.35\textwidth}
			\centering
			\includegraphics[width=\linewidth, trim={2cm 0 2cm 0},clip]{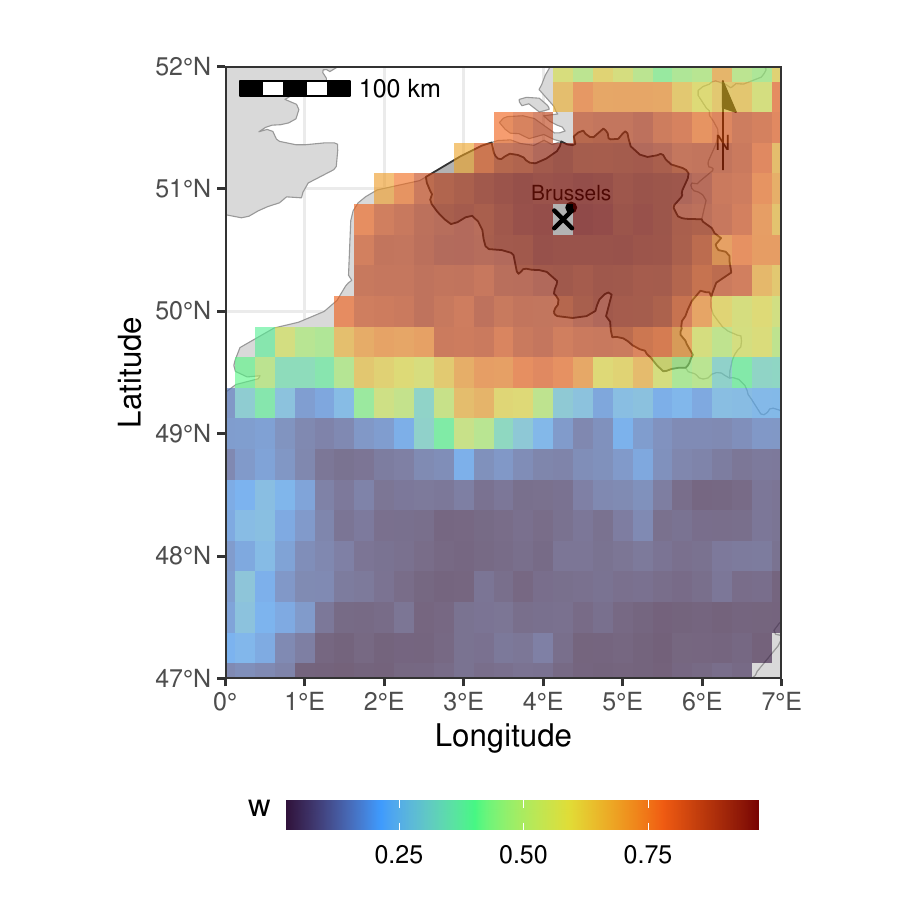}
            
		\end{minipage}\hfill
		\begin{minipage}{0.32\textwidth}
			\centering
			\includegraphics[width=\linewidth, trim={2.5cm 0 2.5cm 0},clip]{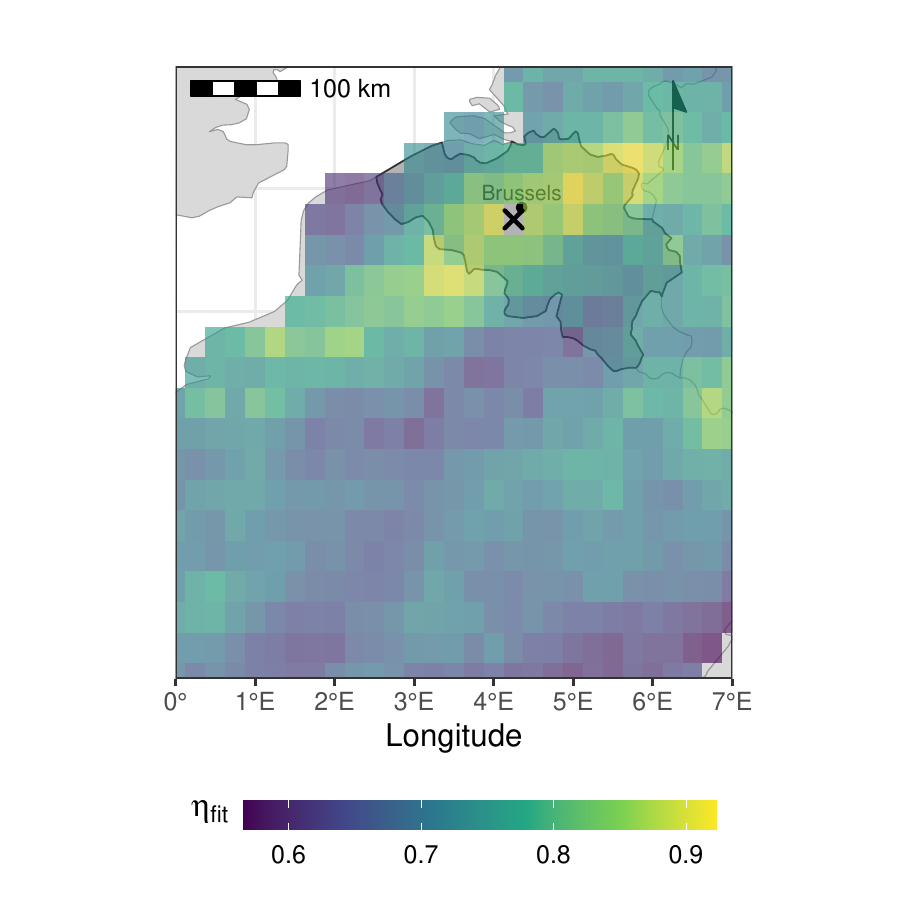}
		\end{minipage}\hfill
		\begin{minipage}{0.32\textwidth}
			\centering
			\includegraphics[width=\linewidth, trim={2.5cm 0 2.5cm 0},clip]{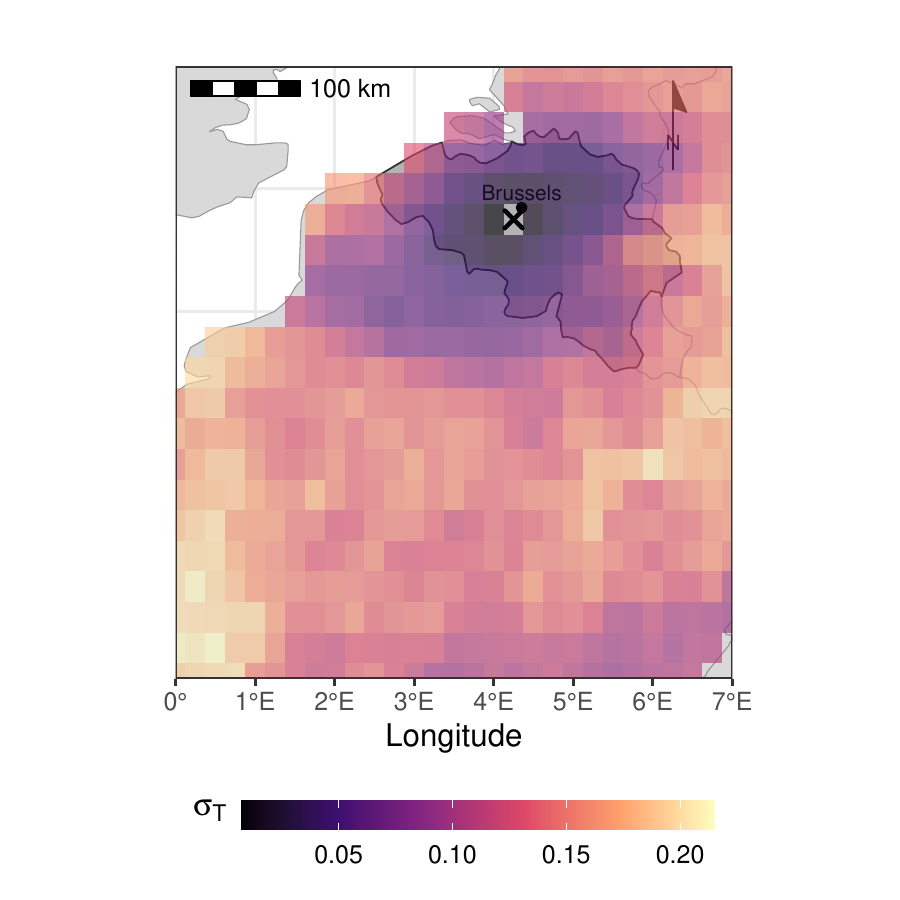}
		\end{minipage}		
		\caption{Spatial fields of estimated dependence parameters for bivariate precipitation extremes between Brussels (reference location, black cross) and each grid point. Each panel displays the parameter estimate obtained from fitting the sBGP model to threshold exceedances of the corresponding bivariate pair. From left to right: $\hat w$, $\hat\eta$, and $\hat\sigma_T$.}
		\label{fig:spatial_pattern_param}
	\end{figure}

\subsection{Diagnostics for selected pairs}
We select two pairs to illustrate different dependence regimes: (Brussels, Nivelles) and (Ostend, Spa). These pairs differ in distance between sites, inland versus coastal position, and regional topographic exposure. The selected grid points are shown in Section~\ref{app:be_rainfall_pen} of the Supplementary Material, with corresponding exceedances displayed in Figure~\ref{fig:joint_distrib}.

For all $t = 1, \ldots, T$, let
\[
\vc{Z}_t^{(i)} = 
\begin{cases}
(Z_t^{\text{Bru}}, Z_t^{\text{Niv}}), & i=1,\\
(Z_t^{\text{Ost}}, Z_t^{\text{Spa}}), & i=2,
\end{cases}
\]
denote the weekly rainfall observations. 
\begin{figure}[htbp]
  \centering
  \begin{minipage}{0.49\textwidth}
    \centering
    \includegraphics[width=\linewidth]{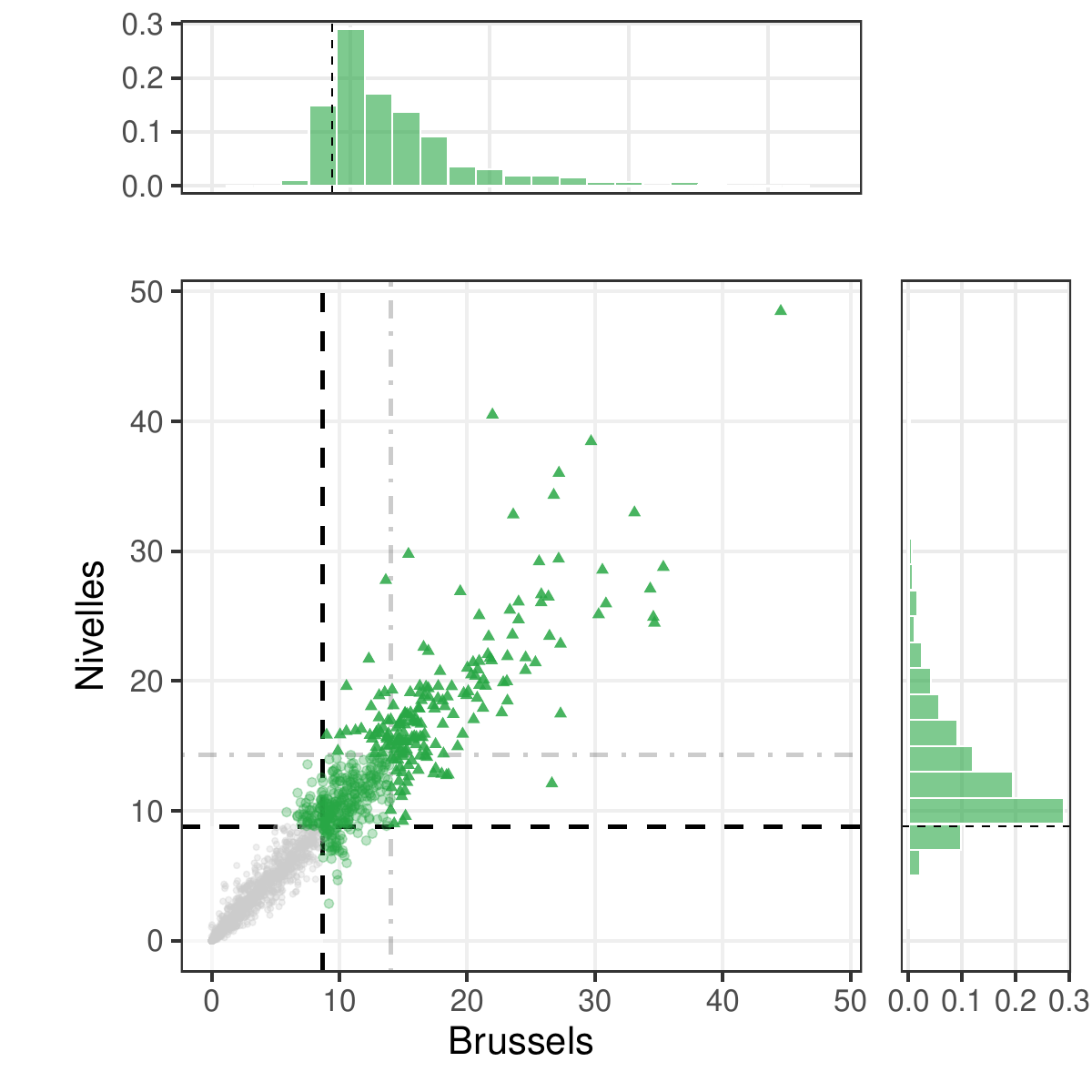}
  \end{minipage}\hfill
  \begin{minipage}{0.49\textwidth}
    \centering
    \includegraphics[width=\linewidth]{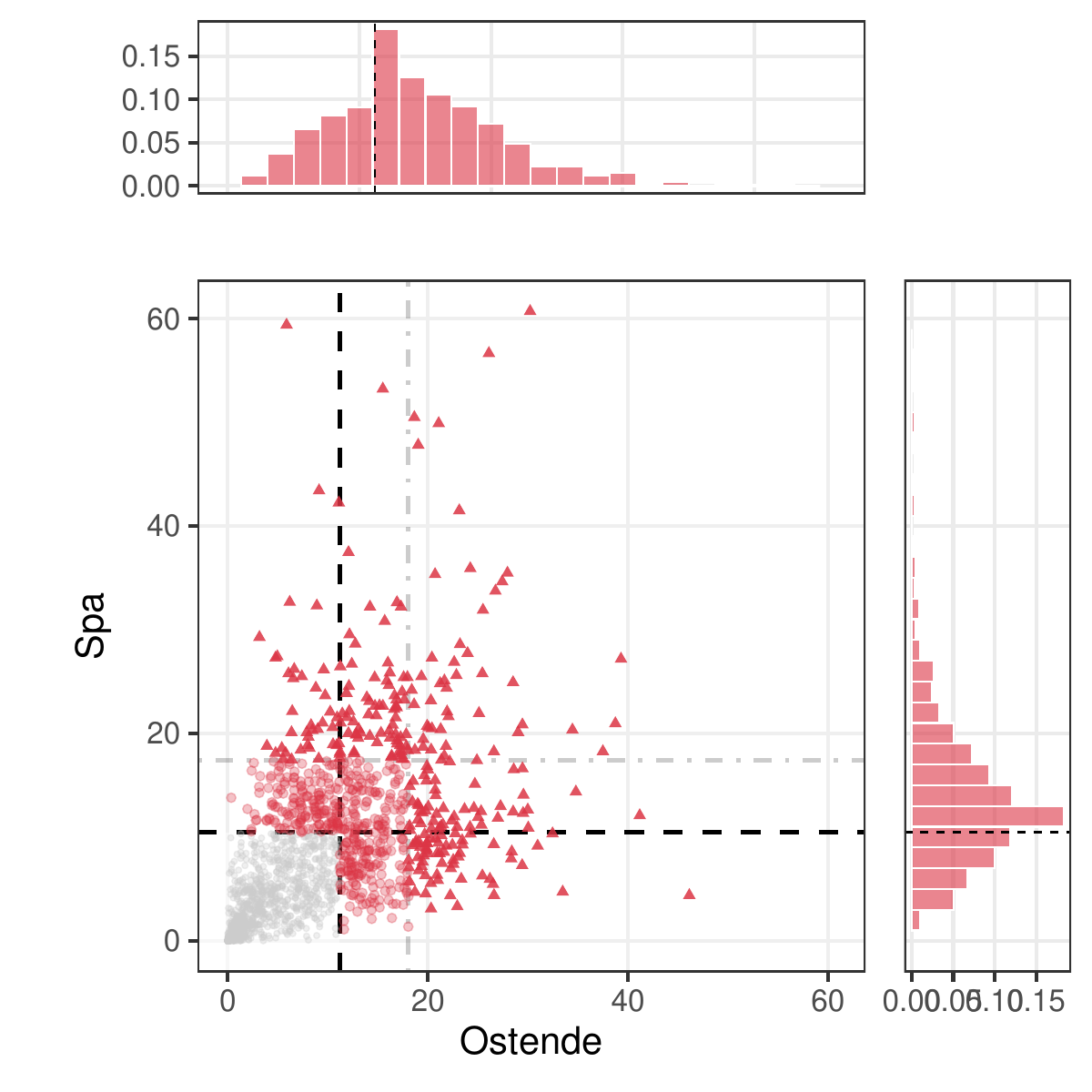}
  \end{minipage} 
  \caption{Bivariate scatterplots of threshold exceedances of weekly precipitation maxima for Brussels–Nivelles and Ostend–Spa. 
  Green points denote exceedances above the 70th percentile (see~\eqref{eq:Ztilde_def}), used for fitting the sBGP model, and green triangles denote exceedances above the 90th percentile, used for the comparison with the bivariate GP model in Section~\ref{sec:mgpd}. Black and gray dashed lines indicate the marginal 70th- and 90th-percentile thresholds, respectively. The marginal histograms display the distributions of marginal exceedances above the 70th percentile.}
  \label{fig:joint_distrib}
\end{figure}

Let
\begin{equation}\label{eq:Ztilde_def}
\begin{aligned}
\tilde{\mathcal{Z}}^{(1)}_{0.7} 
&= \{(Z_t^{\text{Bru}}, Z_t^{\text{Niv}}): 
\hat{F}_{\text{Bru}}(Z_t^{\text{Bru}}) > 0.7 
\;\text{or}\; 
\hat{F}_{\text{Niv}}(Z_t^{\text{Niv}}) > 0.7\}, \\
\tilde{\mathcal{Z}}^{(2)}_{0.7} 
&= \{(Z_t^{\text{Ost}}, Z_t^{\text{Spa}}): 
\hat{F}_{\text{Ost}}(Z_t^{\text{Ost}}) > 0.7 
\;\text{or}\; 
\hat{F}_{\text{Spa}}(Z_t^{\text{Spa}}) > 0.7\}.
\end{aligned}
\end{equation}
denote the subsets of exceedances for the pairs Brussels–Nivelles and Ostend–Spa, respectively. 
Here, $\hat{F}_j$ denotes the empirical cumulative distribution function of station $j$. At this threshold, the numbers of exceedances are 522 for Brussels–Nivelles and 671 for Ostend–Spa.

We fit the sBGP model to the exceedances of the two selected pairs, $\tilde{\mathcal{Z}}^{(1)}_{0.7}$ and $\tilde{\mathcal{Z}}^{(2)}_{0.7}$. Parameter estimates with 95\% confidence intervals are reported in Table~\ref{tab:pen_rep_pairs}. Marginal tail parameters $(\xi_j, \beta_j)$ vary moderately across locations. 
The dependence parameters, however, display clear contrasts between the two regimes: for Brussels–Nivelles, $\hat{\eta} = 0.85$ with $\hat{w} = 0.95$; for Ostend–Spa, $\hat{\eta} = 0.72$ with $\hat{w} = 0.16$. The shift parameter $\hat{\sigma}_T$ also differs substantially between the two pairs, from $0.03$ for the nearby inland pair to $0.14$ for the distant coastal--inland pair. 
Results for alternative thresholds are reported in the Supplementary Material.

\begin{table}[htbp]
\centering
\begin{tabular}{lcc}
\toprule
 & Brussels--Nivelles & Ostend--Spa \\
\midrule
$\eta$      & 0.85 (0.72, 0.93) & 0.72 (0.64, 0.83) \\
$\xi_1$     & 0.11 (0.09, 0.15) & 0.14 (0.11, 0.17) \\
$\xi_2$     & 0.10 (0.08, 0.14) & 0.14 (0.11, 0.18) \\
$\beta_1$   & 49.46 (35.32, 62.48) & 41.68 (28.32, 56.27) \\
$\beta_2$   & 61.54 (39.39, 73.06) & 46.79 (35.46, 60.85) \\
$\sigma_T$  & 0.03 (0.02, 0.04) & 0.14 (0.10, 0.20) \\
$w$         & 0.95 (0.91, 0.96) & 0.16 (0.08, 0.31) \\
\bottomrule
\end{tabular}
\caption{
Parameter estimates (with 95\% confidence intervals) for the sBGP model (Definition~\ref{def:biv_model}) fitted to the exceedance sets $\tilde{\mathcal{Z}}^{(1)}_{0.7}$ and $\tilde{\mathcal{Z}}^{(2)}_{0.7}$ defined above.
}
\label{tab:pen_rep_pairs}
\end{table}

Overall, the Brussels–Nivelles pair displays pronounced sub-asymptotic dependence ($w \approx 0.95$), while the Ostend–Spa pair shows much weaker sub-asymptotic dependence ($w \approx 0.16$). The latter pair also has a smaller estimated residual tail dependence coefficient. However, as explained in Section~\ref{sec:subasymptotic}, when $w$ is small, the focus should be on sub-asymptotic measures such as $\chi(q)$ or $\eta(q)$.

All dependence diagnostics in this subsection, including $\chi(q)$ curves, are computed from exceedances in $\tilde{\mathcal{Z}}^{(1)}_{0.7}$ and $\tilde{\mathcal{Z}}^{(2)}_{0.7}$,  so that the quantiles $q$ refer to empirical quantiles within each exceedance subset rather than the full sample.  
Figure~\ref{fig:pen_p_chi} presents the estimated $\hat{\chi}_{\tilde{\mathcal{Z}}_{0.7}}(q)$ curves, defined in Equation~\eqref{eq:chi_esti}, obtained from $B=100$ nonparametric bootstrap replications of the exceedance data. For each bootstrap resample, model parameters were re-estimated, yielding a bootstrap sample of parameter values, and the associated $\chi(q)$ curves were evaluated numerically using the same simulation-based procedure as in the simulation study.

Brussels–Nivelles exhibits strong dependence up to around the 0.9 quantile of the exceedance subset $\tilde{\mathcal{Z}}^{(1)}_{0.7}$, after which the $\chi(q)$ curve drops abruptly, revealing a weakening of dependence in the upper tail. Ostend–Spa shows a marked decline at the most extreme quantiles, indicating near-independence in the tail. These distinct patterns are well reproduced by the fitted model. 

For both pairs, the $\chi_{\tilde{\mathcal{Z}}_{0.7}}(q)$ curves display visible discontinuities at low quantiles. This feature reflects the censoring inherent in the L-shaped support of the model, where at least one variable remains below its threshold. The transition point corresponds to the proportion of observations that are not simultaneously extreme in both margins. A detailed discussion of this mechanism is provided in Section~\ref{sec:subasymptotic}.

\begin{figure}[htbp]
  \centering
  \includegraphics[width=0.95\textwidth]{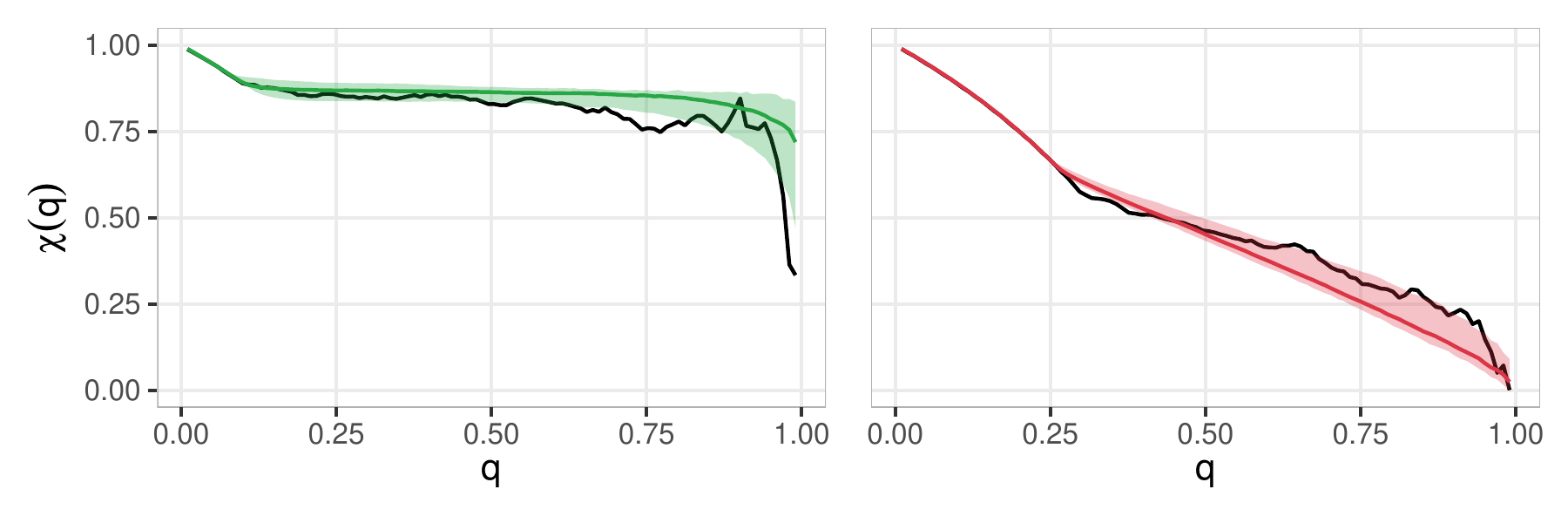}
  \caption{Estimated $\chi_{\tilde{\mathcal{Z}}_{0.7}}(q)$ curves for Brussels–Nivelles and Ostend–Spa, where $\chi(q)$ is estimated using~\eqref{eq:chi_esti}.  Quantiles $q$ are defined relative to the exceedance subset $\tilde{\mathcal{Z}}_{0.7}$ (i.e., within the exceedance region, $u_j = F_j^{-1}(0.7)$).  The black line shows the empirical $\chi_{\tilde{\mathcal{Z}}_{0.7}}(q)$ from observed exceedances; the green and red lines are the model estimates; the shaded area denotes the 95\% bootstrap confidence region.}
  \label{fig:pen_p_chi}
\end{figure}

Marginal adequacy is assessed via QQ-plots (Figure~\ref{fig:pen_qq_marginals_main}) comparing the empirical exceedance distributions in $\tilde{\mathcal{Z}}^{(1)}_{0.7}$ and $\tilde{\mathcal{Z}}^{(2)}_{0.7}$ to their corresponding fitted marginal models. Overall, deviations from the identity line are minor, indicating satisfactory marginal fits. In particular, the model reproduces well the upper-tail quantiles, which are of primary importance in applications.

\begin{figure}[htbp]
  \centering
  \includegraphics[width=\textwidth]{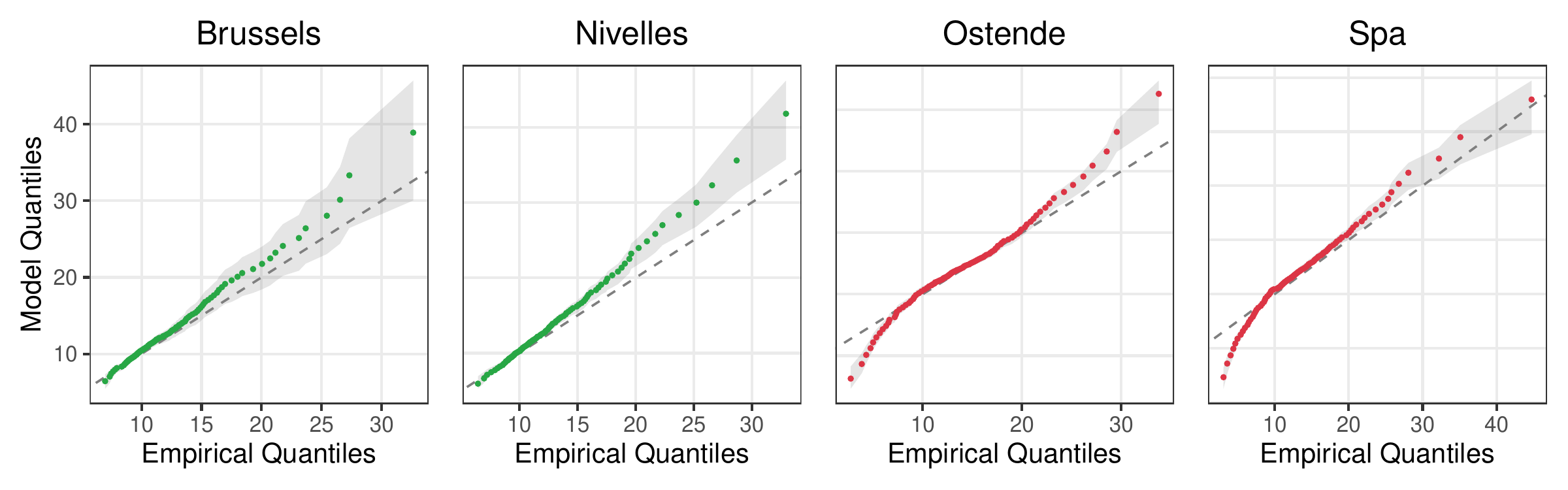}
  \caption{QQ-plots comparing empirical marginal exceedances to fitted marginal distributions for Brussels–Nivelles and Ostend–Spa under the sBGP model (Definition~\ref{def:biv_model}). 
  Each panel corresponds to one margin of the respective pair. 
  All plots are based on exceedances above the marginal threshold ($u_j = F_j^{-1}(0.7)$), with dashed vertical lines marking the threshold values. 
  Theoretical quantiles are shown on the original data scale, obtained by adding the threshold to the modeled exceedances.}
  \label{fig:pen_qq_marginals_main}
\end{figure}

In addition to the QQ-plots, Figure~\ref{fig:density_marginals_main} compares the empirical marginal densities with their fitted counterparts. The estimated model densities, computed from Equation~\eqref{eq:marg_Y_density}, closely follow the empirical histograms, confirming that the marginal fits adequately reproduce both the bulk and the upper tail of the data.

\begin{figure}[htbp]
  \centering
  \begin{minipage}{0.25\textwidth}
    \centering
    \includegraphics[width=\linewidth]{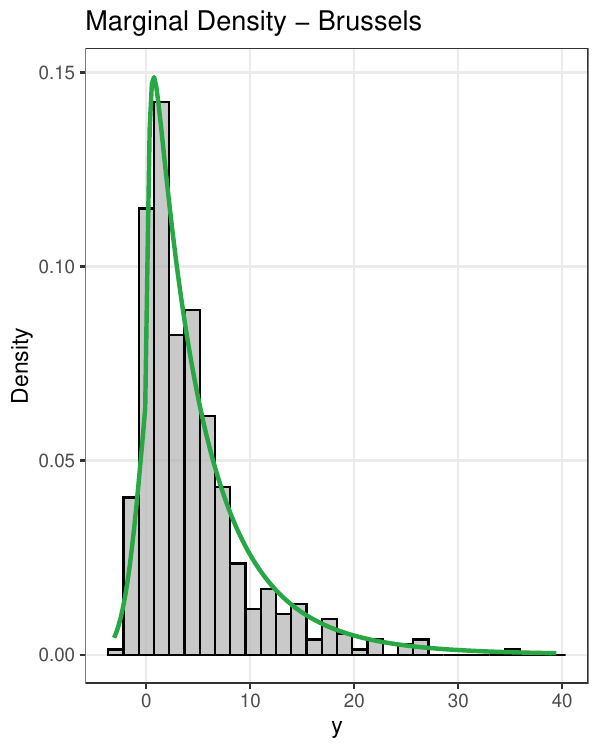}
    \caption*{Bruxelles}
  \end{minipage}\hfill
  \begin{minipage}{0.25\textwidth}
    \centering
    \includegraphics[width=\linewidth]{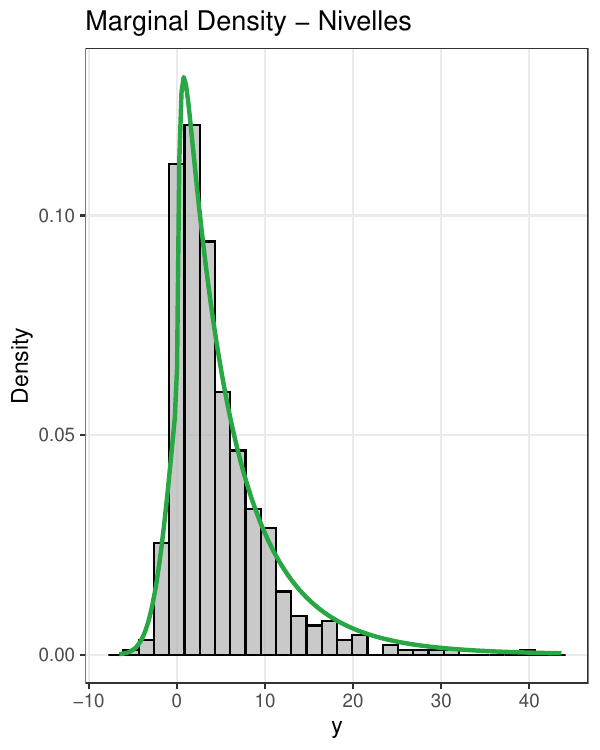}
    \caption*{Nivelles}
  \end{minipage}\hfill
  \begin{minipage}{0.25\textwidth}
    \centering
    \includegraphics[width=\linewidth]{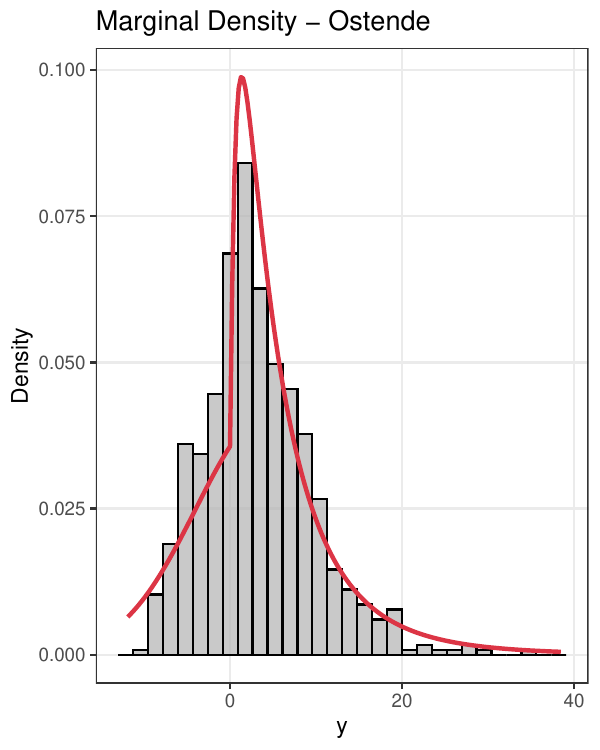}
    \caption*{Ostende}
  \end{minipage}\hfill
  \begin{minipage}{0.25\textwidth}
    \centering
    \includegraphics[width=\linewidth]{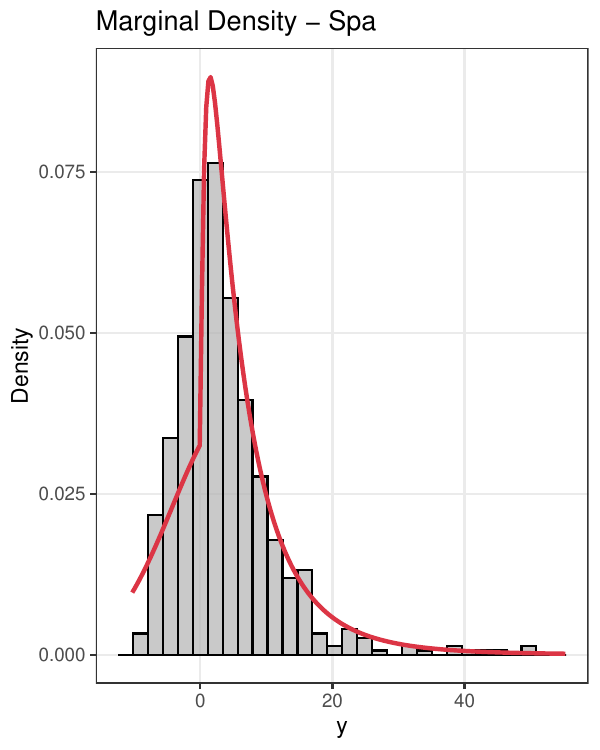}
    \caption*{Spa}
  \end{minipage}
  \caption{Empirical histograms (gray bars) and fitted marginal densities (solid lines) for Brussels–Nivelles and Ostend–Spa. Each panel corresponds to one margin of the respective pair. Model-based densities are computed from the fitted sBGP model (Definition~\ref{def:biv_model}) according to Equation~\eqref{eq:marg_Y_density}, using exceedances above the marginal threshold ($u_j = F_j^{-1}(0.7)$).}
  \label{fig:density_marginals_main}
\end{figure}

\subsection{Comparison with the bivariate GP distribution}

To benchmark the proposed approach, we compare it with the bivariate GP distribution (see Section~\ref{sec:mgpd}), the standard asymptotic model for threshold exceedances under asymptotic dependence. In both models, the shift (spectral) vector $\mathbf{S}$ is defined as $\bm{S}=\max(\bm{T})-\bm{T}$. In our model, $\bm{T}$ is specified as in~\eqref{eq:S_T_definition}, whereas in the bivariate GP model it is specified through an independent Gumbel generator with common shape and scale parameters, i.e.
\[
T_1, T_2 \stackrel{\text{iid}}{\sim} \mathrm{Gumbel}(a_T, b_T);
\]
other choices of $\bm T$ led to similar results.

The bivariate GP distribution is threshold-stable and theoretically valid only at sufficiently high quantiles for asymptotically dependent vectors; we therefore restrict the comparison to the pair Brussels–Nivelles and fit both models using exceedances above the empirical 90th percentile \citep{kiriliouk_peaks_2019}. The corresponding exceedance subset is defined as
\[
\tilde{\mathcal{Z}}^{(1)}_{0.9} 
= 
\{(Z_t^{\text{Bru}}, Z_t^{\text{Niv}}): \hat{F}_{\text{Bru}}(Z_t^{\text{Bru}}) > 0.9 \text{ or } \hat{F}_{\text{Niv}}(Z_t^{\text{Niv}}) > 0.9\},
\]
that is, the set of observations where at least one margin exceeds its 90th-percentile threshold. 
At this higher threshold, the number of exceedances for the Brussels–Nivelles pair is 187.

\paragraph{Results.}

Table~\ref{tab:params_mgpd_full} reports the parameter estimates for both models fitted to the Brussels–Nivelles pair. For the proposed model, the estimated dependence and tail parameters are $\hat{\eta}=0.66$ and $\hat{w}=0.94$, together with moderate marginal tail indices ($\hat{\xi}_1=0.13$, $\hat{\xi}_2=0.13$). The corresponding scale parameters are $\hat{\beta}_1=37.8$ and $\hat{\beta}_2=36.6$, while the shift variability parameter $\hat{\sigma}_T=0.06$ indicates a moderate level of sub-asymptotic variability. For the GP distribution, parameter estimates are reported for comparison and exhibit slightly larger tail indices together with a strong dependence parameter $\hat{\alpha } = 2.4133$, consistent with its asymptotic dependence structure.

Confidence intervals were obtained via bootstrap procedures: a nonparametric bootstrap was used for the proposed model, whereas a parametric bootstrap was employed for the GP distribution. In the latter case, bootstrap samples were generated from the fitted distribution and refitted to obtain the empirical distribution of the estimators. The $\chi(q)$ curves for the GP distribution were then evaluated numerically using the same simulation-based procedure as described in the simulation study, from which 95\% confidence bands were derived.

Figure~\ref{fig:p_marg_chi_models_compare} compares the estimated $\hat{\chi}_{\tilde{\mathcal{Z}}^{(1)}_{0.9}}(q)$ curves and the marginal QQ plots for both models. While the proposed model reproduces the empirical marginal distributions satisfactorily, the GP distribution exhibits discrepancies in the marginal QQ plots, indicating a poorer fit. This behaviour is more likely attributable to structural model misspecification rather than estimation issues: the GP model relies on threshold stability, an assumption that is not supported by the data in this case. When threshold stability is violated, fitting the GP distribution at a fixed level may induce biases in the parameter estimates and, in this case, an overestimation of the marginal tail indices $\xi_j$, resulting in overly heavy estimated tails and a degraded marginal fit. Regarding extremal dependence, the proposed model closely follows the empirical $\chi(q)$ curve across most quantiles, capturing the gradual decrease observed beyond $q\approx0.9$. In contrast, the GP model yields an almost constant $\chi(q)$ function, as expected under asymptotic dependence, but is less able to reflect the observed weakening of dependence.

\begin{table}[htbp]
\centering
\begin{tabular}{lcc}
\toprule
 & \textbf{Our model} & \textbf{MGPD} \\
\midrule
$\eta$        & 0.66 (0.57, 0.82) & --- \\
$\xi_1$       & 0.15 (0.09, 0.22) & 0.26 (0.09, 0.40) \\
$\xi_2$       & 0.14 (0.08, 0.22) & 0.27 (0.11, 0.41) \\
$\beta_1^*$   & 6.70 (4.98, 7.79) & 4.37 (3.69, 5.43) \\
$\beta_2^*$   & 6.28 (5.21, 7.51) & 4.29 (3.36, 5.16) \\
$\sigma_T$    & 0.07 (0.04, 0.13) & --- \\
$w$           & 0.95 (0.87, 0.98) & --- \\
$a_T$         & --- & 2.41 (1.97, 2.91) \\
$b_T$         & --- & -0.02 (-0.18, 0.08) \\
\bottomrule
\end{tabular}
\caption{Parameter estimates with 95\% confidence intervals for the sBGP and the bivariate GP model, fitted to the pair Brussels–Nivelles using exceedances above the 90th percentile. Parameters that are not defined for a given model are indicated by "---". For the proposed model, $\beta_j^*$ denotes the rescaled version of $\beta_j$, obtained by dividing by the expectation of the first latent component (see Eq.~\eqref{eq:expec_Y}), in order to ensure comparability with the MGPD parametrisation. For the MGPD, $\beta_j^*$ corresponds to the standard scale parameter.}
\label{tab:params_mgpd_full}
\end{table}

\begin{figure}[htbp]
  \centering
  \begin{subfigure}[b]{0.34\textwidth}
      \includegraphics[width=\linewidth]{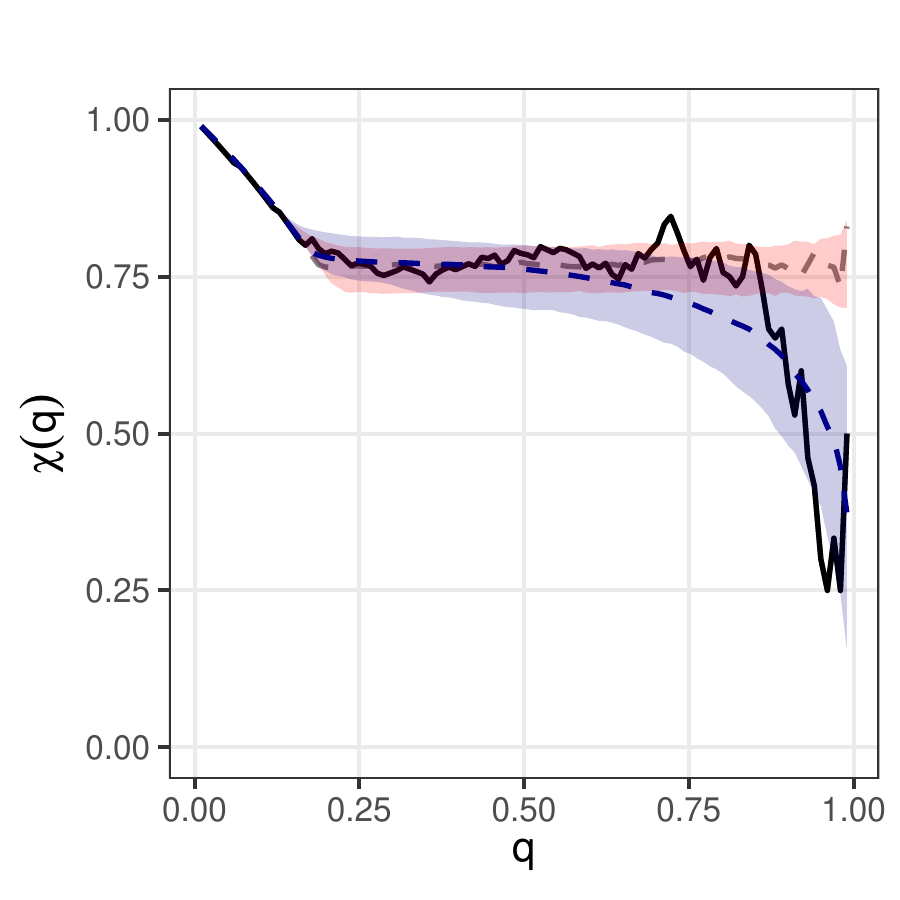}
    \caption{$\chi(q)$ curves.}
    \label{fig:chi_pair1_mgpd_with_ci}
  \end{subfigure}
  \hfill
  \begin{subfigure}[b]{0.64\textwidth}
    \includegraphics[width=\linewidth]{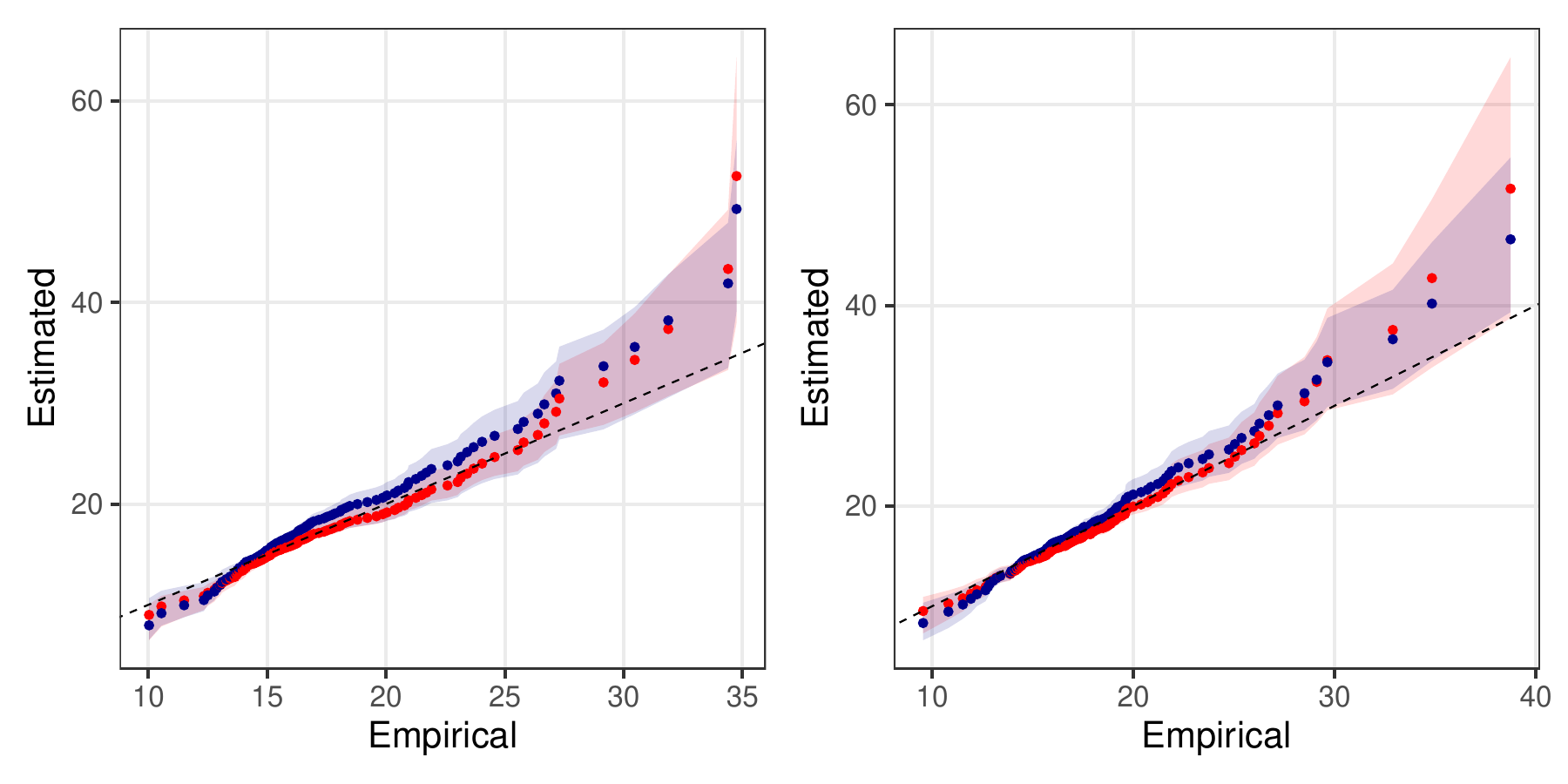}
    \caption{Marginal QQ plots.}
    \label{fig:qq_pair1_combined}
  \end{subfigure}
  \caption{Comparison between the sBGP and the bivariate GP model for Brussels–Nivelles.
  (a) Estimated $\chi_{\tilde{\mathcal{Z}}^{(1)}_{0.9}}(q)$ curves and (b) marginal QQ plots. 
  Blue lines and shaded regions correspond to the proposed model, and red lines to the MGPD. 
  Dashed lines represent model-based estimates, solid lines the empirical values, and shaded areas indicate 95\% bootstrap confidence regions. 
  Both models were fitted to exceedances $\tilde{\mathcal{Z}}^{(1)}_{0.9}$ (threshold $u_j = F_j^{-1}(0.9)$).}
  \label{fig:p_marg_chi_models_compare}
\end{figure}


\section{Discussion}
\label{sec:discussion}

We introduced a flexible sub-asymptotic model for bivariate threshold exceedances that extends classical generalized Pareto constructions. The proposed model accommodates both asymptotic dependence and asymptotic independence within a unified framework while preserving margins that converge to generalized Pareto tails. By explicitly modelling the sub-asymptotic regime, the model captures a wide range of dependence behaviours at practically relevant thresholds. Inference is performed through a neural Bayes estimator trained on simulated data, providing an efficient likelihood-free estimation strategy. Simulation experiments and the rainfall application illustrate that the model captures a broad range of extremal dependence behaviours while remaining interpretable and computationally tractable.

In theory, the construction admits a natural extension to dimensions $d > 2$. Let $E_J \sim \Exp(1)$ and $G_J \sim \Gamma(\alpha_J,1)$ for each subset $J \subseteq D = \{1,\ldots,d\}$ with $|J| \ge 1$, all mutually independent. For $j \in D$, let
\[
S_j = \max(T_1,\ldots,T_d) - T_j, \quad (T_1,\ldots,T_d) \sim \mathcal{N}_d(0,\sigma I_d),
\]
we may define the components of a $d$-variate sub-asymptotic GP vector $\vY=(Y_1,\ldots,Y_d)$  as
\[
Y_j = \beta_j \left( \frac{\sum_{J \subseteq D: j\in J} w_J E_J}{\sum_{J \subseteq D: j\in J} G_J} - S_j \right),
\]
where $w_J \ge 0$ and $\sum_{J \subseteq D: j\in J} w_J = 1$. In practice, however, the full model involves many parameters and may suffer from identifiability issues, suggesting that parsimonious specifications are preferable. Hence, future work will focus on extending the sBGP distributions to $d > 2$ while retaining a limited number of parameters.
	

	\section{Proofs}\label{app:proofs}

    This section contains proofs of all main results stated in the main paper. Some of these rely on additional lemmas whose proofs have been deferred to the Supplementary Material.


	  Recall that a function $f: (0, \infty) \to (0, \infty)$ is \emph{regularly varying} (at infinity) of order $\rho \in \RR$, written $f \in \RV (\rho)$, if for all $x > 0$,
    \[
    \lim_{t \rightarrow \infty} \frac{f(tx)}{f(t)} = x^{\rho}.
    \]
     A cumulative distribution function (cdf) $F$ with upper endpoint $x^* = \infty$ has regularly varying (upper) tail of order $\rho \geq 0$ if the survival function $\overline{F}(x) = 1 - F(x)$ is in $\RV (-\rho)$. A regularly varying $F$ is necessarily \emph{heavy-tailed}; for a thorough inventory of upper tail behaviour classes, see \cite{engelke_extremal_2019}.
	
	\subsection{Marginal behaviour}\label{sec:marginal_proof}

    Recall the model in Definition~\ref{def:biv_model}. Write
    \[
    Y_j = \beta_j (V_j - S_j), \qquad \text{where } V_j = \frac{w E + (1-w) E_j}{G + G_j} \qquad j = 1,2,
    \]
    and $\xi_j = (\alpha + \alpha_j)^{-1}$.
    Below, Lemma~\ref{lem:marg_V} derives the survival function and density of \(V_j\) and its first two moments, and Lemma~\ref{lem:asymV} gives an asymptotic expansion of the survival function of $V_j$. These are used in the proof of Proposition~\ref{prop:marg_Y}. 
	
	\begin{lem}[Distribution of \(V_j\)]\label{lem:marg_V}
		\text{ }
		\begin{enumerate}
			\item For $x \geq 0$ and $w \in [0,1]$, 
			\[
			\overline F_{V_j}(x)
			= 
             \begin{dcases}
             \frac{w(1+x/w)^{-1/\xi_j} - (1-w)(1+x/(1-w))^{-1/\xi_j}}
			{2w-1}, & w \in (0,1)\setminus\{\tfrac12\}, \\
                   (1 + 2x)^{-1/\xi_j - 1} \rbr{ 1 + 2x (1 + \xi^{-1})}, & w = \tfrac12. \\
            \end{dcases}
			\]
            
            The corresponding density is
            \begin{equation*}
    f_{V_j}(x) = \begin{dcases}
        \frac{\1\{x\ge0\}}{\xi_j(2w-1)}
            \left[\rbr{1+\frac{x}{w} }^{-1/\xi_j-1}-\rbr{1+\frac{x}{1-w}}^{-1/\xi_j-1}\right], & w \in (0,1)\setminus\{\tfrac12\}, \\
                  \1\{x\ge0\} \frac{4x}{\xi_j} \rbr{ 1 + \frac{1}{\xi_j}} \rbr{1 + 2x}^{-1/\xi_j - 2},  & w = \tfrac12. \\
    \end{dcases}
    \end{equation*}

			\item  The expected value and the variance of \(V_j\) are
            \[
    \begin{dcases}
    \EE[V_j] = \frac{\xi_j}{1-\xi_j}, & 0 < \xi_j < 1, \\
    \textnormal{Var}[V_j]
    = \frac{w^2+(1-w)^2 +2\xi_j w(1-w)}{(\xi_j^{-1}-1)^2(1-2\xi_j)}, & 0 < \xi_j < \tfrac 12.
    \end{dcases}
    \]
		\end{enumerate}
	\end{lem}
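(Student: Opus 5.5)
The plan is to condition on the gamma denominator $\Gamma_j := G+G_j$, which has shape $\alpha+\alpha_j = 1/\xi_j$ and unit rate. Write the numerator as $N = wE+(1-w)E_j$; it is independent of $\Gamma_j$.

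\textbf{Step 1: survival function of the numerator.} For $w\in(0,1)\setminus\{\tfrac12\}$, $N$ is hypoexponential with rates $1/w$ and $1/(1-w)$. The standard formula gives
\[
\PP(N>t)=\frac{w e^{-t/w}-(1-w)e^{-t/(1-w)}}{2w-1}.
\]
For $w=\tfrac12$, $N$ is Gamma$(2,2)$, so $\PP(N>t)=(1+2t)e^{-2t}$.

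\textbf{Step 2: survival function of $V_j$.} By conditioning,
\[
\overline F_{V_j}(x)=\EE\bigl[\PP(N>x\Gamma_j\mid\Gamma_j)\bigr].
\]
This reduces everything to the Laplace transform $\EE[e^{-s\Gamma_j}]=(1+s)^{-1/\xi_j}$. Applying it with $s=x/w$ and $s=x/(1-w)$ gives the first expression directly.

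For $w=\tfrac12$ one also needs $\EE[\Gamma_j e^{-2x\Gamma_j}]=\xi_j^{-1}(1+2x)^{-1/\xi_j-1}$, obtained by differentiating the Laplace transform. This yields
\[
(1+2x)^{-1/\xi_j}+2x\xi_j^{-1}(1+2x)^{-1/\xi_j-1}=(1+2x)^{-1/\xi_j-1}\bigl(1+2x+2x/\xi_j\bigr),
\]
which is the stated form. Alternatively, the $w=\tfrac12$ case follows as the limit $w\to\tfrac12$ of the first formula via l'Hôpital's rule, which serves as a consistency check. The endpoints $w\in\{0,1\}$ reduce to Lemma~\ref{lem:mixture_GPD}.

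\textbf{Step 3: density.} The density follows by differentiating in $x$:
\[
\frac{d}{dx}\,w(1+x/w)^{-1/\xi_j}=-\xi_j^{-1}(1+x/w)^{-1/\xi_j-1},
\]
and similarly for the $(1-w)$ term. For $w=\tfrac12$, differentiate the product and simplify; the constant terms cancel, leaving $\frac{4x}{\xi_j}(1+1/\xi_j)(1+2x)^{-1/\xi_j-2}$.

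\textbf{Step 4: moments.} Use independence, so that $\EE[V_j]=\EE[N]\,\EE[\Gamma_j^{-1}]$ and $\EE[V_j^2]=\EE[N^2]\,\EE[\Gamma_j^{-2}]$. The inputs are:
\begin{itemize}
\item $\EE[N]=1$ and $\EE[N^2]=\mathrm{Var}(N)+1=w^2+(1-w)^2+1$;
\item for $\Gamma_j$ with shape $a=1/\xi_j$, the inverse moments $\EE[\Gamma_j^{-1}]=1/(a-1)$ (finite iff $a>1$) and $\EE[\Gamma_j^{-2}]=1/((a-1)(a-2))$ (finite iff $a>2$).
\end{itemize}
Then $\EE[V_j]=1/(a-1)=\xi_j/(1-\xi_j)$, and
\[
\mathrm{Var}[V_j]=\frac{w^2+(1-w)^2+1}{(a-1)(a-2)}-\frac{1}{(a-1)^2}.
\]

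\textbf{Main obstacle.} The only delicate step is this last algebra: bringing the variance to the common denominator $(a-1)^2(a-2)$ and matching it to the stated form. Use $1=(w+(1-w))^2$ to write the numerator as
\[
(a-1)\bigl(2w^2+2(1-w)^2+2w(1-w)\bigr)-(a-2).
\]
Then divide numerator and denominator by $a$, using $(a-2)/a=1-2\xi_j$ and $(a-1)^2/a^2\cdot a=(\xi_j^{-1}-1)^2\cdot a$. Expect to collect the terms carefully so that the result is $w^2+(1-w)^2+2\xi_jw(1-w)$ over $(\xi_j^{-1}-1)^2(1-2\xi_j)$.
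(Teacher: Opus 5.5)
Your proposal is correct and follows essentially the paper's route: integrate the hypoexponential survival function of $N$ against the law of $G+G_j$ (via its Laplace transform), differentiate for the density, and get the moments from the independent factorisation $V_j=N\,(G+G_j)^{-1}$; your direct Gamma$(2,2)$ computation at $w=\tfrac12$ replaces the paper's l'H\^opital argument but leads to the same formula. One slip in the last step: the identity $(a-1)^2/a^2\cdot a=(\xi_j^{-1}-1)^2\cdot a$ is false as written, but all you need is $(a-1)^2=(\xi_j^{-1}-1)^2$ and $(a-2)/a=1-2\xi_j$, since the numerator reduces to $a\{w^2+(1-w)^2\}+2w(1-w)$ and dividing it and $(a-1)^2(a-2)$ by $a$ gives exactly the stated variance.
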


\begin{lem}[Asymptotic expansion of $\overline F_{V_j}$]\label{lem:asymV}
     For $x \rightarrow \infty$ and $w \in [0,1]$,
			\begin{equation}\label{eq:Vasym}
				\overline F_{V_j} (x) = C x^{-1/\xi_j} \left(1 - \frac{D}{\xi_j x} + o(x^{-1}) \right),
			\end{equation}
			where, for $w \in [0,1] \setminus \tfrac 12$,
			\begin{equation}\label{eq:CandD}
				C = \frac{w^{1/\xi_j+1}-(1-w)^{1/\xi_j+1}}{2w-1} \quad \text{ and } \quad  D= \frac{w^{1/\xi_j+2}-(1-w)^{1/\xi_j+2}}
				{w^{1/\xi_j+1}-(1-w)^{1/\xi_j+1}}.
			\end{equation}
			If $w \in \{0,1\}$, $C = D = 1$. Expressions for $w = 1/2$ are given in the proof.
\end{lem}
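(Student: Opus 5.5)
We start from the closed form of $\overline F_{V_j}$ in Lemma~\ref{lem:marg_V} and expand each term for large $x$. Treat first $w\in(0,1)\setminus\{\tfrac12\}$. Write $a=1/\xi_j$ and, for $c\in\{w,1-w\}$ with $c>0$,
\[
c\,(1+x/c)^{-a} = c\,(x/c)^{-a}\,(1+c/x)^{-a} = c^{a+1}x^{-a}\Bigl(1 - \frac{a c}{x} + O(x^{-2})\Bigr),
\]
which is the binomial expansion of $(1+c/x)^{-a}$; the error is uniform since $c\le 1$ is fixed. We then subtract the two expansions and divide by $2w-1$. This gives
\[
\overline F_{V_j}(x) = x^{-a}\,\frac{w^{a+1}-(1-w)^{a+1}}{2w-1} - \frac{a\,x^{-a-1}}{2w-1}\bigl(w^{a+2}-(1-w)^{a+2}\bigr) + O(x^{-a-2}).
\]
Factoring out $C x^{-a}$ with $C$ as in \eqref{eq:CandD} yields $1 - aD/x + O(x^{-2})$, with $D$ as in \eqref{eq:CandD}. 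Since $a/x = 1/(\xi_j x)$, this is exactly \eqref{eq:Vasym}. For $w\neq\tfrac12$ in $(0,1)$ the denominator $w^{a+1}-(1-w)^{a+1}$ has the same sign as $2w-1$ and is nonzero, so $C>0$ and $D$ is well defined.

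For the endpoints $w\in\{0,1\}$, we do not use the formula above, because one of the two terms then has $c=0$. Instead we note that $V_j=E'/(G+G_j)$ for a standard exponential $E'$. By Lemma~\ref{lem:mixture_GPD}, $V_j\sim\GP(\xi_j,\xi_j)$, so $\overline F_{V_j}(x)=(1+x)^{-a}=x^{-a}(1-a/x+O(x^{-2}))$, which is the claim with $C=D=1$. This agrees with the general formula, since plugging in $w\in\{0,1\}$ gives $C=D=1$. Equivalently, the $c=0$ term simply vanishes.

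For $w=\tfrac12$ we use the second closed form in Lemma~\ref{lem:marg_V}, $\overline F_{V_j}(x)=(1+2x)^{-a-1}(1+2x(1+a))$. Writing $(1+2x)^{-a-1}=(2x)^{-a-1}(1-(a+1)/(2x)+O(x^{-2}))$ and $1+2x(1+a)=2x(1+a)(1+1/(2x(1+a)))$, we multiply the two factors. The product is
\[
(1+a)2^{-a}x^{-a}\Bigl(1 - \frac{(a+1)^2-1}{2(a+1)x}+O(x^{-2})\Bigr),
\]
so $C=(1+a)2^{-a}$ and $D=\xi_j\,\frac{a(a+2)}{2(a+1)}=\frac{a+2}{2(a+1)}$. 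As a check, these are the limits of the general $C$ and $D$ as $w\to\tfrac12$, computed by L'Hôpital's rule.

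The only delicate point is the bookkeeping at $w=\tfrac12$ and at the endpoints. Everything else is a routine Taylor expansion with an $O(x^{-2})$ remainder, which is in particular $o(x^{-1})$.
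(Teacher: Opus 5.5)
Your proof is correct and follows the paper's argument: you expand both terms of the closed-form survival function using $(1+\lambda x)^{-a}=(\lambda x)^{-a}\bigl(1-a/(\lambda x)+O(x^{-2})\bigr)$ with $\lambda=1/w$ and $\lambda=1/(1-w)$, then collect constants. The only difference is at $w=\tfrac12$, where the paper just invokes l'H\^opital's rule on $C$ and $D$ while you expand the $w=\tfrac12$ closed form directly; this is slightly more careful, since it avoids interchanging the limit $w\to\tfrac12$ with the expansion in $x$, and it gives the same constants because $\frac{a+2}{2(a+1)}=\frac{1+2\xi_j}{2+2\xi_j}$.
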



	   \begin{proof}[\normalfont\bfseries Proof of Proposition~\ref{prop:marg_Y}]
       \textbf{ }
       \begin{description}
           \item[\ref{prop:marg_Y_density}] We state the results for $j = 1$ without loss of generality. Recall that $S_1 = \max(T_1,T_2) - T_1 = \max(0,T_2 - T_1)$ so that $\PP[S_1 = 0] = 1/2$ for $T_1,T_2 \siid N(0, \sigma^2_T)$ as in~\eqref{eq:S_T_definition}. We have
\begin{align*}
    F_{Y_1} ( y) & = \frac 12 \cbr{ \PP[Y_1 \leq y \mid S_1 = 0] +  \PP[Y_1 \leq y \mid S_1 > 0]} \\
    & = \frac 12 \cbr{ F_{V_1}( y/\beta_1) +  \EE[ F_{V_1} ( y/\beta_1 + S_1) \mid S_1 > 0]} \\
    & = \frac 12 \cbr{ F_{V_1}( y/\beta_1) + \int_{0}^{\infty} F_{V_1}(y/\beta_1 + s) f_{S_1 \mid S_1 > 0} (s) \diff s}
\end{align*}
Differentiating with respect to $y$ given the density expression in \eqref{eq:marg_Y_density}. In addition, 
\begin{align}\label{eq:Sconddist}
    \PP[S_1 \leq s \mid S_1 > 0] = \frac{\PP[0 \leq T_2 - T_1 \leq s]}{\PP[T_2 > T_1]} = 2 \Phi \rbr{\frac{s}{\sqrt{2} \sigma_T}} - 1,
\end{align}
which is a half-normal distribution (i.e. folded normal with mean zero) with scale $\sqrt{2} \sigma_T$.  
           \item[\ref{prop:marg_Y_moments}] The mean and variance of $V_j$ are given in Lemma~\ref{lem:marg_V}. If $T_1,T_2 \siid N(0, \sigma^2_T)$, then 
           \[
           \EE[S_j] = \frac{\sigma_T}{\sqrt{\pi}}, \qquad  \textnormal{Var}[S_j] = (1 - 1/\pi) \sigma_T^2
           \]
           \item[\ref{prop:marg_Y_tail}]   First, note that 
		\[
 \lim_{x \rightarrow \infty} \frac{\overline F_{V_j-S_j}(x)}{\overline F_{V_j}(x)} =   \lim_{x \rightarrow \infty} \EE \left[\frac{\overline F_{V_j}(x + S_j)}{\overline F_{V_j}(x)} \right] = \EE \left[ \lim_{x \rightarrow \infty} \frac{\overline F_{V_j}(x + S_j)}{\overline F_{V_j}(x)} \right],
		\]
		where the last step follows from dominated convergence. Using expansion \eqref{eq:Vasym}, 
		we get, for $s > 0$ and $x \rightarrow \infty$,
		\begin{align*}
			\frac{\overline F_{V_j}(x + s)}{\overline F_{V_j}(x)} & =  \left( 1 + \frac{s}{x} \right)^{-1/\xi_j} \left( \dfrac{1- \frac{D}{\xi_j(x+s)} + o(x^{-1})}{1- \frac{D}{\xi_j x} + o(x^{-1})} \right) \\
			& = \left( 1 + \frac{s}{x} \right)^{-1/\xi_j}( 1 + o(x^{-1})) = 1 - \frac{s}{\xi_j x} + o(x^{-1}),
		\end{align*}
		where the last equality is based on expansion \eqref{eq:approx} in the Supplementary Material, and 
		\begin{equation}\label{eq:part1}
		\frac{\overline F_{Y_j}(x)}{\overline F_{\beta_j V_j}(x)} = \frac{\overline F_{V_j - S_j}(x/\beta_j)}{\overline F_{V_j}(x/\beta_j)} =   1 - \frac{\EE[S_j]}{ (\xi_j/\beta_j) x} + o(x^{-1}), \qquad x \rightarrow \infty.
		\end{equation}  
		
		Second, if \(Z\sim\GP(\xi_j,\sigma_j)\) with \(\sigma_j= \beta_j C^{\xi_j} \xi_j \), then $\overline F_Z(x) = \overline H_{\xi_j,\sigma_j}(x) =  \left( 1 + x /  (\beta_j C^{\xi_j} ) \right)^{-1/\xi_j}$. Hence, using \eqref{eq:Vasym} for the numerator and \eqref{eq:approx} again for the denominator, we find, as $x \rightarrow \infty$,
		\begin{equation}\label{eq:part2}
		\frac{\overline F_{\beta_j V_j}(x)}{\overline F_{Z}(x)} = 
		\frac{C (x/\beta_j)^{-1/\xi_j} \left(1 - \frac{\beta_j D}{\xi_jx} + o (x^{-1}) \right)}{\left(C^{-\xi_j} (x/\beta_j) \right)^{-1/\xi_j} (1 - \frac{\beta_j C^{\xi_j}}{\xi_j x}+ o(x^{-1}))}
		= 1 - \left( \frac{D - C^{\xi_j}}{(\xi_j/\beta_j) x} \right) + o\left(x^{-1}\right).
		\end{equation}
		Combining \eqref{eq:part1} and \eqref{eq:part2}, we find, as $x \rightarrow \infty$,
		\begin{equation*}
			\frac{\overline F_{Y_j} (x)}{\overline F_Z (x)} = \frac{\overline F_{Y_j} (x)}{\overline F_{\beta_j V_j} (x)} \, \frac{\overline F_{\beta_j V_j} (x)}{\overline F_Z (x)} = \left(1 - \frac{\EE[S_j]}{ (\xi_j/\beta_j)  x} + o(x^{-1}) \right) \left(1 - \left( \frac{D- C^{\xi_j}}{(\xi_j/\beta_j) x} \right) + o(x^{-1}) \right).
		\end{equation*}
       \end{description}
    \end{proof}

	\subsection{Connection with the multivariate GP distribution}\label{sec:coeff_dep_proof}

    \begin{proof}[\normalfont\bfseries Proof of Lemma~\ref{lem:mgpd_limit}]
        For $\beta_j = \sigma_j (\alpha + \alpha_j)$, we have
        \[
        \frac{G + G_j}{\sigma_j (\alpha + \alpha_j)}
        = \frac{G + G_j}{\beta_j}
        \sim \Gamma(\alpha + \alpha_j, \beta_j),
        \]
        with moment-generating function
        \[
        M(t)
        =
        \left(1 - \frac{t}{\sigma_j  (\alpha + \alpha_j)}\right)^{-(\alpha + \alpha_j)},
        \quad
        t < \sigma_j (\alpha + \alpha_j).
        \]
        As $(\alpha + \alpha_j) \to \infty$, $M(t) \to \exp(t/\sigma_j)$, which corresponds to a
        degenerate random variable equal to $1/\sigma_j$. Hence, $\frac{G + G_j}{\beta_j}  \;\xrightarrow{p}\;  1/\sigma_j$ for $j=1,2$.  By the continuous mapping theorem,
        \begin{equation}\label{eq:conv1}
        \beta_j \frac{E}{G + G_j}
        \xrightarrow{p}
        \sigma_j E,
        \qquad j=1,2.
        \end{equation}
        Combining \eqref{eq:conv1} with the assumption that $\beta_j S^{(\beta)}_j\xrightarrow{p} S^0_j$ and using again the continuous mapping theorem, we obtain
        \[
        \beta_j\Big(\frac{E}{G+G_j}-S^{(\beta)}_j\Big)
        \xrightarrow{p}
        \sigma_j E-S^0_j ,
        \qquad j=1,2.
        \]
        Joint convergence in probability follows immediately, yielding the stated limit for $(Y_1,Y_2)$.
    \end{proof}
    
	\subsection{Tail dependence coefficients}\label{sec:coeff_dep_proof2}
	We now derive the residual tail dependence coefficient $\eta$ and the tail dependence coefficient $\chi$. Without loss of generality, we take $\beta_1 = \beta_2 = 1$ for notational simplicity. As before, we write $\vY = \vV - \vS$ for $\vS = (S_1,S_2)^T$ and $\vV = (V_1,V_2)^T$ with
    \begin{equation}\label{eq:def_V_j}
        V_j = \frac{w E + (1-w) E_j}{G + G_j}, \qquad j = 1,2.
    \end{equation}
    In the lemmas below, we start by deriving the joint distribution function of $\vV$ and its asymptotic expansion. We then show that $\eta_{\vY} = \eta_{\vV}$ and $\chi_{\vY} = \chi_{\vV}$ and finally state the proof of Proposition~\ref{prop:tail_dep}.

\begin{lem}[Joint distribution of $\vV$]\label{lem:SF_V}
Let $x_1,x_2\ge 0$ and $w\in(0,1)\setminus\{\tfrac12\}$. Consider the vector $\vV=(V_1,V_2)$ defined in \eqref{eq:def_V_j}. Then the joint survival function admits the decomposition
\[
\PP(V_1>x_1,V_2>x_2)=A_1(x_1,x_2)+A_2(x_1,x_2)+A_{12}(x_1,x_2),
\]
where
\[
A_{12}(x_1,x_2)=\left(1+\frac{x_1+x_2}{1-w}\right)^{-\alpha}\prod_{j=1}^2\left(1+\frac{x_j}{1-w}\right)^{-\alpha_j},
\]
and, for $j=1,2$,
\[
A_j(x_1,x_2)=\frac{w(1-w)^{\alpha_{3-j}}}{1-2w}\left(1+\frac{x_j}{w}\right)^{-(\alpha+\alpha_j)}x_{3-j}^{-\alpha_{3-j}}\,
\EE\big[\mathcal I_j(x_1,x_2)\big],
\]
with
\[
\mathcal I_1(x_1,x_2)=\int_0^{H_1}\left(\exp\left(\frac{1-2w}{w}\min\{\tilde D_1,\,H_1-y\}\right)-1\right)
\frac{y^{\alpha_2-1}}{\Gamma(\alpha_2+1)}\exp\left(-\frac{1-w}{x_2}y\right)\,dy,
\]
where
\[
\tilde D_1=\frac{x_1w}{(1-w)(x_1+w)}(G+G_1),\qquad
H_1=\tilde D_1+E_2-\frac{x_2w}{(1-w)(x_1+w)}G,
\]
and $\mathcal I_2$ defined similarly as $\mathcal I_1$ by swapping indices $1$ and $2$.
\end{lem}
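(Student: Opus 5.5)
Throughout, we condition on the gamma variables $(G,G_1,G_2)$ and integrate out the three exponentials explicitly. Since $V_j>x_j$ is equivalent to $wE+(1-w)E_j>x_j(G+G_j)$, set $c_j=x_j(G+G_j)$. Conditionally on the gammas we must compute
\[
\PP\big(wE+(1-w)E_1>c_1,\;wE+(1-w)E_2>c_2\mid G,G_1,G_2\big).
\]
We condition further on $E=e$. Then $E_1,E_2$ are conditionally independent, and the event factorises as
\[
\prod_{j}\exp\!\Big(-\frac{(c_j-we)_+}{1-w}\Big).
\]
Integrating against $e^{-e}\,de$ splits $(0,\infty)$ into three pieces: $e<\min(c_1,c_2)/w$, $\min/w<e<\max/w$, and $e>\max/w$.

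On the last piece the integrand is $e^{-e}$, giving $e^{-\max(c_1,c_2)/w}$. On the first piece the integrand is
\[
\exp\!\Big(-\frac{c_1+c_2}{1-w}\Big)\exp\!\Big(\frac{(2w-(1-w))e}{1-w}\Big),
\]
and on the middle piece it is $\exp(-c_k/(1-w))\exp(-(1-2w)e/(1-w))$, where $k$ is the index of the larger $c$. The hypothesis $w\neq\tfrac12$ is exactly what guarantees that the exponential rate $(1-2w)$ never vanishes. I will keep track of which terms contain $e^{-(c_1+c_2)/(1-w)}$ and which contain a single $e^{-c_j/w}$ or $e^{-c_j/(1-w)}$.

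Next I take expectations over the gammas. The term proportional to $\exp(-(c_1+c_2)/(1-w))$ is a product of gamma Laplace transforms. Since $c_1+c_2=(x_1+x_2)G+x_1G_1+x_2G_2$, it yields exactly $A_{12}$:
\[
\Big(1+\tfrac{x_1+x_2}{1-w}\Big)^{-\alpha}\prod_j\Big(1+\tfrac{x_j}{1-w}\Big)^{-\alpha_j}.
\]
The remaining terms depend on which of $c_1,c_2$ is larger, so they cannot be reduced to pure Laplace transforms. These produce $A_1$ and $A_2$: $A_1$ collects the contributions where the ``$c_1/w$'' exponential dominates, and symmetrically for $A_2$.

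To match the stated form, I would integrate out $G_2$ explicitly rather than keep it inside the expectation. The factor $x_2^{-\alpha_2}$, the density $y^{\alpha_2-1}e^{-(1-w)y/x_2}/\Gamma(\alpha_2+1)$ and the prefactor $(1-w)^{\alpha_2}$ strongly suggest the substitution $y=(1-w)G_2x_2/(x_1+w)$-type rescaling, with an integration by parts producing $\Gamma(\alpha_2+1)$. After this substitution, the ordering condition $c_2>c_1$ together with the bound coming from the uniform variable $E_2$ becomes an upper limit $H_1$ in $y$. The integrated exponential in $e$ gives the factor $\exp\!\big(\tfrac{1-2w}{w}\min\{\tilde D_1,H_1-y\}\big)-1$, divided by the rate, which accounts for the factor $w/(1-2w)$. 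The remaining exponential $\exp(-c_1/w)$, averaged over $G+G_1$, is what I expect to become $(1+x_1/w)^{-(\alpha+\alpha_j)}$ after an exponential tilting of $(G,G_1)$. The outer $\EE$ in the statement is then taken with respect to the tilted $(G,G_1)$ and to $E_2$.

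The main obstacle is this bookkeeping step: choosing the right order of integration and change of variables so that the piecewise integrals reassemble into $\mathcal I_1$ with precisely the stated $\tilde D_1$ and $H_1$. I would first verify the decomposition numerically, or at $\alpha_2\to 0$ where $A_1$ should simplify, before committing to the substitution.
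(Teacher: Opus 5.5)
\paragraph{There is a genuine gap: your order of integration cannot produce the stated decomposition.}
Your conditional computation is fine as far as it goes. Summing your three pieces does give the joint survival probability given the gammas, and at $c_2=0$ it collapses to the hypo-exponential tail. The problem is that this route does not lead to the stated $A_1,A_2,A_{12}$, and the identifications you propose are wrong.

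On your first piece the integrand is $e^{-(c_1+c_2)/(1-w)}e^{(3w-1)e/(1-w)}$, so integrating over $(0,\min(c_1,c_2)/w)$ gives
\[
\frac{1-w}{3w-1}\,e^{-(c_1+c_2)/(1-w)}\Bigl(e^{(3w-1)\min(c_1,c_2)/\{w(1-w)\}}-1\Bigr).
\]
\begin{itemize}
\item The only pure Laplace-transform term in your expansion therefore carries the coefficient $-(1-w)/(3w-1)$. This equals $1$ only at $w=0$, so that term is not $A_{12}$.
\item The same display contains a second rate, which vanishes at $w=\tfrac13$. The lemma allows that value, so $w\neq\tfrac12$ is not ``exactly'' the condition your route needs.
\item More fundamentally, you split $A_1$ versus $A_2$ according to $c_1\gtrless c_2$, and you integrate $E_2$ out at the very start. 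The stated $\mathcal I_1$ still contains $E_2$, through $H_1$, inside the outer expectation.
\end{itemize}
The lemma's three pieces come from a different partition of the event $\{V_1>x_1,V_2>x_2\}$. Reaching $\mathcal I_1$ is therefore not a matter of picking the right substitution: you would have to re-introduce $E_2$ and re-partition.

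\paragraph{The missing idea: integrate out $E$ first.}
This is what the paper does: integrate out the shared exponential $E$ first, conditionally on $(E_1,E_2,G,G_1,G_2)$.
\begin{itemize}
\item With $M_j=x_j(G+G_j)-(1-w)E_j$, this gives $\PP(V_1>x_1,V_2>x_2)=\EE\bigl[e^{-(M_1\vee M_2)/w}\1\{M_1\vee M_2>0\}\bigr]+\PP(M_1\vee M_2<0)$.
\item The last term is $A_{12}$ with coefficient one. Its event is $(1-w)E_j>x_j(G+G_j)$ for $j=1,2$, a product of gamma Laplace transforms.
\item For $A_1=\EE\bigl[e^{-M_1/w}\1\{M_1>M_2\vee 0\}\bigr]$, the constraint is $E_1<B:=\min\{x_1(G+G_1)/(1-w),\,x_1(G+G_1)/(1-w)+E_2-x_2(G+G_2)/(1-w)\}$.
\item Integrating out $E_1$ alone gives $\frac{w}{1-2w}\bigl(e^{(1-2w)B_+/w}-1\bigr)$. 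This is the only rate that ever appears, hence the single exclusion $w\neq\tfrac12$.
\item Tilting $(G,G_1)$ by $e^{-x_1(G+G_1)/w}$ produces $(1+x_1/w)^{-(\alpha+\alpha_1)}$. It also replaces $G,G_1$ by $\frac{w}{x_1+w}G$ and $\frac{w}{x_1+w}G_1$, which is exactly where $\tilde D_1$ and $H_1$ come from.
\end{itemize}
Your guesses about the exponential tilting and the factor $w/(1-2w)$ are correct, but they only materialize in this order of integration.

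Finally, $G_2$ is not tilted, so the substitution is $y=x_2G_2/(1-w)$, with no factor $x_1+w$. It produces $(1-w)^{\alpha_2}x_2^{-\alpha_2}$ and turns the tilted $B$ into $\min\{\tilde D_1,H_1-y\}$, which is positive exactly when $y<H_1$. No integration by parts is involved: the gamma density gives $y^{\alpha_2-1}/\Gamma(\alpha_2)$. The $\Gamma(\alpha_2+1)$ in the displayed $\mathcal I_1$ appears to be a misprint, off by a factor $\alpha_2$. You can confirm this by letting $x_2\downarrow0$ and recovering $\overline F_{V_1}(x_1)$ from Lemma~\ref{lem:marg_V}.
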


	\begin{lem}[Joint tail distribution of $\vV$]\label{lem:extra}
		We have, for a constant $C > 0$, 
\[
\PP \rbr{ F_{V_1} (V_1) > q, F_{V_2} (V_2) > q}   \sim C\,(1-q)^{\,1+\tfrac{\max(\alpha_1,\alpha_2)}{\alpha+ \max(\alpha_1,\alpha_2)}}, \qquad q \uparrow 1.  
\]
	\end{lem}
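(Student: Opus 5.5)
The plan is a direct small-ball analysis of the gamma denominators, conditioning on $(G,G_1,G_2)$. Lemma~\ref{lem:SF_V} is used only as a cross-check, since its closed form is awkward to expand. Write $a=\max(\alpha_1,\alpha_2)$ and assume without loss of generality that $\alpha_1\ge\alpha_2$, so that $\xi_1\le\xi_2$.

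\textbf{Step 1: marginal quantiles.} Since $F_{V_j}$ is continuous and strictly increasing on $(0,\infty)$, the event $\{F_{V_j}(V_j)>q\}$ equals $\{V_j>x_j(q)\}$ with $x_j(q)=F_{V_j}^{-1}(q)$. By Lemma~\ref{lem:asymV} (with constant $C_j$ from \eqref{eq:CandD}) we have $1-q\sim C_j x_j^{-1/\xi_j}$. Hence $x_j(q)\sim C_j^{\xi_j}(1-q)^{-\xi_j}$. Put $\varepsilon_j=1/x_j(q)$, so that $\varepsilon_j\asymp(1-q)^{\xi_j}$ and $\varepsilon_2\lesssim\varepsilon_1$.

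\textbf{Step 2: conditioning and rescaling.} Write $N_j=wE+(1-w)E_j$ and $R_j=G+G_j$. Then
\[
\PP(V_1>x_1,V_2>x_2)=\EE\bigl[\psi(x_1R_1,x_2R_2)\bigr],\qquad \psi(s_1,s_2)=\PP(N_1>s_1,N_2>s_2),
\]
where $\psi$ is bounded by $1$ and decays exponentially in $\max(s_1,s_2)$. The joint event therefore essentially requires $R_1\lesssim\varepsilon_1$ and $R_2\lesssim\varepsilon_2$, i.e.\ $G,G_2=O(\varepsilon_2)$ and $G_1=O(\varepsilon_1)$. Substitute $G=\varepsilon_2u$, $G_1=\varepsilon_1u_1$, $G_2=\varepsilon_2u_2$ in the gamma densities $g^{\alpha-1}e^{-g}/\Gamma(\alpha)$ and so on. This gives
\[
\PP(V_1>x_1,V_2>x_2)=\varepsilon_2^{\alpha+\alpha_2}\varepsilon_1^{\alpha_1}\int \psi\Bigl(\tfrac{\varepsilon_2}{\varepsilon_1}u+u_1,\;u+u_2\Bigr)\frac{u^{\alpha-1}u_1^{\alpha_1-1}u_2^{\alpha_2-1}}{\Gamma(\alpha)\Gamma(\alpha_1)\Gamma(\alpha_2)}e^{-\varepsilon_2u-\varepsilon_1u_1-\varepsilon_2u_2}\,du\,du_1\,du_2 .
\]
If some $\alpha_j=0$, the corresponding $G_j\equiv0$ and its integral is simply dropped; if $\alpha=0$, the $u$-integral is dropped. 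The prefactor is $\asymp(1-q)^{\xi_2(\alpha+\alpha_2)+\xi_1\alpha_1}=(1-q)^{1+\alpha_1/(\alpha+\alpha_1)}$, which is exactly the claimed rate with $a=\alpha_1$.

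\textbf{Step 3: limit of the integral (main obstacle).} The ratio $\varepsilon_2/\varepsilon_1$ tends to $0$ when $\xi_1<\xi_2$ and to the constant $(C_1/C_2)^{\xi_1}$ when $\xi_1=\xi_2$. In both cases the integrand converges pointwise, and the exponential factors tend to $1$. Justifying dominated convergence is the delicate step, and it is needed because the gamma densities are used on the whole range, not just near $0$. The dominating function I would use is
\[
\psi(u_1,u+u_2)\,u^{\alpha-1}u_1^{\alpha_1-1}u_2^{\alpha_2-1},
\]
which is valid because $\psi$ is decreasing in each argument. The exponential decay of $\psi$ in $u_1$ and in $u+u_2$ then makes the bound integrable at infinity. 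The factors $u^{\alpha-1}$, $u_1^{\alpha_1-1}$, $u_2^{\alpha_2-1}$ are integrable at $0$ whenever the corresponding shape is positive. The limiting integral is strictly positive because $\psi>0$ everywhere. This yields a finite constant $C>0$ (absorbing $C_j^{\xi_j}$), which proves the lemma. The case $w\in\{0,1/2,1\}$ needs no change, since only the boundedness, positivity and exponential decay of $\psi$ are used.
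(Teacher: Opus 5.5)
Your argument is correct, and it takes a genuinely different route from the paper.

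The paper starts from the decomposition $\PP(V_1>x_1,V_2>x_2)=A_1+A_2+A_{12}$ of Lemma~\ref{lem:SF_V}, which is derived only for $w\in(0,1)\setminus\{\tfrac12\}$. It then treats the three pieces separately: $A_{12}$ by expanding an explicit gamma Laplace transform, $A_1$ by a dominated-convergence claim for $\EE[\mathcal I_1]$, and $A_2$ by a second rescaling $G=(x_2/x_1)U$ that extracts an extra factor $(x_2/x_1)^{\alpha}$. It also relies on $u_2/u_1\to 0$ and $x_1+x_2\sim x_1$, so only $\alpha_1\neq\alpha_2$ is really covered.

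You instead keep the numerators inside $\psi$ and rescale each gamma variable at its own small-ball scale: $1/x_2$ for $G$ and $G_2$, and $1/x_1$ for $G_1$. This gives an exact identity whose Jacobian already carries the rate, so one dominated-convergence step with an explicit envelope finishes the proof. What this buys is uniformity. Every $w\in[0,1]$, the equal-index case $\alpha_1=\alpha_2$ (where $\varepsilon_2/\varepsilon_1$ tends to a positive constant rather than $0$), and the degenerate cases $\alpha=0$ or $\alpha_j=0$ are all handled by the same computation. The constant also comes out as an explicit integral. For instance, when $\alpha_1=\alpha_2=0$ you get $C=C_1^{-1}\int_0^\infty\psi(u,u)\,u^{\alpha-1}\,du/\Gamma(\alpha)$, with $C_1$ from Lemma~\ref{lem:asymV}. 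Since $\psi(u,u)=\PP\{wE+(1-w)\min(E_1,E_2)>u\}$, this reproduces exactly the formula for $\chi$ in Proposition~\ref{prop:tail_dep}\ref{prop:tail_dep_chi}.

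The paper's route, in exchange, reuses Lemma~\ref{lem:SF_V} and splits the constant into contributions $c_1,c_2,c_{12}$ attached to the events $\{M_1>M_2\vee 0\}$, $\{M_2>M_1\vee 0\}$ and $\{M_1\vee M_2<0\}$, with $c_{12}$ in closed form. The price is three separate asymptotic analyses, two of which are only sketched there.

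Two details would make your Step~3 airtight:
\begin{itemize}
\item Pointwise convergence when $\varepsilon_2/\varepsilon_1\to c>0$ uses continuity of $\psi$. This holds because $N_1,N_2$ have continuous laws.
\item Integrability of the envelope is cleanest via $\psi(s_1,s_2)\le\PP(N_1>\max(s_1,s_2))\le K e^{-\lambda(s_1+s_2)/2}$ for some $K,\lambda>0$, which makes the triple integral factorise.
\end{itemize}
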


\medskip

	\begin{lem}[Shift–invariance]\label{lem:dep_shift}
		We have
		\[
		\chi_{\vY} = \chi_{\vV},
		\qquad
		\eta_{\vY} = \eta_{\vV}.
		\]
	\end{lem}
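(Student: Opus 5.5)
The plan is to compare the joint tail probability of $\vY=\vV-\vS$ with that of $\vV$ at the level of quantiles. Here $\vS$ is independent of $\vV$, non-negative, and does not depend on the threshold. Write $u_j(q)=F_{V_j}^{-1}(q)$ and $v_j(q)=F_{Y_j}^{-1}(q)$.

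\textbf{Step 1: the marginal quantiles agree to first order.} By \eqref{eq:part1} in the proof of Proposition~\ref{prop:marg_Y}\ref{prop:marg_Y_tail}, with $\beta_j=1$, we have $\overline F_{Y_j}(x)/\overline F_{V_j}(x)\to1$. Since $\overline F_{V_j}\in\RV(-1/\xi_j)$, this gives $v_j(q)\sim u_j(q)$ as $q\uparrow1$. More precisely, for any $\varepsilon>0$ and $q$ close to $1$,
\[
(1-\varepsilon)u_j(q)\le v_j(q)\le u_j(q).
\]
The upper bound holds because $Y_j\le V_j$.

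\textbf{Step 2: sandwich the joint exceedance probability.} For the upper bound, $Y_j\le V_j$ gives
\[
\PP(Y_1>v_1,Y_2>v_2)\le \PP(V_1>(1-\varepsilon)u_1,\,V_2>(1-\varepsilon)u_2).
\]
For the lower bound, fix $M$ with $\PP(\max(S_1,S_2)\le M)\ge 1-\varepsilon$. Independence of $\vS$ and $\vV$ then gives
\[
\PP(Y_1>v_1,Y_2>v_2)\ge(1-\varepsilon)\,\PP(V_1>u_1+M,\,V_2>u_2+M),
\]
and since $u_j(q)\to\infty$ we have $u_j+M\le(1+\varepsilon)u_j$ eventually. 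It therefore remains to control how $p_{\vV}(c):=\PP(V_1>cu_1(q),V_2>cu_2(q))$ depends on $c$ near $1$.

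\textbf{Step 3: the key step.} I would show that the joint survival function of $\vV$, evaluated along the ray $(cu_1(q),cu_2(q))$, is regularly varying jointly in $c$ with ratio $p_{\vV}(c)/p_{\vV}(1)\to c^{-\kappa}$ for some finite $\kappa$. Lemma~\ref{lem:extra} gives $p_{\vV}(1)\sim C(1-q)^{1/\eta}$ with $1/\eta=1+\max(\alpha_1,\alpha_2)/(\alpha+\max(\alpha_1,\alpha_2))$. Because $u_j(q)$ is a power of $(1-q)$ up to a slowly varying factor, and this asymptotic is derived from the explicit decomposition in Lemma~\ref{lem:SF_V}, whose terms $A_1$, $A_2$, $A_{12}$ are products of power functions of $x_1,x_2$ times bounded expectations, the same derivation gives $p_{\vV}(c)\sim C c^{-\kappa}(1-q)^{1/\eta}$ with $\kappa$ a combination of $\alpha,\alpha_1,\alpha_2$. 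I would verify this by tracking $c$ through the argument of Lemma~\ref{lem:extra}. As a softer alternative, monotonicity together with a Potter-type bound suffices: any bounds of the form $c^{-K}$ are enough for $\eta$, since they only contribute $O(1)$ to the logarithm.

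\textbf{Step 4: conclude for $\eta$ and $\chi$.} Dividing the sandwich by $1-q$ and letting $q\uparrow1$, then $\varepsilon\downarrow0$, gives
\[
\chi_{\vY}\in[(1+\varepsilon)^{-\kappa}(1-\varepsilon)\chi_{\vV},\,(1-\varepsilon)^{-\kappa}\chi_{\vV}],
\]
so $\chi_{\vY}=\chi_{\vV}$. In the asymptotically independent case both sides are $0$ directly. For $\eta$, taking logarithms shows that $\log\PP(F_{Y_1}(Y_1)>q,F_{Y_2}(Y_2)>q)-\log p_{\vV}(1)=O(1)$, whereas $\log(1-q)\to-\infty$. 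Hence $\eta_{\vY}=\eta_{\vV}$.

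The $w=1/2$ and $w\in\{0,1\}$ cases follow by continuity or by the same bounds. The main obstacle is Step 3, namely the uniform control in $c$ of the joint tail of $\vV$.
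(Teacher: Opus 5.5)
Your proposal is correct and takes essentially the same route as the paper. Both reduce the claim to the marginal equivalence $\overline F_{Y_j}\sim\overline F_{V_j}$ (hence $F_{Y_j}^{-1}(q)\sim F_{V_j}^{-1}(q)$) plus the insensitivity of the joint tail of $\vV$ to $(1+o(1))$-rescalings of the marginal quantiles, which the paper records as \eqref{eq:extra} and, like your Step~3, justifies by appeal to the proof of Lemma~\ref{lem:extra}; the only difference is that the paper absorbs the random shift by conditioning on $\vS$ and applying bounded convergence to $R(q)\in[0,1]$, whereas you use the monotone sandwich $Y_j\le V_j$ with a truncation $\max(S_1,S_2)\le M$, which is equally valid.
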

	\medskip

\begin{proof}[\normalfont\bfseries Proof of Proposition~\ref{prop:tail_dep}]
We compute $\eta_{\vV}$ and $\chi_{\vV}$ which by Lemma~\ref{lem:dep_shift} give $\eta_{\vY}$ and $\chi_{\vY}$. Also, suppose that $w \in [0,1] \setminus \{\tfrac{1}{2}\}$; the case $w = \tfrac12$ can be obtained by applying l'Hôpital's rule. 
\begin{description}
\item[\ref{prop:tail_dep_eta}] Lemma~\ref{lem:extra} shows that the joint tail of $\vV$ is of Ledford--Tawn form $L(1-q) (1-q)^{1/\eta}$ with $L$ slowly varying at $0$ and
\[
\frac{1}{\eta}=1+\frac{\max(\alpha_1,\alpha_2)}{\alpha+\max(\alpha_1,\alpha_2)}=\frac{\alpha+2 \max(\alpha_1,\alpha_2)}{\alpha+\max(\alpha_1,\alpha_2)},
\qquad 
\eta=\frac{\alpha+\max(\alpha_1,\alpha_2)}{\alpha+2 \max(\alpha_1,\alpha_2)}.
\]
\item[\ref{prop:tail_dep_chi}] Suppose that $\alpha_1 = \alpha_2 = 0$, in which case $V_1$ and $V_2$ have the same distribution.
            We can isolate $E$ in the joint survival function of $\vV$ (Lemma~\ref{lem:SF_V}) and apply elementary algebra to obtain
			\begin{equation*}
				\PP(V_j > x) = \EE \left[ \exp\left( -\frac{x}{w} G + \frac{1-w}{w} E_j \right) \mathbbm{1} \{x G > (1-w) E_j \} \right] 
				+ \PP \left( x G < (1-w) E_j \right),
			\end{equation*}
            and
			\begin{align}
				\PP(V_1 > x, V_2 > x) = \EE &\left[ \exp\left( -\frac{x}{w} G + \frac{1-w}{w} \min(E_1, E_2) \right) \mathbbm{1} \{x G > (1-w) \min(E_1, E_2) \} \right] \nonumber \\
				&+ \PP \left( x G < (1-w) \min(E_1, E_2) \right).
			\end{align}
			Since $E_j \sim \text{Exp}(1)$ and $\min(E_1, E_2) \sim \text{Exp}(2)$, both expressions share the same structure. One can then compute:
			\[
			\PP(V_1 > x) = p_1(x), \qquad \PP(V_1 > x, V_2 > x) = p_2(x),
			\]
			with
            \[
            p_s(x) =\frac{s w}{(s+ 1)w - 1 } \left(1 + \frac{x}{w} \right)^{-\alpha}
                        -\frac{1-w}{(s+1)w-1}\left(1+\frac{s\,x}{1-w}\right)^{-\alpha}
            \]
            for $s = 1,2$. This expression holds for $w \neq (s+1)^{-1}$; if not, we take the continuous extension.
            Finally, by the definition of $\chi$, taking the ratio $p_2(x)/p_1(x)$ and letting $x \to \infty$ yields the stated closed-form expression.
\item[\ref{prop:tail_dep_eta_bounds}]
		Assume without loss of generality that $\alpha_2 \ge \alpha_1$, i.e., $\xi_1 \ge \xi_2$. Using $\xi_j = 1/(\alpha + \alpha_j)$ and $\eta = (\alpha + \alpha_2)/(\alpha + 2\alpha_2)$, we express
		\[
		\alpha = \frac{2\eta - 1}{\eta\,\xi_2},
		\qquad
		\alpha_1 = \frac{1}{\xi_1} - \frac{2\eta - 1}{\eta\,\xi_2}.
		\]
		The constraint on the shape parameter $\alpha_1 \ge 0$ leads to the inequality
		\[
		\eta \le \frac{\xi_1}{2\xi_1 - \xi_2}.
		\]
		On the other hand, since $\eta = (\alpha + \alpha_2)/(\alpha + 2\alpha_2) \ge 1/2$ for all $\alpha,\alpha_2 \ge 0$, we obtain
		\[
		\eta \in \left[ \frac{1}{2},\; \frac{\xi_1}{2\xi_1 - \xi_2} \right].
		\]
		Repeating the same reasoning with $\alpha_1 \ge \alpha_2$ (i.e., $\xi_2 \ge \xi_1$) yields the symmetric bound with indices swapped, completing the proof.
	\end{description}
    \end{proof}

\section*{Acknowledgements}

This work was supported by the Fonds de la Recherche 
Scientifique--FNRS under Grant No F.4036.24.

Computational resources have been provided by the supercomputing facilities of the Université catholique de Louvain (CISM/UCL) and the Consortium des Équipements de Calcul Intensif en Fédération Wallonie Bruxelles (CÉCI) funded by the Fonds de la Recherche Scientifique de Belgique (F.R.S.-FNRS) under convention 2.5020.11 and by the Walloon Region.

Part of Naveau’s research work was supported by the Agence Nationale de la Recherche via the SICIM and SHARE PEPR Maths-Vives project (France 2030 ANR-24-EXMA-0008), EXSTA grant (ANR-23-CE40-0009-01), PORC-EPIC, the PEPR TRACCS program (PC4 EXTENDING, ANR-22-EXTR-0005), and the PEPR IRIMONT (France 2030 ANR-22-EXIR-0003). He has also benefited from the Geolearning research chair.

\clearpage
\appendix

\numberwithin{equation}{section}
\renewcommand{\thesection}{A.\arabic{section}}
\renewcommand{\thefigure}{A.\arabic{figure}}
\renewcommand{\thetable}{A.\arabic{table}}
\renewcommand{\theequation}{A.\arabic{equation}}

\section*{Supplementary material}
\addcontentsline{toc}{section}{Supplementary material}

    \section{Additional proofs}

  \begin{proof}[\normalfont\bfseries Proof of Lemma~\ref{lem:marg_V}]
		\text{ }
		\begin{enumerate}
			\item For \(w \in (0,1)\), the numerator \(N = w E + (1-w) E_j\) of $V_j$ is a sum of two independent exponential variables with different scale parameters. Its distribution is hypo-exponential with cdf
			\[
			F_N(u) = 1 - \frac{(1 - w) e^{-u / w} - w e^{-u / (1 - w)}}{2w - 1}.
			\]
			Let $f_{G + G_j}$ denote the density of \(G + G_j \sim \Gamma(1/\xi_j, 1)\). Straightforward calculations give the cdf of \(V_j = N / (G + G_j)\), for $x \geq 0$, 
			\begin{align}\label{eq:FV}
				F_{V_j}(x) = \PP(V_j \le x) & = \int_0^\infty F_N(x t) \cdot f_{G+G_j}(t) \, \diff t \notag \\
				& = 1 - \frac{w(1 + \frac{x}{w})^{-1/\xi_j} - (1-w) (1 + \frac{x}{1 - w})^{-1/\xi_j}}{2w - 1}. 
			\end{align}
            Differentiating \eqref{eq:FV} with respect to \(x\) yields the stated density.  
            The cases \(w \in \{0,1\}\) can be obtained 
            directly from Lemma~\ref{lem:mixture_GPD}. The expression for \(w = \tfrac{1}{2}\) can be obtained by applying l'H\^opital's rule to the above formula.

			\item The mean and variance follow from the facts that, writing $V_j = N M$ for $M = (G + G_j)^{-1}$,
      \[
\begin{aligned}
\EE[N]        &= 1, 
&\qquad \EE[N^2]     &= 1 + w^2 + (1-w)^2, \\
\EE[M]   &= \frac{1}{\xi_j^{-1}-1}, 
&\qquad \EE[M^{2}]  &= \frac{1}{(\xi_j^{-1}-1)(\xi_j^{-1}-2)}.
\end{aligned}
\]

		\end{enumerate}
	\end{proof}
    
  \begin{proof}[\normalfont\bfseries Proof of Lemma~\ref{lem:asymV}]
  
            For $\lambda > 0$ and $\alpha > 0$, we have 
			\begin{equation}\label{eq:approx}
				\frac{(1 + \lambda x)^{-\alpha}}{(\lambda x)^{-\alpha}} = 1 - \alpha (\lambda x)^{-1} + o(x^{-1}), \qquad \text{ as } x \rightarrow \infty.
			\end{equation}
			Expanding $(1 + \lambda x)^{-1/\xi_j}$ in \eqref{eq:FV} according to \eqref{eq:approx} for $\lambda = 1/w$ and $\lambda = 1/(1-w)$ and collecting constants, we get
			\[
			\overline F_{V_j} (x) =  C x^{-1/\xi_j} \left( 1 - \frac{D}{\xi_j x} + o(x^{-1}) \right), \qquad \text{as } x \rightarrow \infty,
			\]
            for $w \in [0,1] \setminus  \{\tfrac 12\}$, where $C$ and $D$ are given in \eqref{eq:CandD}.
            When $w = \tfrac 12$, l'H\^opital's rule gives
            \[
            C = 2^{-1/\xi_j} (\xi_j^{-1} + 1) \quad \text{ and } \quad D = \frac{1 + 2 \xi_j}{2 + 2 \xi_j}.
            \]
            \end{proof}

            \begin{proof}[\normalfont\bfseries Proof of Lemma~\ref{lem:SF_V}]
We write
\[
\PP \left[V_1>x_1,V_2>x_2\right]
=\PP\left[E >\max_{j=1,2}\left\{\frac{x_j(G+G_j)-(1-w)E_j}{w}\right\}\right].
\]
Conditioning on all variables except $E$ and defining
$M_j:=x_j(G+G_j)-(1-w)E_j$, we obtain
\[
\PP(V_1>x_1,V_2>x_2)
=\EE\left[e^{-(M_1\vee M_2)/w}\1\{M_1\vee M_2>0\}\right]
+\PP\{M_1\vee M_2<0\}.
\]
Decomposing on $\{M_1>M_2\}$ and $\{M_2\ge M_1\}$ yields
\[
\PP(V_1>x_1,V_2>x_2)= A_1+A_2+A_{12},
\]
with $A_{12}=\PP\{M_1\vee M_2<0\}$,
\[
A_1=\EE\left[e^{-M_1/w}\1\{M_1>M_2\vee0\}\right],\, \text{ and }
A_2=\EE\left[e^{-M_2/w}\1\{M_2>M_1\vee0\}\right].
\]

\begin{description}
\item[Term $A_{12}$:]
Since $M_1\vee M_2<0$ iff $(1-w)E_j>x_j(G+G_j)$ for $j=1,2$, conditioning on
$(G,G_1,G_2)$ and using independence of exponentials gives
\[
A_{12}
=\EE\left[e^{-\frac{x_1+x_2}{1-w}G}
e^{-\frac{x_1}{1-w}G_1}
e^{-\frac{x_2}{1-w}G_2}\right].
\]
Applying the Laplace transform of Gamma variables yields
\[
A_{12}
=\left(1+\frac{x_1+x_2}{1-w}\right)^{-\alpha}
\left(1+\frac{x_1}{1-w}\right)^{-\alpha_1}
\left(1+\frac{x_2}{1-w}\right)^{-\alpha_2}.
\]
\item[Term $A_1$:]
Using $M_j=x_j(G+G_j)-(1-w)E_j$, we write
\[
A_1=\EE\left[e^{-\frac{x_1}{w}(G+G_1)}
e^{\frac{1-w}{w}E_1}\1\{M_1>M_2,\;M_1>0\}\right].
\]
The constraints are equivalent to $E_1\le B$, where
\[
B=\min\left\{\frac{x_1}{1-w}(G+G_1),\;
\frac{x_1}{1-w}(G+G_1)+E_2-\frac{x_2}{1-w}(G+G_2)\right\}.
\]
Conditioning on all variables except $E_1$ and evaluating the exponential
integral (valid for $w\neq\frac12$) yields
\[
A_1=\frac{w}{1-2w}\,
\EE\left[e^{-\frac{x_1}{w}(G+G_1)}
\Big(e^{\frac{1-2w}{w}B_+}-1\Big)\right],
\qquad B_+=\max(B,0).
\]
Introduce the rescaled variables
$\tilde G=\frac{w}{x_1+w}G$ and $\tilde G_1=\frac{w}{x_1+w}G_1$.
Using scaling invariance of Gamma distributions,
\[
A_1=\frac{w}{1-2w}\left(1+\frac{x_1}{w}\right)^{-(\alpha+\alpha_1)}
\EE\left[e^{\frac{1-2w}{w}\tilde B_+}-1\right],
\]
where $\tilde B$ denotes the corresponding rescaled bound.
Since the integrand vanishes on $\{\tilde B\le0\}$,
\[
\EE\left[e^{\frac{1-2w}{w}\tilde B_+}-1\right]
=\EE\left[\big(e^{\frac{1-2w}{w}\tilde B_+}-1\big)\1\{\tilde B>0\}\right].
\]
Conditioning on $(G,G_1,E_2)$ and integrating out $G_2$ gives
\begin{equation}\label{eq:A1_final}
A_1=\frac{w(1-w)^{\alpha_2}}{1-2w}
\left(1+\frac{x_1}{w}\right)^{-(\alpha+\alpha_1)}
x_2^{-\alpha_2}\,
\EE[\mathcal I_1],
\end{equation}
where $\mathcal I_1$ is the integral term
\[
\mathcal I_1=\int_0^{H_1}\left(\exp\left(\frac{1-2w}{w}\min\{\tilde D_1,\,H_1-y\}\right)-1\right)\frac{y^{\alpha_2-1}}{\Gamma(\alpha_2+1)}\exp\left(-\frac{1-w}{x_2}y\right)\,dy,
\]
with
\[
\tilde D_1=\frac{x_1w}{(1-w)(x_1+w)}(G+G_1),\qquad H_1=\tilde D_1+E_2-\frac{x_2w}{(1-w)(x_1+w)}G.
\]
\item[Term $A_2$:]
The expression for $A_2$ follows by symmetry after exchanging indices
$1$ and $2$.
\end{description}

Combining $A_1$, $A_2$ and $A_{12}$ yields the stated survival function.
\end{proof}

\medskip

            \begin{proof}[\normalfont\bfseries Proof of Lemma~\ref{lem:extra}]
Let $(x_1,x_2)=\big(u_1(q),u_2(q)\big)$ with $q \uparrow 1$. Since $V_j$ has a GP tail (see the proof of Proposition~\ref{prop:marg_Y}~\ref{prop:marg_Y_tail}), it is regularly varying with index $\alpha+\alpha_j$, hence for some constants $C_j>0$,
\[
u_j(q) := F_{V_j}^{-1}(q) \sim C_j\,(1-q) ^{-1/(\alpha+\alpha_j)},\qquad q\uparrow1,
\]
and therefore $u_j(q)^{-\beta}\sim C_j^{-\beta}\, (1-q)^{\beta/(\alpha+\alpha_j)}$ for any $\beta>0$. In particular, assume without loss of generality that $\alpha_1<\alpha_2$, in which case $u_2(q)/u_1(q)\to0$. We analyse successively the three contributions $A_1,A_2,A_{12}$ from
\[
\PP(V_1>x_1,V_2>x_2)=A_1(x_1,x_2)+A_2(x_1,x_2)+A_{12}(x_1,x_2),
\] 
obtained in Lemma~\ref{lem:SF_V}.
\begin{description}
\item[Contribution of $A_{12}$:]
Using $(1+x/a)^{-k}\sim (x/a)^{-k}$ and $x_1 + x_2 \sim x_1$, we obtain
\[
A_{12}(x_1,x_2)\sim (1-w)^{\alpha+\alpha_1+\alpha_2}\,x_1^{-(\alpha+\alpha_1)} x_2^{-\alpha_2}
\sim c_{12}\,(1-q)^{\,1+\frac{\alpha_2}{\alpha+\alpha_2}},
\]
for some constant $c_{12}\in(0,\infty)$.
\item[Contribution of $A_{1}$:]
From \eqref{eq:A1_final} and $(1+x_1/w)^{-(\alpha+\alpha_1)}\sim w^{\alpha+\alpha_1}x_1^{-(\alpha+\alpha_1)}$, we get
\[
A_1 (x_1,x_2)
\sim K_1\,x_1^{-(\alpha+\alpha_1)} x_2^{-\alpha_2}\,\EE \left[\mathcal I_1(x_1,x_2)\right],
\qquad 
K_1:=\frac{w(1-w)^{\alpha_2}}{1-2w}\,w^{\alpha+\alpha_1}.
\]
In the regime $x_1,x_2\to\infty$ with $x_2/x_1\to0$, $\mathcal I_1(x_1,x_2)\to \mathcal I_{1,\infty}$ a.s. and is dominated by an integrable envelope (Gamma moments), so dominated convergence yields $\EE[\mathcal I_1(x_1,x_2)]\to\kappa_1\in(0,\infty)$. Consequently,
\[
A_1 (x_1,x_2)\sim c_1\,x_1^{-(\alpha+\alpha_1)} x_2^{-\alpha_2}
\sim c_1\,(1-q)^{\,1+\frac{\alpha_2}{\alpha+\alpha_2}},
\]
for some $c_1\in(0,\infty)$.
\item[Contribution of $A_{2}$:]
By symmetry (swap indices $1$ and $2$ in \eqref{eq:A1_final}),
\[
A_2(x_1,x_2)=\frac{w(1-w)^{\alpha_1}}{1-2w}\left(1+\frac{x_2}{w}\right)^{-(\alpha+\alpha_2)}x_1^{-\alpha_1}\,\EE[\mathcal I_2(x_1,x_2)].
\]
In the regime $x_1,x_2\to\infty$ with $x_2/x_1\to0$, the constraint defining $\mathcal I_2$ forces $G=O(x_2/x_1)$; with the scaling $G=(x_2/x_1)U$ and the small-argument expansion of the Gamma density, one obtains
\[
\EE[\mathcal I_2(x_1,x_2)]\sim \kappa_2\left(\frac{x_2}{x_1}\right)^{\alpha},
\qquad \kappa_2\in(0,\infty),
\]
and therefore
\[
A_2 (x_1,x_2)\sim c_2\,x_1^{-(\alpha+\alpha_2)} x_2^{-\alpha_1}
\sim c_2\,(1-q)^{\,1+\frac{\alpha_2}{\alpha+\alpha_2}},
\]
for some $c_2\in(0,\infty)$.
\end{description}
If $\alpha_1 > \alpha_2$, the same expansions hold with the roles of $\alpha_1$ and $\alpha_2$ reversed. Hence, we find
\begin{align}\label{eq:final}
\bar{F}_{V_1,V_2} \left(u_1(q), u_2(q) \right) \sim (c_1+c_2+c_{12})\,(1-q)^{\,1+\frac{\max(\alpha_1,\alpha_2)}{\alpha+\max(\alpha_1,\alpha_2)}}.  
\end{align}
            \end{proof}

  \begin{proof}[\normalfont\bfseries Proof of Lemma~\ref{lem:dep_shift}]

Since $V_j$ is regularly varying and $S_j \geq 0$, by dominated convergence, we have 
$ \bar F_{Y_j}(x) = \bar F_{V_j}(x) (1 + o(1))$ as $x\to\infty$.
By \citet[Proposition 2.6(vi)]{resnick_heavy-tail_2007}, this yields
$w_j(q):= F^{-1}_{Y_j}(q)= F^{-1}_{V_j}(q)(1 + o(1))$ as  $q\uparrow 1$.
From the proof of Lemma~\ref{lem:extra}, recall that $u_j(q) := F_{V_j}^{-1}(q) \sim C_j (1-q)^{-1/(\alpha + \alpha_j)}$. In particular, the proof of Lemma~\ref{lem:extra} gives
\begin{equation}\label{eq:extra}
\bar{F}_{V_1,V_2} \rbr{ u_1(q) \{1 + o(1)\}, u_2 (q) \{1 + o(1)\}} \sim \bar{F}_{V_1,V_2} \rbr{ u_1(q), u_2 (q)} 
\end{equation}

Next, note that
\[
\bar F_{Y_1,Y_2}(w_1(q),w_2(q))
= \EE \left[\bar F_{V_1,V_2}\big(w_1(q)+S_1,\;w_2(q)+S_2\big)\right].
\]
Define
\[
R(q):=\frac{\bar F_{V_1,V_2}(w_1(q)+S_1,\;w_2(q)+S_2)}
{\bar F_{V_1,V_2}(w_1(q),w_2(q))}.
\]
Conditionally on $(S_1,S_2)$, the shifts become deterministic constants. Since $w_j(q)\to\infty$, we have $S_j=o(w_j(q))$ a.s., so that equation~\ref{eq:extra} yields $R(q)\to1$ almost surely.
Moreover, since $0\le R(q)\le 1$, we have $\EE[R(q)]\to 1$ by bounded convergence and
\[
\bar F_{Y_1,Y_2}(w_1(q),w_2(q))
\sim
\bar F_{V_1,V_2}(w_1(q),w_2(q)),
\qquad q\uparrow 1.
\]
Finally, using again equation~\ref{eq:extra},
\[
\bar F_{Y_1,Y_2} \big(F^{-1}_{Y_1}(q),F^{-1}_{Y_2}(q)\big)
\sim
\bar F_{V_1,V_2} \left(F^{-1}_{V_1}(q),F^{-1}_{V_2}(q)\right),
\qquad q\uparrow 1.
\]
so that dividing by $(1-q)$ and letting $q \uparrow 1$ gives $\chi_{\vY} =\chi_{\vV}$.
If $\chi_{\vV}=0$, then taking logarithms gives also $\eta_{\vY}=\eta_{\vV}$.
\end{proof}

\section{Simulation Study: additional results}\label{app:sim_study}

This appendix provides additional simulation results that complement the analysis in Section~\ref{sec:simstudy}. We report results under randomized parameter configurations, as well as results for the penalized estimator in the fixed-parameter setting.

\subsection{Randomized parameters} \label{subsec:sim_random}    
We assess the global performance of the NBE under randomly generated parameter configurations. For each of the $K = 10^3$ replications, we draw the parameters
\[
(\eta, \xi_1, \xi_2, \beta_1, \beta_2, \sigma_T, w)
\]
independently from their prior distributions specified in Section~\ref{sec:estimation}. The values of $\eta$, $\xi_1$ and $\xi_2$ are then transformed into the model parameters $(\alpha, \alpha_1, \alpha_2)$ according to the construction described in Section~\ref{sec:bivariate}, and a dataset of size $N = 1000$ is simulated from the resulting model. Each dataset is then provided to the NBE for estimation.

Figure~\ref{fig:simstudy_params_and_chi_combined} compares true and estimated values for all parameters. The points cluster tightly around the diagonal, indicating accurate recovery of both the model parameters and the extremal dependence structure. Estimates of $\eta$ exhibit higher variability for lower values, and stabilize as $\eta$ increases. This pattern is expected since weaker dependence (smaller $\eta$) corresponds to lighter joint tails, providing less information for reliable estimation. In contrast, when $\eta$ approaches~$1$, the stronger dependence structure yields more stable and precise estimates across replications.

    \begin{figure}[htbp]
        \centering
        \includegraphics[width=\textwidth]{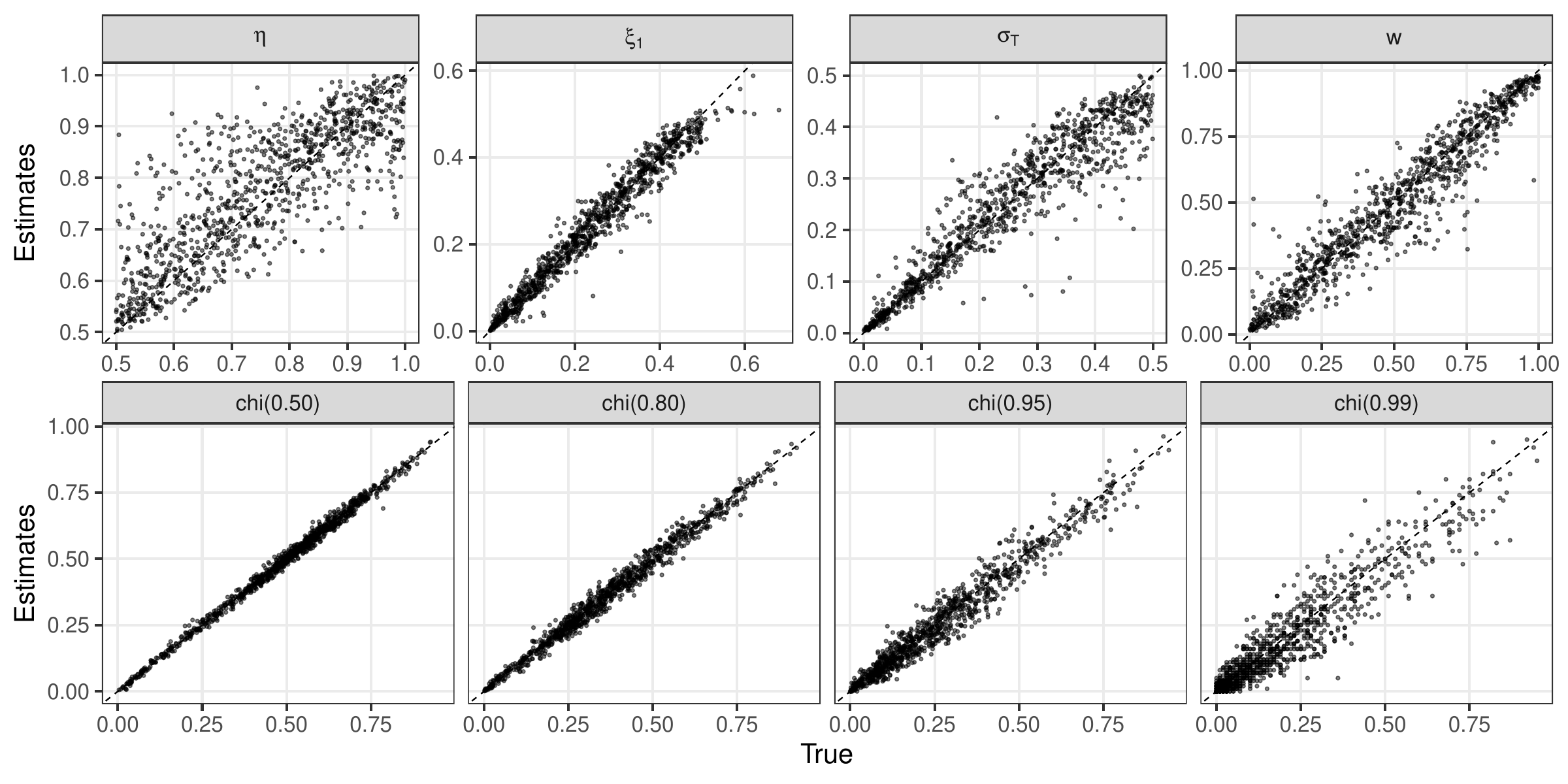}
         \includegraphics[width=0.8\textwidth]{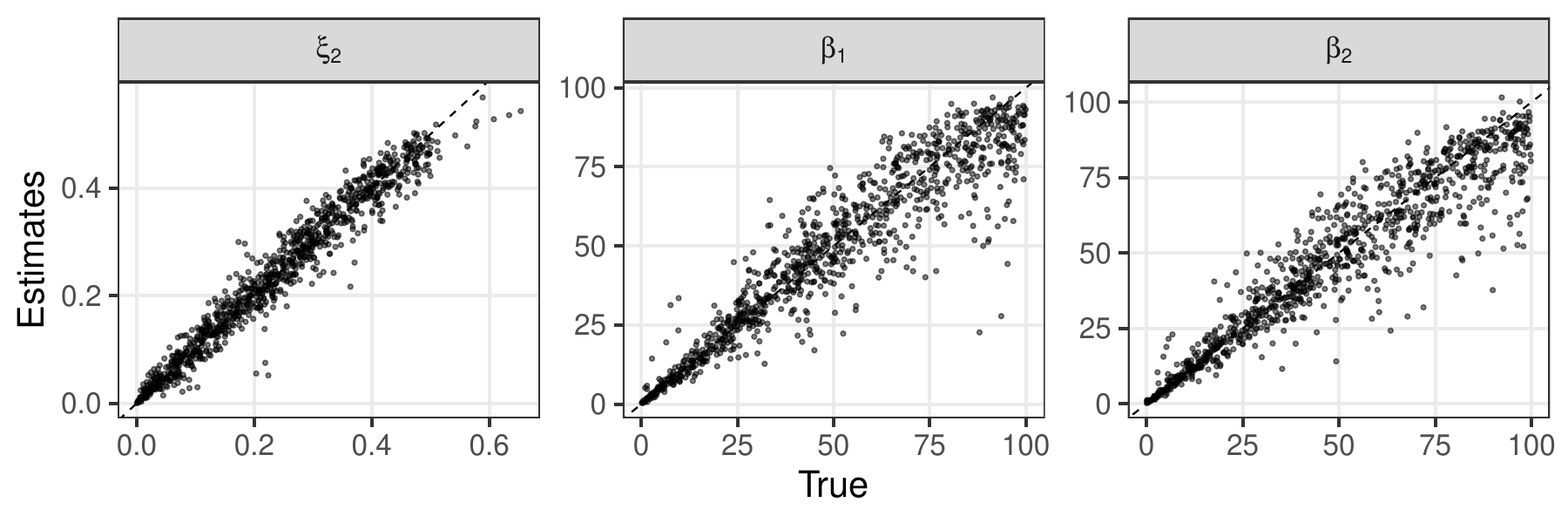}
        \caption{True vs estimated values for $\eta$, $\xi_1$, $\sigma_T$, $w$ (top), $\chi(q)$ with $q \in \{0.5, 0.8, 0.95, 0.99\}$ (middle), and $\xi_2, \beta_1$, $\beta_2$ (bottom) in the randomized parameter setting with $n = 1000$. Each point corresponds to one of the $K = 1000$ Monte Carlo replications.}
        \label{fig:simstudy_params_and_chi_combined}
    \end{figure}

\subsection{Simulation study: Penalised NBE (fixed-parameter setting)} \label{app:sim_pen_fixed}

In Figures~\ref{fig:pen_fixed_params_boxplots_full} and~\ref{fig:pen_fixed_params_chi_full} and Table~\ref{tab:ci_summary_pen}, we report additional results for the fixed-parameter simulation setting using the penalized NBE. The experimental design is identical to that described in Section~\ref{sec:simstudy}, with the only difference being the use of the penalized loss function.

\begin{figure}[htbp]
	\centering
	\includegraphics[width=\textwidth]{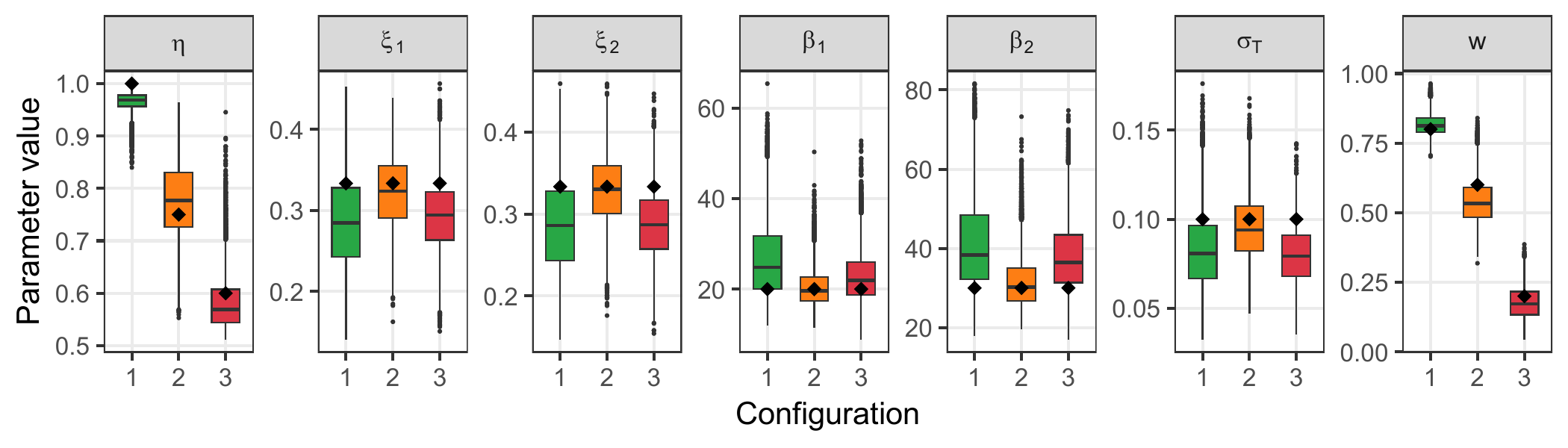}
    \caption{Boxplots of parameter estimates ($\eta, \xi_{1}, \xi_{2}, \beta_{1}, \beta_{2}, \sigma_{T}, w$) obtained with the penalised NBE under three dependence configurations: $\vthe^{(1)}$, $\vthe^{(2)}$, and $\vthe^{(3)}$ (see~\eqref{eq:true_theta_param}), based on $K = 1000$ samples of size $N = 1000$. Diamond markers indicate true values.}
	\label{fig:pen_fixed_params_boxplots_full}
\end{figure}

\begin{figure}[htbp]
 \centering 
 \includegraphics[width=0.98\textwidth]{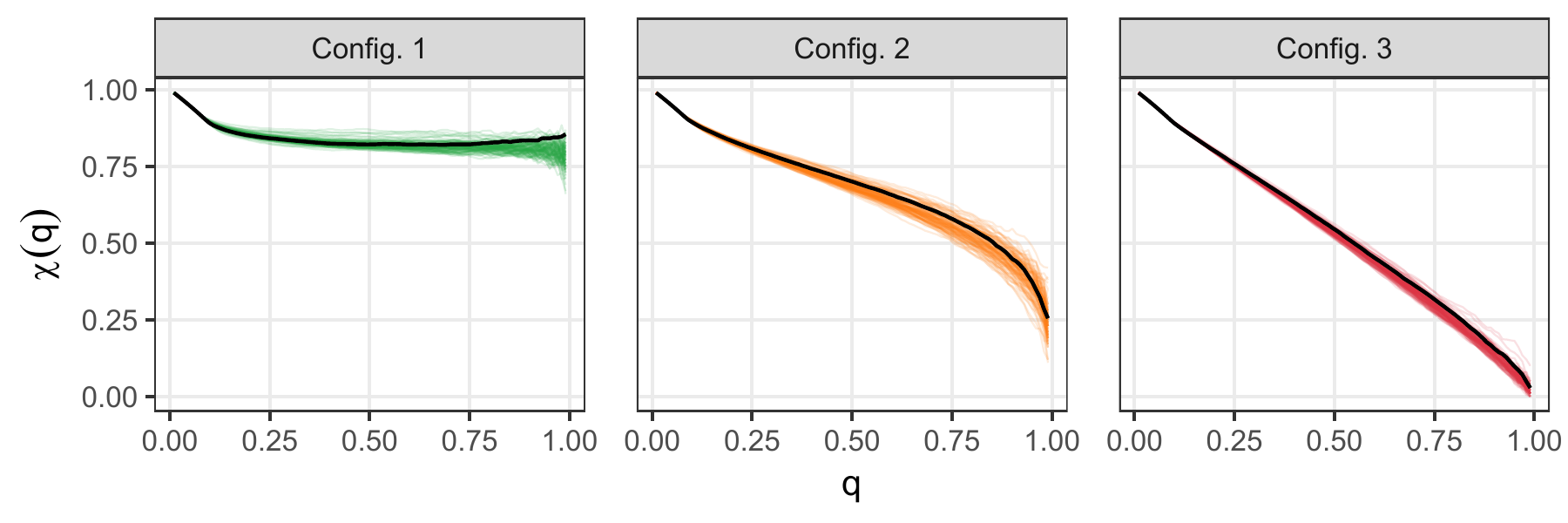}
 \caption{Estimated $\chi(q)$ curves obtained with the penalised NBE, where $\chi(q)$ is estimated using~\eqref{eq:chi_esti}. Solid lines correspond to reference curves empirically estimated from samples of size $10^5$ generated under the true parameters. Coloured lines show 100 estimated curves based on samples of size $10^4$ generated under fitted parameter values.}
 \label{fig:pen_fixed_params_chi_full}
\end{figure}

\begin{table}[htbp]
\centering
\begin{tabular}{lcccccccccc}
\toprule
 & \multicolumn{3}{c}{Config. 1} & \multicolumn{3}{c}{Config. 2} & \multicolumn{3}{c}{Config. 3} \\
\cmidrule(lr){2-4} \cmidrule(lr){5-7} \cmidrule(lr){8-10}
$\hat{\theta}_j$ & True & Cov. & Width & True & Cov. & Width & True & Cov. & Width \\
\midrule
$\hat{\eta}$        & 1.00 & 0.00 & 0.05 & 0.75 & 0.90 & 0.20 & 0.60 & 0.86 & 0.14 \\
$\hat{\xi}_{1}$     & 0.33 & 0.73 & 0.15 & 1.50 & 0.89 & 0.11 & 0.33 & 0.68 & 0.12 \\
$\hat{\xi}_{2}$     & 0.33 & 0.72 & 0.15 & 1.50 & 0.90 & 0.11 & 0.33 & 0.69 & 0.12 \\
$\hat{\beta}_{1}$   & 20.00 & 0.79 & 20.9 & 20.00 & 0.96 & 10.8 & 20.00 & 0.87 & 15.2 \\
$\hat{\beta}_{2}$   & 30.00 & 0.63 & 28.8 & 30.00 & 0.94 & 17.1 & 30.00 & 0.76 & 23.5 \\
$\hat{\sigma}_{T}$  & 0.10 & 0.78 & 0.06 & 0.10 & 0.89 & 0.04 & 0.10 & 0.59 & 0.04 \\
$\hat{w}$           & 0.80 & 0.86 & 0.10 & 0.60 & 0.78 & 0.21 & 0.20 & 0.92 & 0.15 \\
\midrule
$\chi(0.50)$ & 0.82 & 1.00 & 0.05 & 0.70 & 1.00 & 0.05 & 0.55 & 1.00 & 0.04 \\
$\chi(0.70)$ & 0.82 & 1.00 & 0.06 & 0.60 & 1.00 & 0.08 & 0.37 & 1.00 & 0.06 \\
$\chi(0.80)$ & 0.83 & 1.00 & 0.06 & 0.54 & 1.00 & 0.11 & 0.27 & 1.00 & 0.06 \\
$\chi(0.90)$ & 0.84 & 1.00 & 0.07 & 0.45 & 1.00 & 0.13 & 0.16 & 1.00 & 0.07 \\
$\chi(0.95)$ & 0.84 & 0.77 & 0.09 & 0.37 & 1.00 & 0.16 & 0.10 & 1.00 & 0.07 \\
$\chi(0.99)$ & 0.85 & 0.51 & 0.16 & 0.23 & 1.00 & 0.22 & 0.03 & 1.00 & 0.06 \\
\bottomrule
\end{tabular}
\caption{Empirical coverage probability (Cov.) and mean confidence interval width (Width) for model parameters and tail dependence curves $\chi(q)$ obtained with the penalised NBE under three fixed configurations defined in~\eqref{eq:true_theta_param} ($N=1000$, $K=1000$, $B=200$ bootstrap resamples per estimate), together with the true values.}
\label{tab:ci_summary_pen}
\end{table}

Overall, the penalised NBE introduces small biases, leading to a moderate reduction in coverage for several parameters, in particular $(\eta, \xi_1, \xi_2)$ and $\sigma_T$. This effect is more pronounced in the weakest dependence setting (Config.~3). 
In contrast, the estimation of $\chi(q)$ remains stable across most quantiles, with coverage close to the nominal level, except at the most extreme levels (e.g., $q=0.99$ in Config.~1). 
These results highlight a trade-off induced by penalisation: improved structural coherence of the estimates at the cost of a modest loss in estimation quality due to the introduction of small biases.

\section{Application to Belgian Weekly Rainfall: additional results} \label{app:be_rainfall_pen}

\subsection{Parameter stability and QQ-plots}

We present some supplementary figures for the Belgian rainfall application. 

Figure~\ref{fig:stations_squares_map} shows the spatial domain used in the analysis, corresponding to all ERA5 grid points in the rectangle $[0^\circ,7^\circ]$E $\times$ $[47^\circ,52^\circ]$N. 
The central reference grid point covering Brussels is marked, along with neighbouring points. 
Coordinates correspond to the centres of the ERA5 grid cells.

\begin{figure}[H]
	\centering
	\includegraphics[width=0.6\textwidth]{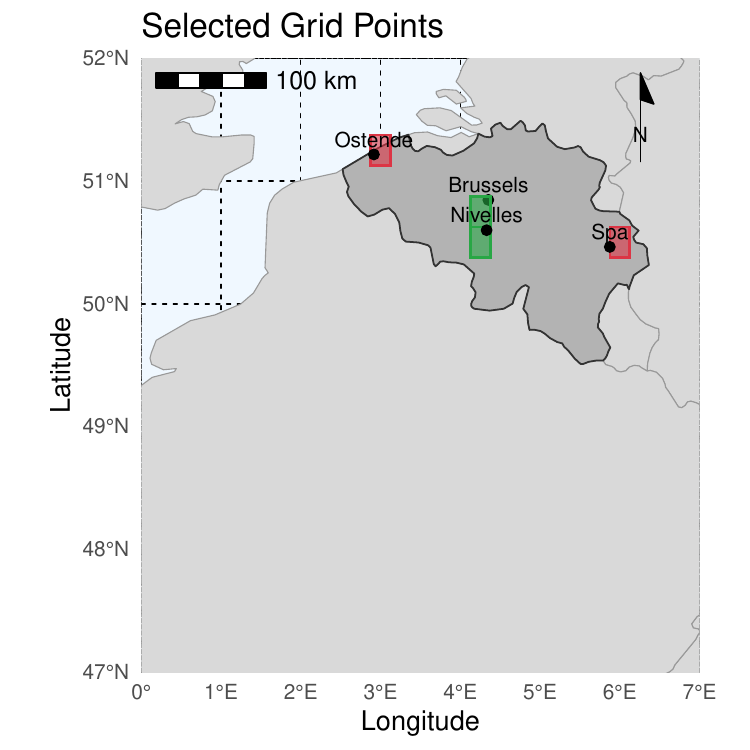}
	\caption{Reference grid point (black point, Brussels) and surrounding ERA5 grid points within the analysis domain $[0^\circ,7^\circ]$E $\times$ $[47^\circ,52^\circ]$N. Coordinates indicate grid cell centres.}
	\label{fig:stations_squares_map}
\end{figure}

Next, Figure~\ref{fig:pen_params_vs_threshold_Full_faceted} reports parameter estimates as functions of the marginal threshold, for all pairs involving the Brussels reference grid point. 
These plots provide a visual check of threshold stability for both marginal and dependence parameters.

\begin{figure}[H]
	\centering
	\includegraphics[width=0.95\textwidth]{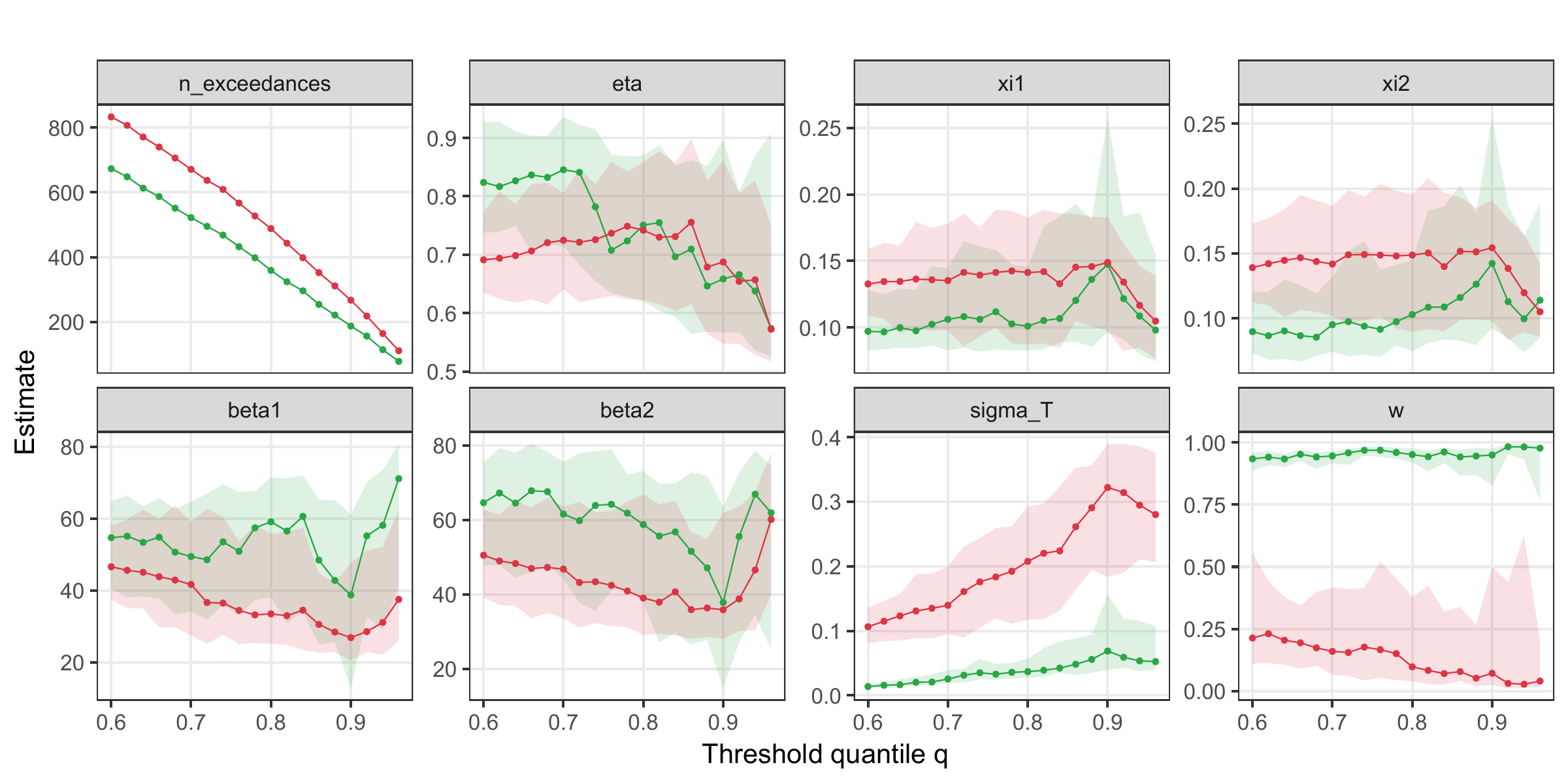}
    \caption{Estimated parameter values of the sBGP model (Definition~\ref{def:biv_model}) as a function of the threshold quantile $q$ for Brussels–Nivelles and Ostend–Spa. Green corresponds to the Brussels–Nivelles pair and red to the Ostend–Spa pair. Shaded areas denote the $95\%$ bootstrap confidence intervals.}
	\label{fig:pen_params_vs_threshold_Full_faceted}
\end{figure}

Finally, Figure~\ref{fig:spatial_pattern_param_pen2} complements Figure~\ref{fig:spatial_pattern_param} in the main paper, illustrating several additional parameters as spatial fields.

\begin{figure}[H]
	\centering
		\includegraphics[width=0.24\textwidth, trim=50 0 20 0, clip]{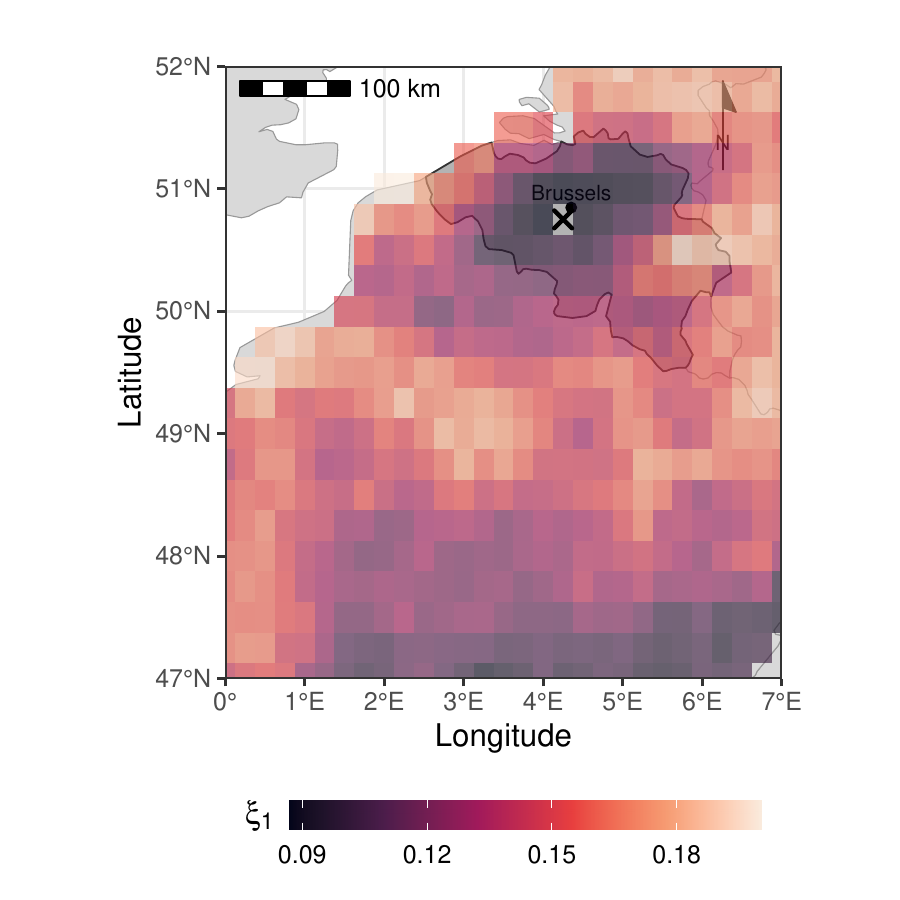}
		\includegraphics[width=0.24\textwidth, trim=50 0 20 0, clip]{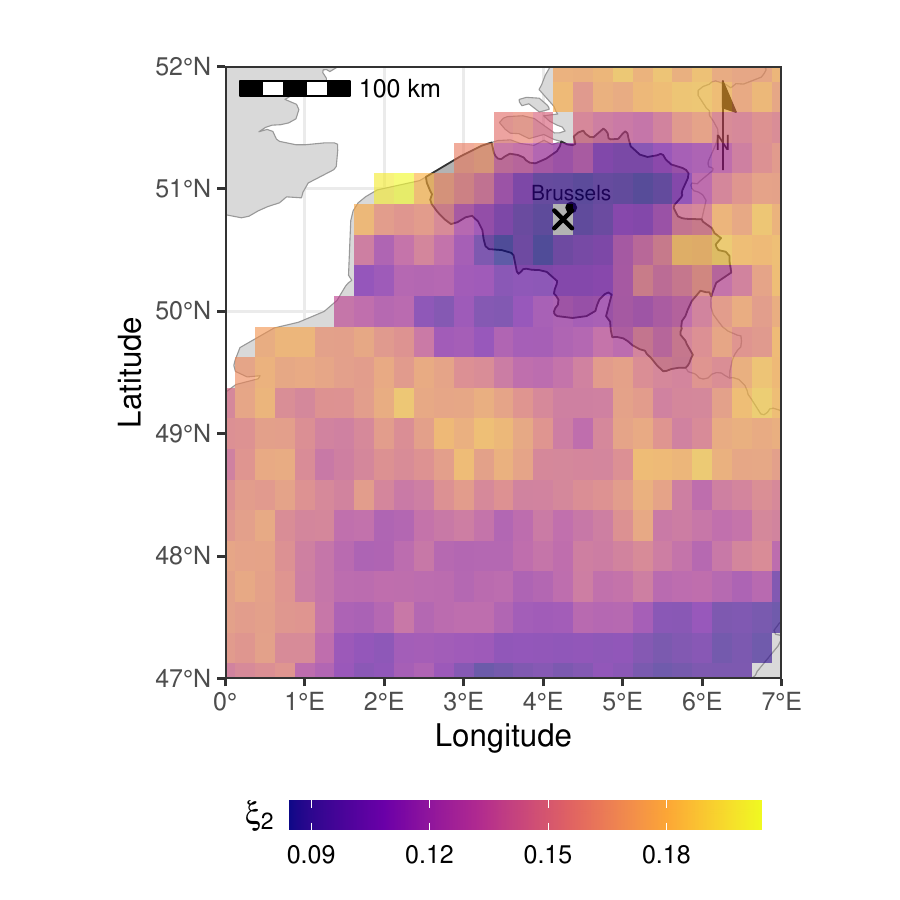}
		\includegraphics[width=0.24\textwidth, trim=50 0 20 0, clip]{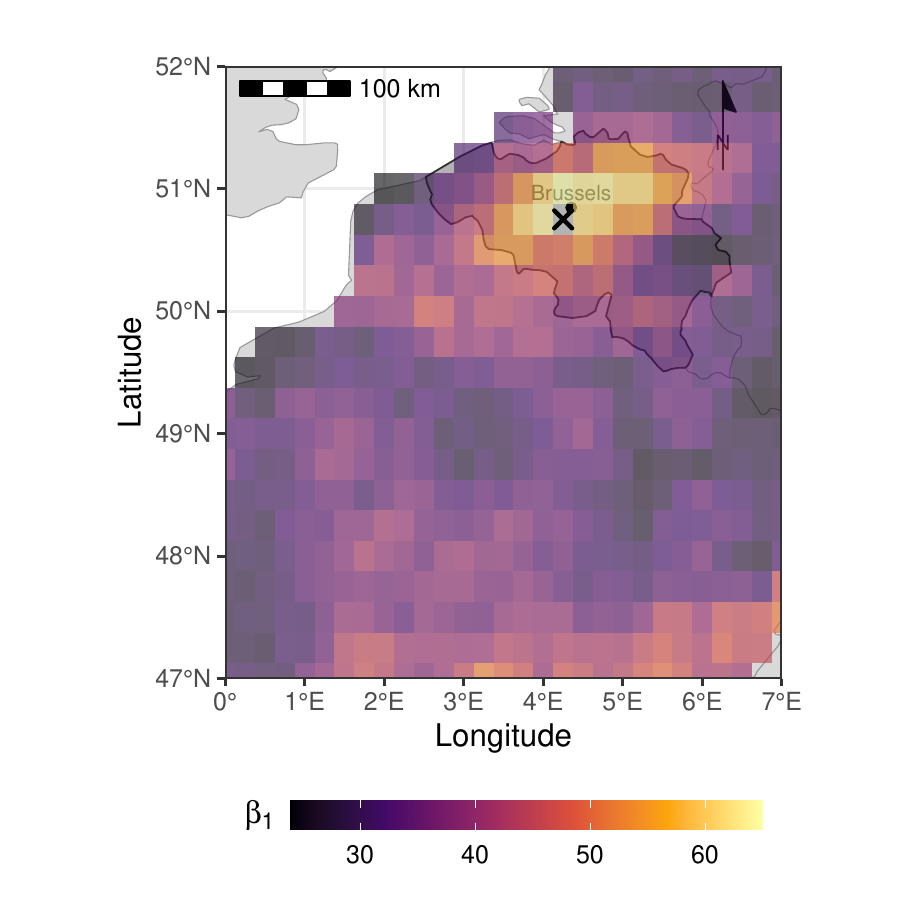}
		\includegraphics[width=0.24\textwidth, trim=50 0 20 0, clip]{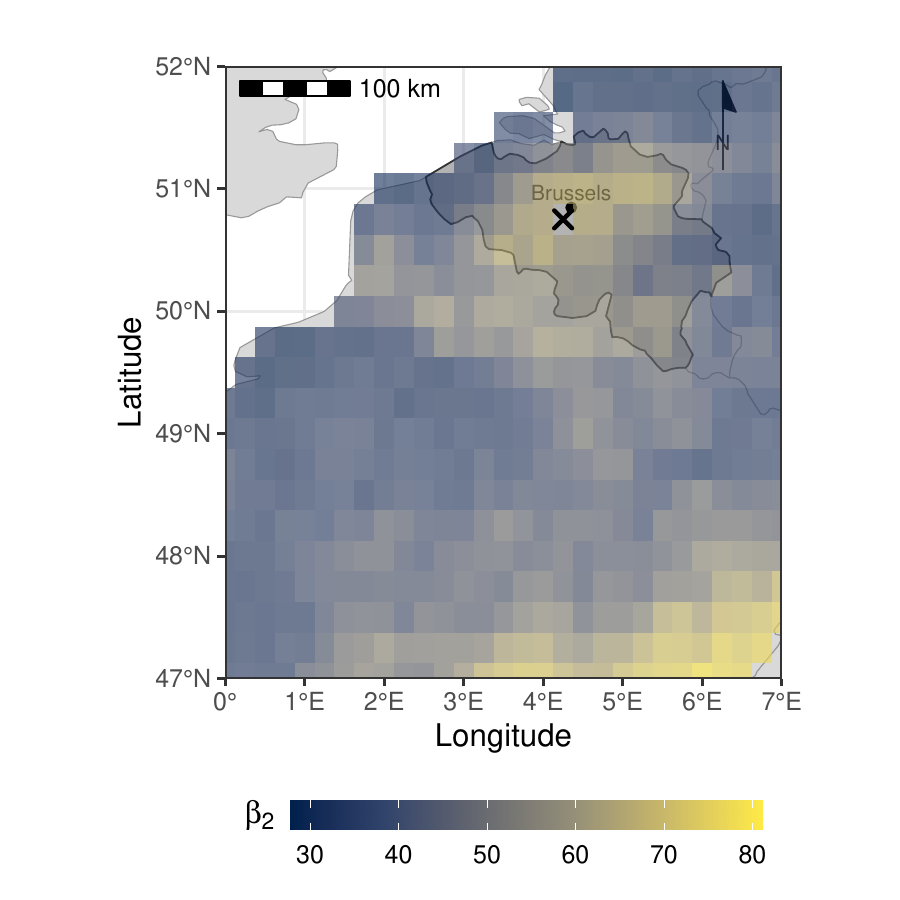}
	\caption{Spatial fields of estimated parameters of the sBGP model (Definition~\ref{def:biv_model}) for bivariate precipitation extremes between Brussels (reference location, black cross) and each grid point.}
	\label{fig:spatial_pattern_param_pen2}
\end{figure}

\subsection{Non-Penalised NBE} \label{app:be_rainfall_non_pen}
Figure~\ref{fig:spatial_pattern_param_non_pen} shows the spatial fields of estimated parameters for the Belgian rainfall application obtained using the NBE trained without penalisation. These results are provided for comparison with the penalised version discussed in the main text, and illustrate the impact of the loss specification on both parameter estimates and goodness-of-fit diagnostics.


\begin{figure}[H]
	\centering
	\begin{minipage}{0.32\textwidth}
		\centering
		\includegraphics[width=\linewidth]{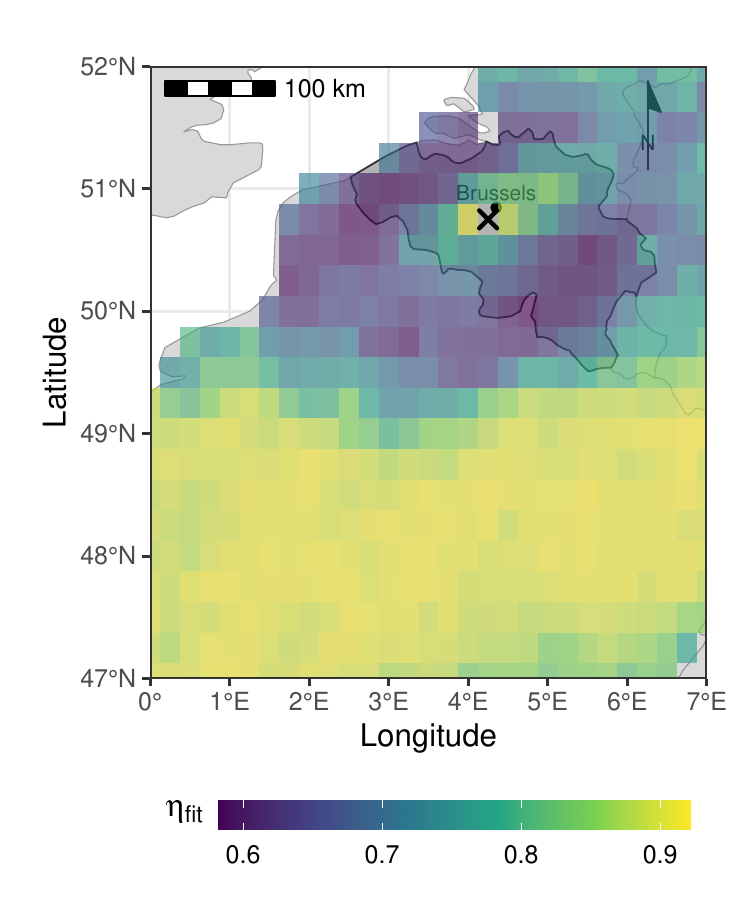}
	\end{minipage}\hfill
	\begin{minipage}{0.32\textwidth}
		\centering
		\includegraphics[width=\linewidth]{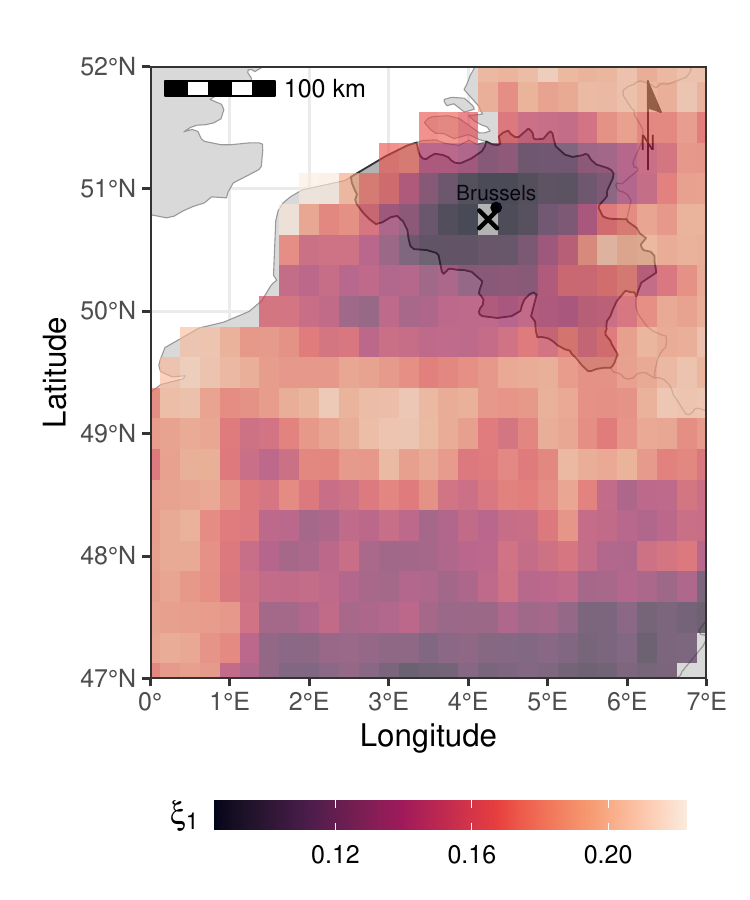}
	\end{minipage}\hfill
	\begin{minipage}{0.32\textwidth}
		\centering
		\includegraphics[width=\linewidth]{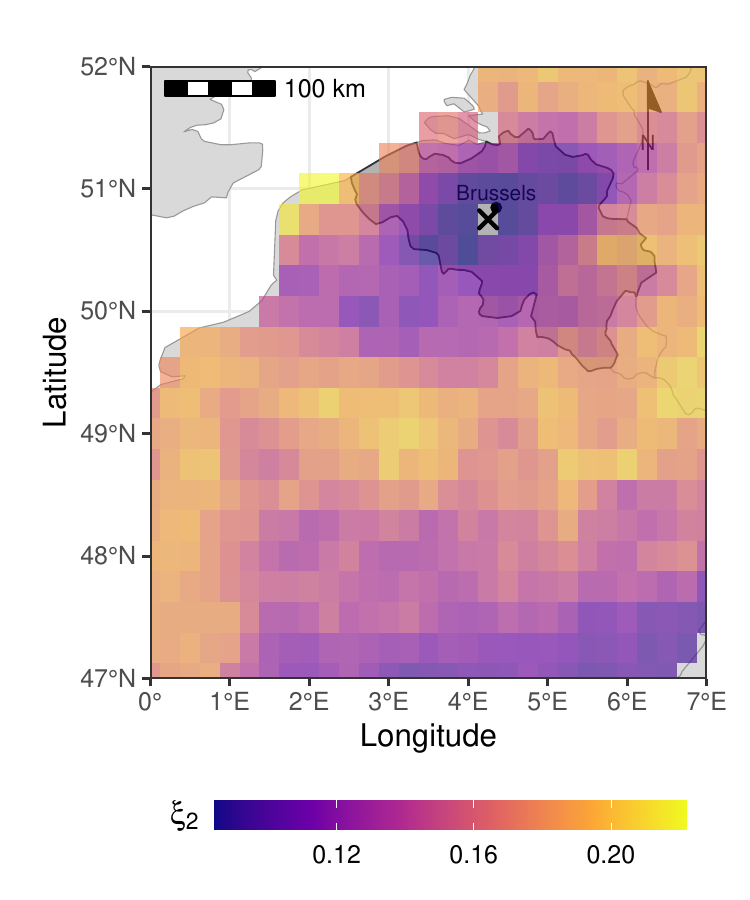}
	\end{minipage}
	
	\vspace{0.3cm}
	
	\begin{minipage}{0.32\textwidth}
		\centering
		\includegraphics[width=\linewidth]{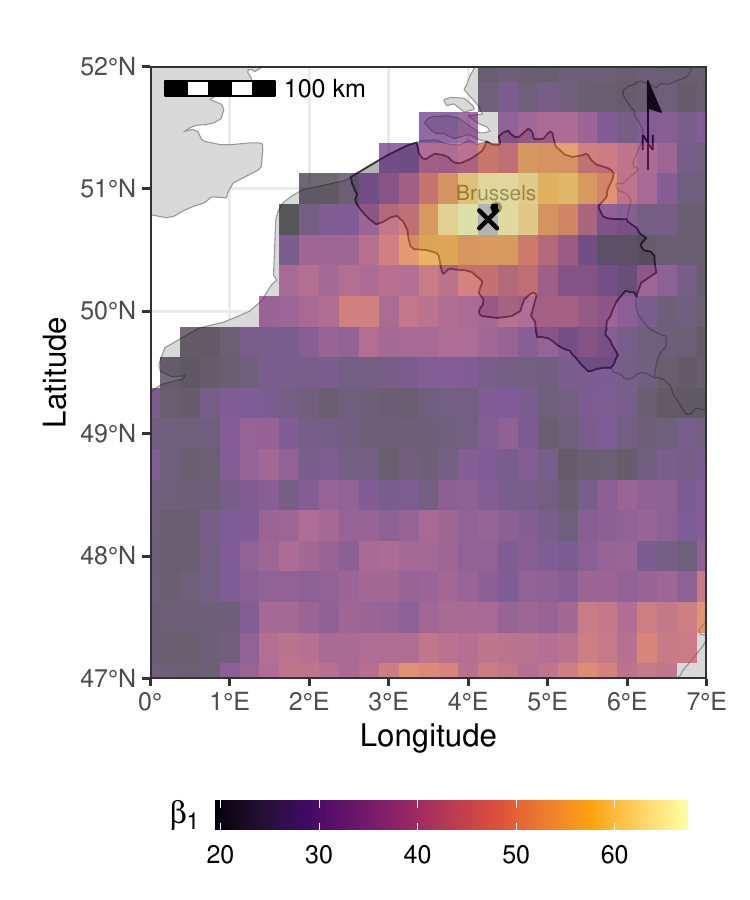}
	\end{minipage}\hfill
	\begin{minipage}{0.32\textwidth}
		\centering
		\includegraphics[width=\linewidth]{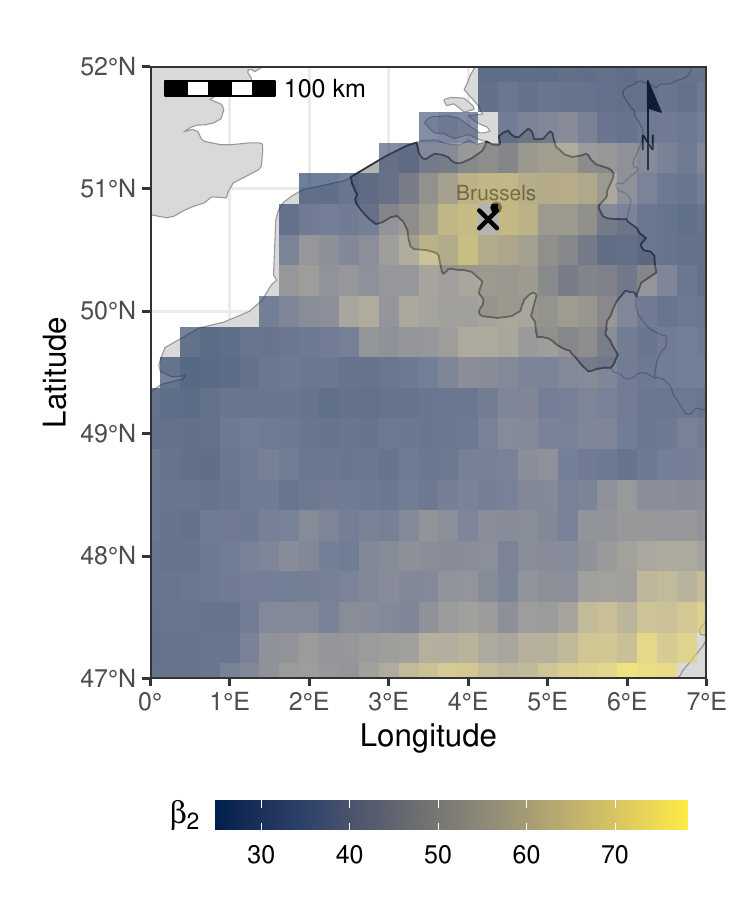}
	\end{minipage}\hfill
	\begin{minipage}{0.32\textwidth}
		\centering
		\includegraphics[width=\linewidth]{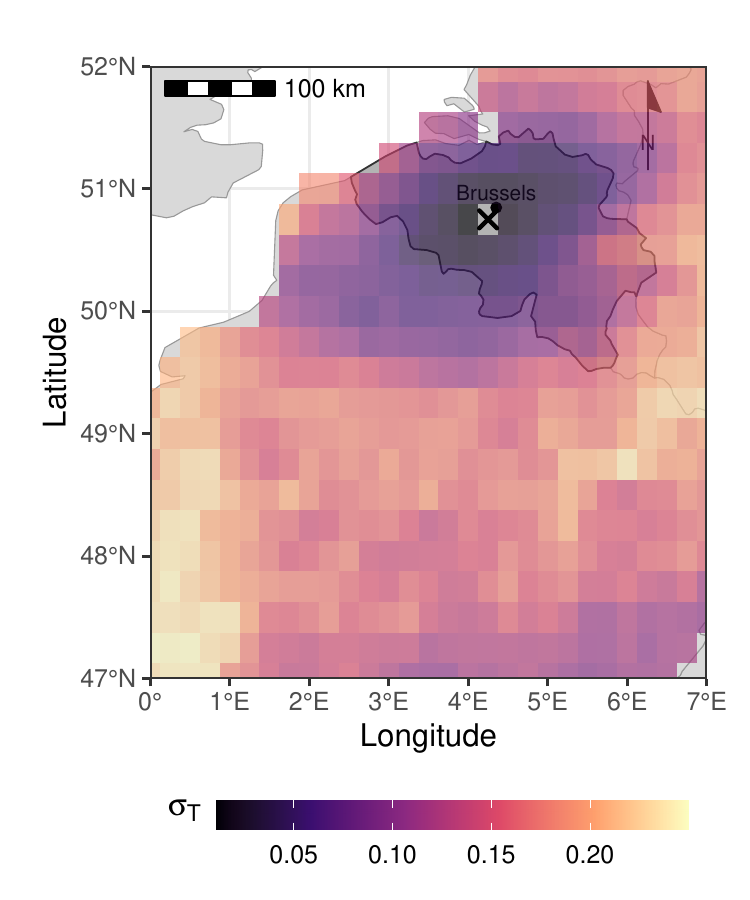}
	\end{minipage}
	
	\vspace{0.3cm}
	
	\begin{minipage}{0.32\textwidth}
		\centering
		\includegraphics[width=\linewidth]{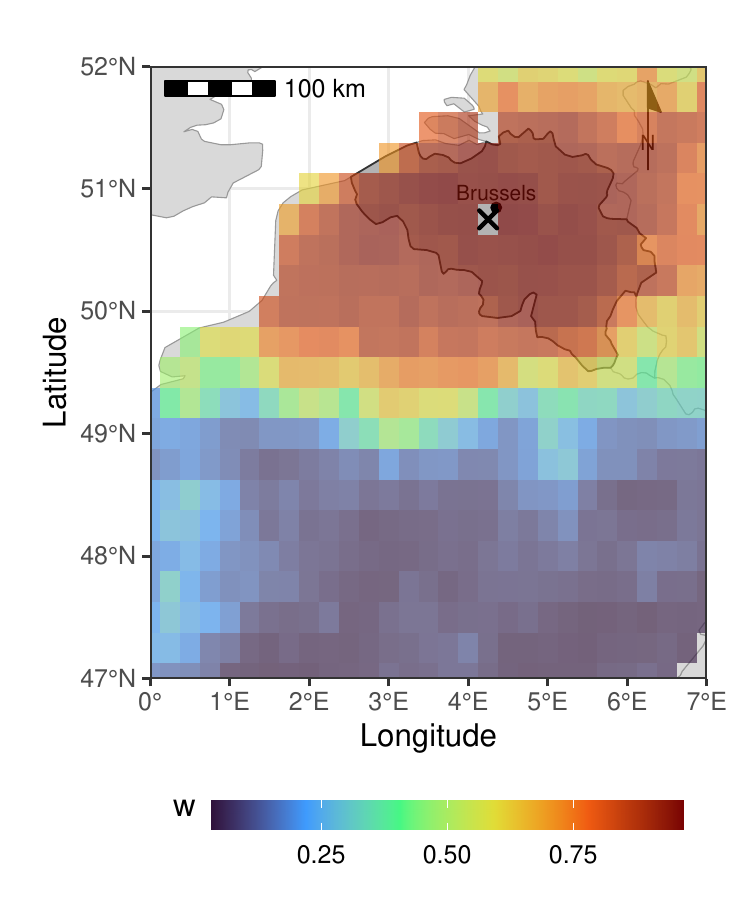}
	\end{minipage}
	
	\caption{Spatial fields of estimated dependence parameters of the sBGP model (Definition~\ref{def:biv_model}) for bivariate precipitation extremes between Brussels (reference location, black cross) and each grid point, obtained using the non-penalised NBE.}
	\label{fig:spatial_pattern_param_non_pen}
\end{figure}

\begin{table}[htbp]
\centering
\begin{tabular}{lcc}
\toprule
 & Brussels--Nivelles & Ostend--Spa \\
\midrule
$\eta$      & 0.77 (0.67, 0.85) & 0.92 (0.84, 0.94) \\
$\xi_1$     & 0.09 (0.06, 0.14) & 0.08 (0.07, 0.11) \\
$\xi_2$     & 0.08 (0.06, 0.13) & 0.08 (0.07, 0.11) \\
$\beta_1$   & 47.11 (28.19, 65.55) & 66.45 (49.17, 74.91) \\
$\beta_2$   & 60.46 (35.26, 74.40) & 71.79 (52.62, 76.53) \\
$\sigma_T$  & 0.02 (0.02, 0.04) & 0.07 (0.06, 0.10) \\
$w$         & 0.94 (0.89, 0.97) & 0.10 (0.04, 0.29) \\
\bottomrule
\end{tabular}
\caption{Parameter estimates (with 95\% confidence intervals) for the sBGP model (Definition~\ref{def:biv_model}) fitted to the two representative pairs at the 70th-percentile threshold using the non-penalised NBE.}
\label{tab:rep_pairs}
\end{table}

\begin{figure}[htbp]
  \centering
  \includegraphics[width=0.95\textwidth]{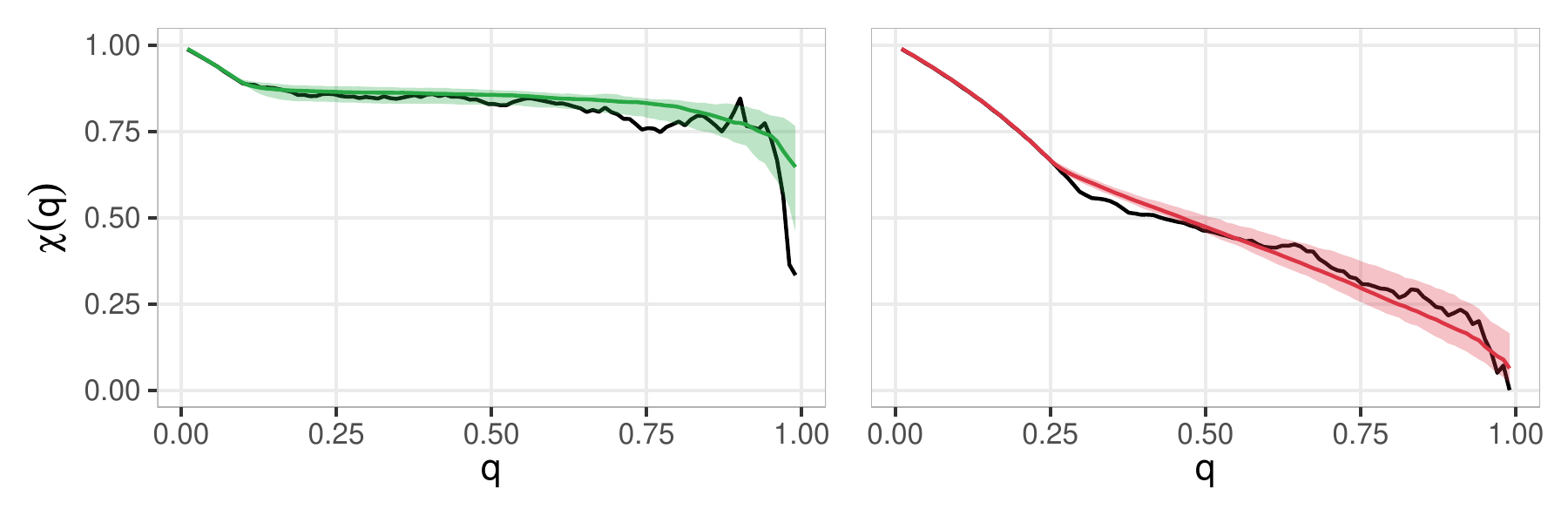}
  \caption{Estimated $\chi_{\tilde{\mathcal{Z}}_{0.7}}(q)$ curves for Brussels–Nivelles and Ostend–Spa using the non-penalised NBE, where $\chi(q)$ is estimated using~\eqref{eq:chi_esti}. Quantiles $q$ are defined relative to the exceedance subset $\tilde{\mathcal{Z}}_{0.7}$ (i.e., within the exceedance region, $u_j = F_j^{-1}(0.7)$). The black line shows the empirical $\chi_{\tilde{\mathcal{Z}}_{0.7}}(q)$ from observed exceedances; the green and red lines are the model estimates; the shaded area denotes the 95\% bootstrap confidence region.}
  \label{fig:p_chi}
\end{figure}

\begin{figure}[htbp]
  \centering
  \includegraphics[width=\textwidth]{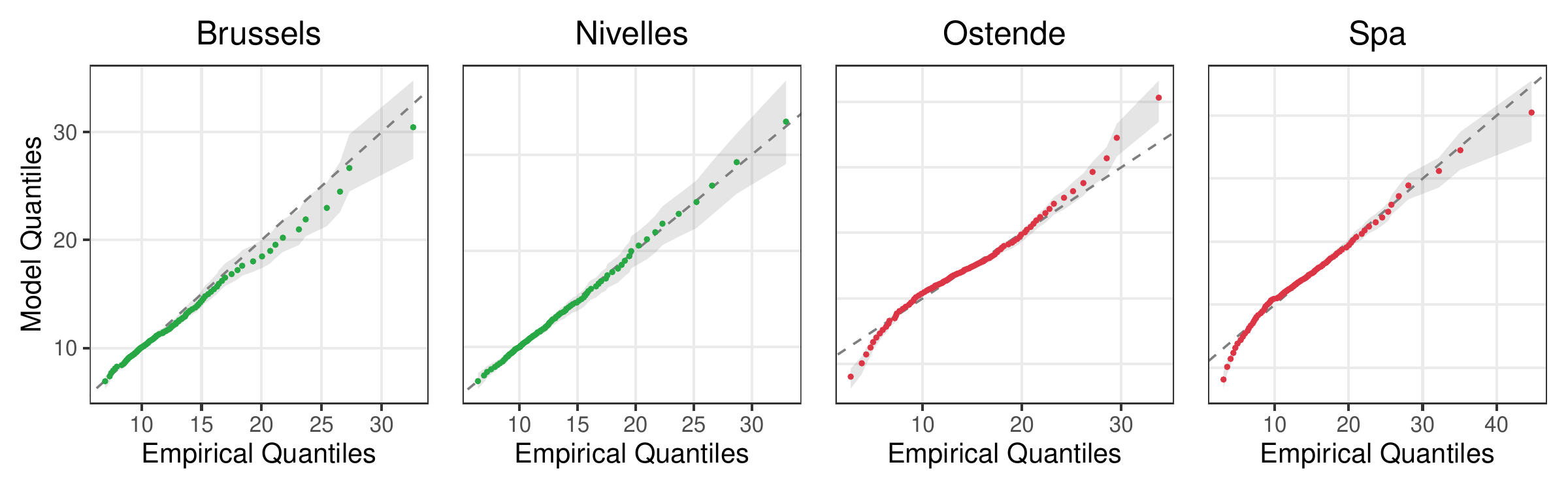}
  \caption{QQ-plots comparing empirical marginal exceedances to fitted marginal distributions for Brussels–Nivelles and Ostend–Spa under the sBGP model (Definition~\ref{def:biv_model}) using the non-penalised NBE. Each panel corresponds to one margin of the respective pair. All plots are based on exceedances above the marginal threshold ($u_j = F_j^{-1}(0.7)$), with dashed vertical lines marking the threshold values. Theoretical quantiles are shown on the original data scale, obtained by adding the threshold to the modeled exceedances.}
  \label{fig:qq_marginals_main}
\end{figure}

\renewcommand\refname{REFERENCES} 
\bibliographystyle{chicago} 
\bibliography{libAsympIndGPD}

\end{document}